\newcommand{\snote}[1]{\textcolor{blue}{ {\textbf{(Sev: }#1\textbf{) }}}}
\mathchardef\mhyphen="2D 
\newcommand\newmathabbrev[2]{\newcommand{#1}{\ensuremath{#2}\xspace}}
\newcommand\cfont\mathsf
\newmathabbrev\p{\cfont{P}}
\newmathabbrev{\N}{\mathbb N}
\newmathabbrev\NP{\cfont{NP}}
\newmathabbrev\QPH{\cfont{QPH}}
\newmathabbrev\QPHpure{\cfont{pureQPH}}
\newmathabbrev\QCPH{\cfont{QCPH}}
\newmathabbrev\QEPH{\cfont{QEPH}}
\newcommand{\QRGone}{\mathsf{QRG(1)}}
\newmathabbrev\DTIME{\cfont{DTIME}}
\newmathabbrev\tSAT{3\cfont{\mhyphen{}SAT}}
\newmathabbrev\MA{\cfont{MA}}
\newmathabbrev\AM{\cfont{AM}}
\newmathabbrev\NPDAG{\cfont{NP\mhyphen{}DAG}}
\newmathabbrev\QMADAG{\cfont{QMA\mhyphen{}DAG}}
\newmathabbrev\yes{\mathrm{yes}}
\newmathabbrev\no{\mathrm{no}}
\newmathabbrev\US{\cfont{US}}
\newmathabbrev\FP{\cfont{FP}}
\newmathabbrev\PP{\cfont{PP}}
\newmathabbrev\CeP{\cfont{C_=P}}
\newmathabbrev\coCeP{\cfont{coC_=P}}
\newmathabbrev\PH{\cfont{PH}}
\newmathabbrev\SAT{\cfont{SAT}}
\newmathabbrev\SPP{\cfont{SPP}}
\newmathabbrev\GapP{\cfont{GapP}}
\newmathabbrev\BQP{\cfont{BQP}}
\newmathabbrev\QP{\cfont{QP}}
\newmathabbrev\StoqMA{\cfont{StoqMA}}
\newmathabbrev\coNP{\cfont{coNP}}
\newmathabbrev\AzPP{\cfont{A_0PP}}
\newmathabbrev\QMA{\cfont{QMA}}
\newmathabbrev\coQMA{\cfont{coQMA}}
\newmathabbrev\BPP{\cfont{BPP}}
\newmathabbrev\QCMA{\cfont{QCMA}}
\newmathabbrev\pNPlog{\p^{\NP[\log]}}
\newmathabbrev\pNP{\p^{\NP}}
\newmathabbrev\pNPtwo{\p^{\NP[2]}}
\newmathabbrev\pNPone{\p^{\NP[1]}}
\newmathabbrev\pParSAT{\p^{||\SAT}}
\newmathabbrev\pQMApar{\p^{||\QMA}}
\newmathabbrev\pCpar{\p^{||\C}}
\newmathabbrev\pStoqMApar{\p^{||\StoqMA}}
\newmathabbrev\pQMAlog{\p^{\QMA[\log]}}
\newmathabbrev\pClog{\p^{\textup{C}[\log]}}
\newmathabbrev\pC{\p^{\textup{C}}}
\newmathabbrev\QMASPACE{\cfont{QMASPACE}}
\newmathabbrev\pQMAtlog{\p^{\QMA(2)[\log]}}
\newmathabbrev\pStoqMAlog{\p^{\StoqMA[\log]}}
\newmathabbrev\pQMApt{\p^{\Vert\QMA(2)}}
\newmathabbrev\pQMA{\p^{\QMA}}
\newmathabbrev\SharpP{\cfont{\#P}}
\newmathabbrev\pSharP{\p^{\SharpP[1]}}
\newmathabbrev\PromisePP{\cfont{PromisePP}}
\newmathabbrev\lett{\le_\mathrm{tt}}
\newmathabbrev\YES{\mathsf{YES}}
\newmathabbrev\NO{\mathsf{NO}}
\newmathabbrev\PSPACE{\cfont{PSPACE}}
\newmathabbrev\IP{\cfont{IP}}
\newmathabbrev\POLY{\cfont{POLY}}
\newmathabbrev\DAG{\cfont{DAG}}
\newmathabbrev\StoqMADAG{\StoqMA\mhyphen\cfont{DAG}}
\newmathabbrev\CDAG{C\mhyphen\cfont{DAG}}
\newmathabbrev\CDAGf{C\mhyphen\cfont{DAG}_f}
\newmathabbrev\CDAGs{C\mhyphen\cfont{DAG}_s}
\newmathabbrev\CDAGd{C\mhyphen\cfont{DAG}_{d}}
\newmathabbrev\CDAGo{C\mhyphen\cfont{DAG}_1}
\newmathabbrev\LOGS{\cfont{LOGS}}
\newmathabbrev\TAUT{\cfont{TAUTOLOGY}}
\newmathabbrev\SBQP{\cfont{SBQP}}
\newmathabbrev\Fc{F_\coNP}
\newmathabbrev\Fa{F_\AzPP}
\newmathabbrev\GSCON{\cfont{GSCON}}
\newmathabbrev\GSCONexp{\GSCON_\cfont{exp}}
\newmathabbrev\QMAexp{\QMA_\cfont{exp}}
\newmathabbrev\UQMA{\cfont{UQMA}}
\newmathabbrev\R{\mathbb R}
\newmathabbrev\Trees{\cfont{TREES}}
\newmathabbrev\apxsim{\cfont{APX\mhyphen{}SIM}}
\newmathabbrev\AWPP{\cfont{AWPP}}
\newmathabbrev\X{\mathcal{X}}
\newmathabbrev\Y{\mathcal{Y}}
\newmathabbrev\Z{\mathcal{Z}}
\newcommand{\ZPP}{\mathsf{ZPP}}
\newmathabbrev\ZZ{\mathbb{Z}}
\newmathabbrev\Hprop{H_\mathrm{prop}}
\newmathabbrev\Hin{H_\mathrm{in}}
\newmathabbrev\Hout{H_\mathrm{out}}
\newmathabbrev\Hstab{H_\mathrm{stab}}
\newmathabbrev\Lext{\L_\mathrm{ext}}
\newmathabbrev\BTWNP{\cfont{BTW}(\NP)}
\newmathabbrev\BSN{\cfont{BSN}}
\newmathabbrev\SN{\cfont{SN}}
\newmathabbrev\BD{\cfont{BD}}
\newmathabbrev\HYPERTREE{\cfont{NP\mhyphen{}HYPERTREE}}
\newmathabbrev\Hext{H_\mathrm{ext}}
\newmathabbrev\Hpropt{\tilde{H}_\mathrm{prop}}
\newmathabbrev\Hint{\tilde{H}_\mathrm{in}}
\newmathabbrev\Houtt{\tilde H_\mathrm{out}}
\newmathabbrev\EXP{\cfont{EXP}}
\newmathabbrev\A{\mathcal{A}}
\newmathabbrev\U{\mathcal{U}}
\renewcommand\L{\mathcal{L}}
\newmathabbrev\DAGSSAT{\DAGS(\SAT)}
\newmathabbrev\DAGS{\mathrm{DAGS}}
\newmathabbrev\DAGSNP{\DAGS(\NP)}
\newmathabbrev\AND{\cfont{AND}}
\newmathabbrev\STCONN{{S,T}\cfont{\mhyphen{}CONN}}
\newmathabbrev\CNF{\cfont{CNF}}
\newmathabbrev\NEXP{\cfont{NEXP}}
\newmathabbrev\NPSPACE{\cfont{NPSPACE}}
\newmathabbrev\QCMASPACE{\cfont{QCMASPACE}}
\newmathabbrev\BQPSPACE{\cfont{BQPSPACE}}
\newmathabbrev{\PCP}{\cfont{PCP}}
\newmathabbrev\BQUPSPACE{\cfont{BQ_UPSPACE}}
\newmathabbrev\QMAt{\QMA(2)}
\newmathabbrev\QMAtSEP{\QMA^{\mathsf{SEP}}(2)}
\newmathabbrev\QMAtexp{\QMAt_{\exp}}
\newmathabbrev\MIP{\cfont{MIP}}
\newmathabbrev\MIPt{\MIP(2)}
\newmathabbrev\BellQMA{\cfont{BellQMA}}
\newmathabbrev\BellQMAt{\BellQMA(2)}
\newmathabbrev\BellQMAtexp{\BellQMAt_{\exp}}
\protected\def\verythinspace{%
  \ifmmode
    \mskip0.5\thinmuskip
  \else
    \ifhmode
      \kern0.08334em
    \fi
  \fi
}
\newcommand{\C}{\mathbb C}
\newcommand{\be}{\begin{equation}}
\newcommand{\ee}{\end{equation}}
\newcommand{\APT}{\mathrm{APT}}
\renewcommand{\epsilon}{\varepsilon}
\DeclareMathOperator*{\argmax}{arg\,max}
\newcommand{\set}[1]{{\left\{#1\right\}}}    
\DeclareMathOperator{\poly}{poly}
\DeclarePairedDelimiter\bra{\langle}{\rvert}
\DeclarePairedDelimiter\ket{\lvert}{\rangle}
\DeclarePairedDelimiter\abs{\lvert}{\rvert}
\DeclarePairedDelimiter\norm{\lVert}{\rVert}
\DeclarePairedDelimiterX\braket[2]{\langle}{\rangle}{#1 \delimsize\vert #2}
\DeclarePairedDelimiterX\ketbra[2]{\lvert}{\rvert}{#1 \delimsize\rangle\delimsize\langle #2}
\newcommand{\ketbrab}[1]{\ketbra{#1}{#1}}
\setlist[itemize]{noitemsep, topsep=0pt}
\setlist[enumerate]{noitemsep, topsep=0pt}
\declaretheorem[numberwithin=section]{theorem}
\declaretheorem[sibling=theorem]{observation}
\declaretheorem[sibling=theorem]{corollary}
\declaretheorem[sibling=theorem]{lemma}
\declaretheorem[sibling=theorem,style=definition]{definition}
\crefname{observation}{observation}{observations}
\Crefname{observation}{Observation}{Observations}
\newcommand{\ayes}{A_{\textup{yes}}} 
\newcommand{\ano}{A_{\textup{no}}} 
\newcommand{\subalign}[1]{%
  \vcenter{%
    \Let@ \restore@math@cr \default@tag
    \baselineskip\fontdimen10 \scriptfont\tw@
    \advance\baselineskip\fontdimen12 \scriptfont\tw@
    \lineskip\thr@@\fontdimen8 \scriptfont\thr@@
    \lineskiplimit\lineskip
    \ialign{\hfil$\m@th\scriptstyle##$&$\m@th\scriptstyle{}##$\hfil\crcr
      #1\crcr
    }%
  }%
}
\NewDocumentCommand{\LeftComment}{s m}{%
  \Statex \IfBooleanF{#1}{\hspace*{\ALG@thistlm}}\(\triangleright\) #2}
\def\moverlay{\mathpalette\mov@rlay}
\def\mov@rlay#1#2{\leavevmode\vtop{%
   \baselineskip\z@skip \lineskiplimit-\maxdimen
   \ialign{\hfil$\m@th#1##$\hfil\cr#2\crcr}}}
\newcommand{\charfusion}[3][\mathord]{
    #1{\ifx#1\mathop\vphantom{#2}\fi
        \mathpalette\mov@rlay{#2\cr#3}
      }
    \ifx#1\mathop\expandafter\displaylimits\fi}
\algnewcommand{\LineComment}[1]{\State \(\triangleright\) #1}
\newcommand{\Lyes}{{L}_{\mathrm{yes}}}
\newcommand{\Lno}{{L}_{\mathrm{no}}}
\newcommand{\Linv}{{L}_{\mathrm{inv}}}
\newcommand{\Sigmat}{\mathsf{\Sigma}_2^p}
\newcommand{\Pit}{\mathsf{\Pi}_2^p}
\newcommand{\QCSigma}{\mathsf{QC\Sigma}}
\newcommand{\QCPi}{\mathsf{QC\Pi}}
\newcommand{\QSigma}{\mathsf{Q\Sigma}}
\newcommand{\QPi}{\mathsf{Q\Pi}}
\newcommand{\QSigmapure}{\mathsf{pureQ\Sigma}}
\newcommand{\QSigmapureSEP}{\mathsf{pureQ\Sigma}^{\mathsf{SEP}}}
\newcommand{\QPipure}{\mathsf{pureQ\Pi}}
\newcommand{\QPiSEP}{\mathsf{Q\Pi}^{\mathsf{SEP}}}
\newcommand{\QPipureSEP}{\mathsf{pureQ\Pi}^{\mathsf{SEP}}}
\renewcommand{\P}{\mathsf{P}}
\def\complex{\mathbb{C}}
\def\natural{\mathbb{N}}
\def\({\left(}
\def\){\right)}
\def\X{\mathcal{X}}
\def\Y{\mathcal{Y}}
\def\Z{\mathcal{Z}}
\def\yes{\text{yes}}
\def\no{\text{no}}
\newcommand{\class}[1]{\textup{#1}}
\newcommand{\PrQCMA}{\class{Precise}\text{-}\QCMA}
\newcommand{\mpoly}{\textup{mpoly}}
\newcommand{\Bpoly}{\class{BQP$_{/\mpoly}$}}
\newcommand{\pL}{p_L}
\newcommand{\VLn}{V_{n}}
\newcommand{\VLx}{V_{|x|}}
\newcommand{\VLxuq}{V^{\mathrm{uq}}_{|x|}}
\newcommand{\brackets}[1]{\ensuremath{\left( #1 \right)}}
\newcommand{\Brackets}[1]{\ensuremath{\left[ #1 \right]}}
\newcommand{\bit}{\left\{ 0, 1 \right\}}
\newcommand{\qcma}{\mathsf{QCMA}}
\newcommand{\uqcma}{\mathsf{UQCMA}}
\newcommand{\UQCMA}{\mathsf{UQCMA}}
\newcommand{\tqcmapp}{\mathsf{TQCMAPP}}
\newcommand{\uqcmapp}{\mathsf{UQCMAPP}}
\newcommand{\TQCMAPP}{\tqcmapp}
\newcommand{\UQCMAPP}{\uqcmapp}
\newcommand{\tqcmappyes}{\TQCMAPP_{\mathrm{yes}}}
\newcommand{\uqcmappyes}{\UQCMAPP_{\mathrm{yes}}}
\newcommand{\tqcmappno}{\TQCMAPP_{\mathrm{no}}}
\newcommand{\uqcmappno}{\UQCMAPP_{\mathrm{no}}}
\newcommand{\etal}{{et al.}}
\newcommand{\trdist}[1]{\norm{#1}_{\mathrm{tr}}}
\newcommand{\approxbv}{\mathsf{ApproxBV}}
\newcommand{\QCAM}{\mathsf{QCAM}}
\newcommand{\sptwo}{\mathsf{S}^p_2}
\newcommand{\osptwo}{\mathsf{OS}^p_2}
\newcommand{\qcsptwo}{\mathsf{QCS}^p_2}
\title{Quantum Polynomial Hierarchies: Karp-Lipton, error reduction, and lower bounds}
\date{}
 \author{Avantika Agarwal\footnote{David R. Cheriton School of Computer Science and Institute for Quantum Computing, University of Waterloo, Canada. Email: a243agarwal@uwaterloo.ca. } \and Sevag Gharibian\footnote{Department of Computer Science and Institute for Photonic Quantum Systems (PhoQS), Paderborn University, Germany. Email: \{sevag.gharibian, dorian.rudolph\}@upb.de.} \and Venkata Koppula\footnote{Department of Computer Science and Engineering, Indian Institute of Technology Delhi, India. Email: kvenkata@iitd.ac.in.} \and Dorian Rudolph\footnotemark[2]}
\begin{document}

\maketitle

\begin{abstract}
  The Polynomial-Time Hierarchy (\PH) is a staple of classical complexity theory, with applications spanning randomized computation to circuit lower bounds to ``quantum advantage'' analyses for near-term quantum computers. 
  Quantumly, however, despite the fact that at least \emph{four} definitions of quantum \PH exist, it has been challenging to prove analogues for these of even basic facts from \PH. 
  This work studies three quantum-verifier based generalizations of PH, two of which are from [Gharibian, Santha, Sikora, Sundaram, Yirka, 2022] and use classical strings (\QCPH) and quantum mixed states (\QPH) as proofs, and one of which is new to this work, utilizing quantum pure states (\QPHpure) as proofs.
  We first resolve several open problems from [GSSSY22], including a collapse theorem and a Karp-Lipton theorem for \QCPH. Then, for our new class \QPHpure, we show one-sided error reduction \QPHpure, as well as the first bounds relating these quantum variants of PH, namely $\QCPH\subseteq \QPHpure \subseteq \EXP^{\PP}$. 
\end{abstract}

\section{Introduction}\label{scn:intro}
Introduced by Stockmeyer in 1976~\cite{stockmeyerPolynomialtimeHierarchy1976}, the Polynomial-Time Hierarchy (PH) is one of the foundation stones of classical complexity theory.
Intuitively, the levels of $\PH$, denoted $\Sigma^p_i$ (respectively, $\Pi^p_i$) for $i\geq 1$, yield progressively harder, yet natural, ``steps up'' from \NP (respectively, \coNP). 
Specifically, a $\Sigma^p_i$ verifier is a deterministic poly-time Turing Machine $M$ which, given input $x\in\set{0,1}^n$, takes in $i$ proofs $y_i\in\set{0,1}^{\poly(n)}$, and satisfies:
\begin{align}
  \text{if $x$ is a YES input:}\qquad\exists y_1 \forall y_2 \exists y_3 \cdots Q_iy_i\text{ s.t. }M(x,y_1,\ldots, y_i)=1\\
  \text{if $x$ is a NO input:}\qquad\forall y_1 \exists y_2 \forall y_3 \cdots \overline{Q_i}y_i\text{ s.t. }M(x,y_1,\ldots, y_i)=0.
\end{align}
\noindent Above, $Q_i$ is $\forall$ ($\exists$) if $i$ is even (odd).
\PH has played a prominent (and often surprising!) role in capturing the complexity of various computing setups, including the power of BPP~\cite{sipserComplexityTheoreticApproach1983,lautemannBPPPolynomialHierarchy1983}, low-depth classical circuits~\cite{furstParityCircuitsPolynomialtime1984}, counting classes~\cite{todaPPHardPolynomialTime1991}, and even near-term quantum computers~\cite{bremnerClassicalSimulationCommuting2010,aaronsonComputationalComplexityLinear2011,boulandComplexityVerificationQuantum2019}.

In contrast, the role of \emph{quantum} analogues of \PH in {quantum} complexity theory remains embarrassingly unknown. So, where is the bottleneck? \emph{Defining} ``quantum PH'' is not the problem --- indeed, Yamakami~\cite{yamakamiQuantumNPQuantum2002}, Lockhart and Gonz\'{a}lez-Guill\'{e}n~\cite{j.lockhartQuantumStateIsomorphism2017}, and Gharibian, Santha, Sikora, Sundaram and Yirka~\cite{gharibianQuantumGeneralizationsPolynomial2022} all gave different definitions of quantum PH.
Instead, the difficulty lies in proving even basic properties of quantum PH, which often runs up against difficult phenomena lurking about open problems such as $\exists\boldsymbol{\cdot}\BPP \stackrel{?}{=} \MA$ and $\QMA\stackrel{?}{=}\QMAt$.

In this work, we resolve open questions regarding some fundamental properties of quantum PH. 
We focus on three definitions of quantum PH, chosen because they naturally generalize\footnote{\QCMA and \QMA are quantum generalizations of Merlin-Arthur (\MA), with a classical proof and quantum verifier and a quantum proof and quantum verifier, respectively.} \QCMA and \QMA.
The first two definitions are from~\cite{gharibianQuantumGeneralizationsPolynomial2022} (formal definitions in \Cref{scn:defs}), and the third is new to this work. The definitions all use a poly-time uniformly generated quantum verifier $V$, and are given as follows (for brevity, here we only state the YES case definitions):
\begin{itemize}
    \item \textbf{\QCPH}: $\exists y_1 \forall y_2 \exists y_3 \cdots Q_iy_i$ s.t. $V(x,y_1,\ldots, y_i)$ outputs $1$ with probability $\geq 2/3$.
    \item \textbf{\QPH}: $\exists \rho_1 \forall \rho_2 \exists \rho_3 \cdots Q_i\rho_i$ s.t. $V(x,\rho_1,\ldots, \rho_i)$ outputs $1$ with probability $\geq 2/3$.
    \item \textbf{\QPHpure}: $\exists \ket{\psi_1} \forall \ket{\psi_2} \exists \ket{\psi_3} \cdots Q_i\ket{\psi_i}$ s.t. $V(x,\ket{\psi_1},\ldots, \ket{\psi_i})$ outputs $1$ with probability $\geq 2/3$.
\end{itemize}
In words, \QCPH, \QPH, and \QPHpure utilize poly-size quantum verifiers taking in classical, mixed quantum, and pure quantum proofs, respectively. It is immediate from the definitions that $\QCMA\subseteq\QCPH$, $\QMA\subseteq\QPH$, and $\QMA\subseteq\QPHpure$. Beyond this, not much is clear. For example, a standard use of PH is via its collapse theorem --- if for any $i$, $\Sigma_i^p=\Pi_i^p$, then $\PH=\Sigma_i^p$. Do any of $\QCPH$, $\QPH$, or $\QPHpure$ satisfy such a collapse theorem? Does error reduction hold for $\QPH$ or $\QPHpure$? What is the relationship between $\QCPH$, $\QPH$, and $\QPHpure$? Note that standard convexity arguments (as used for e.g. \QMA) cannot be used to argue $\QPH=\QPHpure$, due to the presence of alternating quantifiers (which make the verification non-convex in the proofs). Can one recover celebrated results for these hierarchies analogous to the Karp-Lipton~\cite{karpConnectionsNonuniformUniform1980} Theorem for PH?

\paragraph{Our results.} We prove the following properties of quantum PH.\\
\vspace{-1mm}

\noindent{\emph{1. Collapse Theorem for \QCPH.}} We first resolve an open question of \cite{gharibianQuantumGeneralizationsPolynomial2022} by giving a collapse theorem for $\QCPH$.

\begin{restatable}{theorem}{theoremQCPHcollapse}\label{thm:collapse}
  If for any $k\geq 1$, $\QCSigma_k = \QCPi_k$, then $\QCPH = \QCSigma_k$.
\end{restatable}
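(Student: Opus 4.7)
My plan is to mimic the standard classical collapse argument for $\PH$. I would show by induction on $j\geq k$ that $\QCSigma_j=\QCPi_j=\QCSigma_k$, which immediately yields $\QCPH=\bigcup_{j\geq 1} \QCSigma_j=\QCSigma_k$. The base case $j=k$ is precisely the hypothesis (together with the trivial $\QCSigma_k\subseteq \QCSigma_k$). For the inductive step it suffices to establish $\QCSigma_{j+1}\subseteq \QCSigma_j$, since then $\QCPi_{j+1}\subseteq\QCPi_j=\QCSigma_j$ follows by complementation, and combined with the trivial reverse inclusions yields $\QCSigma_{j+1}=\QCPi_{j+1}=\QCSigma_k$.

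To show $\QCSigma_{j+1}\subseteq\QCSigma_j$, take $L\in \QCSigma_{j+1}$ verified by a poly-time quantum verifier $V$ with classical proofs $y_1,\ldots,y_{j+1}$. Define the promise problem $L'=(L'_\yes,L'_\no)$ on inputs $(x,y_1)$ by
\begin{align*}
L'_\yes&=\{(x,y_1)\colon \forall y_2\,\exists y_3\,\cdots\text{ s.t. }\Pr[V(x,y_1,\ldots,y_{j+1})=1]\geq 2/3\},\\
L'_\no&=\{(x,y_1)\colon \exists y_2\,\forall y_3\,\cdots\text{ s.t. }\Pr[V(x,y_1,\ldots,y_{j+1})=1]\leq 1/3\}.
\end{align*}
By construction $L'\in \QCPi_j$, so by the inductive hypothesis $L'\in \QCSigma_j$; let $V''$ be a $\QCSigma_j$ verifier for $L'$ on new proofs $z_1,\ldots,z_j$. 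Now I combine the outer $\exists y_1$ of $L$ with the leading $\exists z_1$ of the $\QCSigma_j$ protocol for $L'$ into a single existential proof $w_1:=(y_1,z_1)$. If $x\in \Lyes$, pick $y_1^*$ with $(x,y_1^*)\in L'_\yes$; completeness of $V''$ then supplies a $z_1$ so that $\exists w_1=(y_1^*,z_1)\,\forall z_2\,\exists z_3\,\cdots$ holds with accepting probability $\geq 2/3$. If $x\in \Lno$, then $(x,y_1)\in L'_\no$ for every $y_1$, so soundness of $V''$ applies for every $z_1$, giving $\forall w_1\,\exists z_2\,\forall z_3\,\cdots$ with accepting probability $\leq 1/3$. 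This is a $\QCSigma_j$-protocol for $L$, closing the induction.

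The main subtlety I foresee is a promise-class bookkeeping issue: for an intermediate $y_1$ in a YES input $x$ (one that is not the ``winning'' $y_1^*$), the pair $(x,y_1)$ may violate the promise of $L'$, in which case $V''$'s behavior on $(x,y_1)$ is unconstrained. This should nonetheless be benign because (i) for completeness on $\Lyes$ I only need the existence of a single good $w_1=(y_1^*,z_1)$, and (ii) for soundness on $\Lno$ every $y_1$ lies in $L'_\no$, so $V''$'s soundness applies uniformly in $z_1$. The argument thus goes through provided $\QCSigma_j$ and $\QCPi_j$ are interpreted as promise classes, which matches the setup of~\cite{gharibianQuantumGeneralizationsPolynomial2022}. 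The absence of a quantum counterpart to this quantifier-combining step is precisely what one would expect to complicate an analogous collapse theorem for $\QPH$ or $\QPHpure$, where two consecutive existentially-quantified quantum proofs cannot in general be merged into a single proof.
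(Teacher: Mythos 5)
Your proposal is correct and follows essentially the same route as the paper's proof of Lemma~\ref{l:collapse_i}: fix the first existential proof to form a $\QCPi_j$ promise problem $L'$, invoke the inductive hypothesis to get a $\QCSigma_j$ verifier for it, and merge the two adjacent existential quantifiers, with the promise-violation issue handled exactly as you describe (completeness needs only one good $y_1$, and in the NO case every $y_1$ lands in $L'_\no$). Nothing is missing.
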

\noindent This is in contrast to \QPH, for which a collapse theorem is believed difficult to show, as it would imply a subsequent collapse\footnote{$\QMAt$ is QMA with two proofs in tensor product. Since its introduction in 2001 by Kobayashi, Matsumoto, and Yamakami~\cite{kobayashiQuantumCertificateVerification2001}, its complexity remains stubbornly open. The current best bounds are $\QMA\subseteq\QMAt\subseteq\QSigma_3\subseteq \NEXP$, where the second and third containments are from\cite{gharibianQuantumGeneralizationsPolynomial2022}.} $\QMAt\subseteq \PSPACE$. Theorem \ref{thm:collapse} immediately resolves a second open question of \cite{gharibianQuantumGeneralizationsPolynomial2022}, recovering a ``precise''\footnote{\PrQCMA is \QCMA but with exponentially small completeness-soundness gap.} quantum version of the Karp-Lipton Theorem:

\begin{restatable}{corollary}{preciseKL}\label{cor:karp-lipton}
  If $\PrQCMA \subseteq \Bpoly$, then $\QCPH = \QCSigma_2$.
\end{restatable}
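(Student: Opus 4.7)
The plan is to reduce the corollary to Theorem \ref{thm:collapse}. By that collapse theorem, it suffices to show $\QCSigma_2 = \QCPi_2$, and in particular $\QCPi_2 \subseteq \QCSigma_2$, under the hypothesis $\PrQCMA \subseteq \Bpoly$. The argument will mirror the classical Karp-Lipton proof, with $\PrQCMA \subseteq \Bpoly$ playing the role of $\NP \subseteq \P/\poly$.

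Fix $L \in \QCPi_2$ with quantum verifier $V$ and polynomial-length classical proofs $y_1, y_2$. For each fixed $x$ and $y_1$, the inner existential question ``does there exist a $y_2$ with $V(x,y_1,y_2)$ accepting with probability at least $2/3$?'' is a $\QCMA$ decision problem. To produce an explicit $y_2$ from $(x,y_1)$, I would use a search-to-decision reduction in the standard self-reducibility style: fix the bits of $y_2$ one at a time, at each step asking whether the current prefix admits a valid completion. Since all $\poly(n)$ queries must be answered correctly in sequence and the verifier's gap is only inverse-polynomial, composing these queries naturally produces a decision problem whose completeness-soundness gap is inverse-exponentially small. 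This is precisely $\PrQCMA$, which explains why the hypothesis is phrased in terms of $\PrQCMA$ rather than $\QCMA$.

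By the hypothesis $\PrQCMA \subseteq \Bpoly$, the search procedure is implementable by a uniform family of polynomial-size quantum circuits with polynomial-size advice $\alpha_n$ depending only on $n = |x|$. I would then build a $\QCSigma_2$ protocol for $L$ by letting the first classical witness $w_1$ be a description of $\alpha_n$ (together with any needed polynomial-size classical bookkeeping) and letting the second classical witness $w_2 = y_1$ play the role of the adversarial $\QCPi_2$ quantifier. The new quantum verifier $V'$ runs the $\Bpoly$ circuit using $w_1$ as advice on input $(x, w_2)$ to produce a candidate $y_2$, and then runs $V(x, w_2, y_2)$. Completeness follows from taking $w_1 = \alpha_n$, which produces a valid $y_2$ for every $y_1$ whenever $x \in L$. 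Soundness follows because when $x \notin L$, there exists some $y_1$ with no accepting $y_2$, so the adversarial choice $w_2 = y_1$ forces $V$ to reject regardless of the candidate $y_2$ computed from $w_1$.

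The main obstacle is controlling the completeness-soundness gap across the composed construction: the $\Bpoly$ circuit has two-sided error, the search chains $|y_2| = \poly(n)$ queries, and the outer verifier $V$ has its own inverse-polynomial error. Standard amplification of $\Bpoly$ to inverse-exponential per-query error, combined with a union bound over the chained queries, should preserve a constant overall gap for $V'$. A secondary technical point is that the $\Bpoly$ advice $\alpha_n$ must be faithfully representable by a classical polynomial-length string $w_1$ that the quantum verifier $V'$ can consume; verifying this compatibility with the precise definition of $\Bpoly$ is the remaining bookkeeping.
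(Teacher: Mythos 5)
Your first step coincides with the paper's entire proof: Corollary~\ref{cor:karp-lipton} is obtained there by combining Theorem~\ref{thm:collapse} with the fact, already established in \cite{gharibianQuantumGeneralizationsPolynomial2022}, that $\PrQCMA \subseteq \Bpoly$ implies $\QCSigma_2 = \QCPi_2$; nothing more is proved at this point in the paper. Where you diverge is in reproving that cited ingredient from scratch. Your skeleton for it is the right one --- existentially guess the advice, universally quantify $y_1$, let the verifier synthesize $y_2$ from the circuit and then run $V$, and note that soundness is free because for $x\in\Lno$ the adversary picks the bad $y_1$ no matter what circuit was sent.

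The gap is in the search-to-decision step and in your account of why $\PrQCMA$ is the right hypothesis. The bit-by-bit prefix search you describe (``does the current prefix admit a valid completion?'', posed with the verifier's original completeness/soundness thresholds) fails not because of error accumulation but because of promise violation: a prefix whose best completion has acceptance probability strictly between $s$ and $c$ lies outside the promise, the $\Bpoly$ circuit's answer on it is arbitrary, and following such a branch can strand the search or terminate at a $y_2$ accepted with probability only barely above $s$, destroying completeness. This is exactly the obstruction the paper flags in its discussion of Theorem~\ref{thm:qckl} (``this search-to-decision reduction does not work in the quantum-classical setting since we are working with promise problems instead of languages''), and it is why the genuine $\QCMA$ version there needs Valiant--Vazirani isolation plus the Bernstein--Vazirani-based reduction of \cite{iraniQuantumSearchToDecisionReductions2022}. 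Under the stronger hypothesis $\PrQCMA\subseteq\Bpoly$, the repair is to replace your queries by threshold queries of the form ``is the maximum acceptance probability over completions of $p$ at least $t$, or at most $t-2^{-q}$?''; these are valid $\PrQCMA$ instances, each search step then degrades the maintained threshold by only $2^{-q}$, and the total loss $\poly(n)\cdot 2^{-q}$ is negligible. Your stated reason --- that chaining $\poly(n)$ inverse-polynomial-gap queries ``produces'' an inverse-exponential gap --- would, taken at face value, suggest that inverse-polynomial per-query precision, i.e.\ plain $\QCMA\subseteq\Bpoly$, already suffices for the naive search, which is precisely the harder statement the paper cannot obtain this way. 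With the threshold reformulation supplied, your route does yield the corollary.
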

\noindent Actually, we will do better in this work --- we next leverage \Cref{thm:collapse} and other techniques to obtain a \emph{genuine} (i.e. non-precise) analogue of Karp-Lipton.\\
\vspace{-1mm}

\noindent{\emph{2. Quantum-Classical Karp-Lipton Theorem for \QCPH.}} The celebrated Karp-Lipton theorem~\cite{karpConnectionsNonuniformUniform1980} states that if SAT can be solved by polynomial-size circuits, then PH collapses to $\Sigma_2^p$. Here, we show a quantum analogue for \QCPH:

\begin{restatable}{theorem}{QCKL}\textup{(Karp-Lipton for \QCPH)}\label{thm:qckl}
  If $\QCMA \subseteq \Bpoly$, then $\QCPH=\QCSigma_2=\QCPi_2$.
\end{restatable}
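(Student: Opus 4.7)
By the Collapse Theorem (\Cref{thm:collapse}), to prove $\QCPH = \QCSigma_2 = \QCPi_2$ it suffices to establish $\QCPi_2 \subseteq \QCSigma_2$; the reverse containment then follows from complementation duality of these two classes. My plan is to adapt the classical Karp--Lipton argument: replace the inner existential quantifier in a $\QCPi_2$ predicate by an existentially quantified $\Bpoly$-advice string, and use that advice inside a $\BQP$ verifier to extract the needed witness via a bit-by-bit search-to-decision reduction.

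Concretely, fix $L\in\QCPi_2$ with verifier $V$, whose YES condition on input $x$ is $\forall y_1\,\exists y_2\,\Pr[V(x,y_1,y_2)=1]\geq 2/3$. I would package the inner ``$\exists y_2$'' step into a single $\QCMA$ promise language $L'$ consisting of triples $(x,y_1,p)$ such that some $y_2$ with prefix $p$ makes $V(x,y_1,y_2)$ accept with probability at least $2/3$ (with soundness condition that every such $y_2$ is rejected with probability at least $2/3$). Applying the hypothesis $\QCMA\subseteq\Bpoly$ to $L'$ yields a uniform $\BQP$ algorithm $B$ and a polynomial-size classical advice sequence $\{a_n\}$ deciding $L'$; after standard error reduction I may assume $B(\cdot,a_n)$ errs with probability at most $2^{-n}$ per promise-satisfying query, so that a union bound covers the $\poly(n)$ queries made below.

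The resulting $\QCSigma_2$ protocol for $L$ is then: the existential proof guesses advice $a$, the universal proof gives $y_1$, and a $\BQP$ verifier $V^{KL}$ runs the standard prefix search using $B(\cdot,a)$, extending the current prefix $p$ one bit at a time by querying $B$ on $(x,y_1,p\concat 0)$ and $(x,y_1,p\concat 1)$ (defaulting arbitrarily if neither extends), and finally running $V(x,y_1,y_2)$ on the extracted string. For completeness, supplying the honest advice in the YES case of $L$ ensures every queried $(x,y_1,p)$ along the extraction path is a genuine YES instance of $L'$, so $V^{KL}$ recovers a true witness and $V$ accepts. For soundness, fix any $a$ and pick the bad $y_1$ guaranteed by the NO case of $L$: then $V(x,y_1,y_2)$ rejects with probability at least $2/3$ for every $y_2$, so $V^{KL}$ rejects regardless of the string produced by the extraction.

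The main conceptual hurdle I anticipate is handling promise violations in $L'$, since $\Bpoly$ gives no guarantee on inputs outside the promise, and an adversarial $a$ could in principle steer the extraction along off-promise prefixes. My observation is that this danger is benign here: on the completeness side, every prefix encountered while extracting a witness for a true YES instance of ``$\exists y_2$'' is itself a YES instance of $L'$, so the oracle behaves as intended; on the soundness side, the argument depends only on $V$, not on $B$, since by choice of $y_1$ no string can cause $V(x,y_1,\cdot)$ to accept. The only routine quantitative step is amplifying $B$ and $V$ enough that the combined $\QCSigma_2$ verifier still enjoys a constant completeness--soundness gap, which is standard.
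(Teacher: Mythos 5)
Your overall architecture (reduce to $\QCPi_2\subseteq\QCSigma_2$, invoke \Cref{thm:collapse}, have the existential prover supply the $\Bpoly$ advice, and run a search-to-decision reduction inside the verifier) matches the paper, and your soundness argument is correct. But the completeness argument has a genuine gap, and it is exactly the obstacle the paper is built to avoid. Your key claim is that ``every prefix encountered while extracting a witness for a true YES instance of $\exists y_2$ is itself a YES instance of $L'$.'' This is false: which prefixes are \emph{encountered} depends on the answers $B$ gives, and those answers are unconstrained off the promise. Concretely, suppose every good witness $y_2$ begins with the bit $1$, so $(x,y_1,1)$ is a YES instance of $L'$, while $(x,y_1,0)$ is off-promise because some $y_2$ starting with $0$ is accepted with probability exactly $1/2$. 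The $\Bpoly$ guarantee says nothing about $B$ on $(x,y_1,0)$, so $B$ may answer YES there; your prefix search then has two YES answers and, depending on its tie-breaking rule, may descend into the subtree rooted at $0$, where every subsequent query is off-promise or a NO instance and the extraction returns a string on which $V$ accepts with probability only $1/2$. Completeness fails. Note this is not an adversarial-advice issue: even the \emph{honest} advice string gives no control over $B$'s behavior on off-promise queries, so the proof must work for arbitrary such behavior, and it does not. This is precisely why the classical Karp--Lipton prefix search, which relies on $\SAT$ being a total language, does not transfer to promise classes.

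The paper's proof closes this gap with two extra ingredients. First, it applies the randomized isolation reduction from $\tqcmapp$ to $\uqcmapp$ (\Cref{uqcma}, from~\cite{ABOBS22}) to the $\QCMA$ instance obtained by fixing $y_1$, so that with probability $\Omega(1/\ell^2)$ the resulting instance has a \emph{unique} witness $d_\phi$. Second, it runs the single-query approximate Bernstein--Vazirani search-to-decision reduction of~\cite{iraniQuantumSearchToDecisionReductions2022} (\Cref{approxbv}): for each $z$ it forms the instance $\phi_z$ whose acceptance is ANDed with $\braket{z}{d}\bmod 2$. The point of uniqueness is that every $\phi_z$ is automatically promise-satisfying --- it either retains $d_\phi$ as its unique witness (when $\braket{z}{d_\phi}=1$) or has no accepting witness at all (when $\braket{z}{d_\phi}=0$) --- so the hypothesis $\QCMA\subseteq\Bpoly$ (in fact only $\UQCMA\subseteq\Bpoly$) genuinely applies to every query the reduction makes. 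Your approach cannot be repaired by a cleverer tie-breaking or backtracking rule without exponential blow-up; you need some mechanism, such as witness isolation, that forces all oracle queries onto the promise.
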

\noindent Here, $\Bpoly$ is $\BQP$ with poly-size classical advice (\Cref{def:Bpoly}). In words, \Cref{thm:qckl} says $\QCMA$ cannot be solved by (even non-uniformly generated) poly-size quantum circuits, unless $\QCPH$ collapses to its second level.\\
\vspace{-1mm}

\noindent{\emph{3. Error reduction for \QPHpure.}} While error reduction for \QCPH follows from parallel repetition (due to its classical proofs), achieving it for \QPHpure is non-trivial for the same reason it is non-trivial for $\QMAt$ --- the tensor product structure between proofs is not necessarily preserved when postselecting on measurements across proof copies in the NO case. Here, we show \emph{one-sided} error reduction for \QPHpure (e.g. \emph{exponentially} small soundness):

\begin{restatable}{theorem}{theoremOneSided}\label{thm:onesided}
  For all $i>0$ and $c-s \geq 1/p(n)$ for some polynomial $p$,
  \begin{enumerate}
      \item For even $i>0$:
        \begin{enumerate}
          \item $\QSigmapure_i(c, s) \subseteq \QSigmapureSEP_i(1/np(n)^2, 1/e^n)$
          \item $\QPipure_i(c, s) \subseteq \QPipureSEP_i(1-1/e^n, 1-1/np(n)^2)$
        \end{enumerate}
      \item For odd $i>0$:
      \begin{enumerate}
        \item $\QSigmapure_i(c, s) \subseteq \QSigmapureSEP_i(1-1/e^n, 1-1/np(n)^2)$
        \item $\QPipure_i(c, s) \subseteq \QPipureSEP_i(1/np(n)^2, 1/e^n)$
      \end{enumerate}
  \end{enumerate}
  \end{restatable}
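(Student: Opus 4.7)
The plan is to adapt the one-sided error reduction framework of Harrow and Montanaro (originally developed for $\QMAt$) to the alternating-quantifier setting of $\QPHpure$. The two key ingredients are (i) requesting many copies of each pure proof and (ii) running the original verifier $V$ in parallel on these copies with an acceptance rule tailored to the innermost quantifier of the hierarchy. Under the SEP constraint of $\QPHpureSEP$, the copies are guaranteed to be in tensor-product form across the $k$ registers, so each parallel run of $V$ is an independent Bernoulli trial.

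Concretely, given $V$ deciding $\QSigmapure_i(c,s)$, I construct $V'$ that asks each of the $i$ provers for $k$ copies of their proof, for an appropriately chosen $k = \poly(n)$. Under SEP, the proof at level $j$ decomposes as $\ket{\psi_j}^{\otimes k}$, and running $V$ on the $r$-th copy across levels yields $k$ i.i.d.\ Bernoulli trials with success probability $p \geq c$ (YES) or $p \leq s$ (NO). The acceptance rule of $V'$ is chosen based on the innermost quantifier, which is $\forall$ for even-$i$ $\QSigma$ and odd-$i$ $\QPi$, and $\exists$ for odd-$i$ $\QSigma$ and even-$i$ $\QPi$: when the innermost quantifier is $\forall$, $V'$ uses a conjunctive (AND) rule so that acceptance is $p^k$ and both YES/NO acceptance probabilities are driven toward zero; when the innermost is $\exists$, $V'$ uses a disjunctive (OR) rule so that acceptance is $1-(1-p)^k$ and both are driven toward one. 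Calibrating $k$ using the gap assumption $c - s \geq 1/p(n)$ yields the bounds $1/np(n)^2$ and $1/e^n$ (or their complements) claimed in the theorem.

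The completeness and soundness analysis proceeds by induction on $i$. The base case $i = 1$ follows from standard parallel-repetition bounds, which applies since a single pure proof trivially satisfies SEP across its copies. For the inductive step, observe that a prover at level $j+1$, restricted by SEP to supply a product $\ket{\psi_{j+1}}^{\otimes k}$, effectively commits to a single pure $\ket{\psi_{j+1}}$ as a function of the previous proofs; since those previous proofs are themselves replicated $k$ times under SEP, this function depends only on a single copy of each, reducing the optimization at level $j+1$ to the analogous problem in the original verifier $V$. Bernoulli independence of the $k$ parallel runs then follows, and the AND/OR amplification delivers the claimed bounds at the top level.

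The main obstacle is verifying that an adversarial prover at each level cannot exploit the multiplicity of copies provided at earlier levels, e.g., by tailoring entanglement across the adversary's own registers to the structure of prior proofs. This reduces to showing that SEP-replicated pure strategies suffice to achieve the value of the alternating game at each level; a convexity / minimax-style argument should handle this, using the fact that the optimal pure strategy at each quantifier level is preserved under replication across the $k$ SEP-constrained copies. This is also where the pure-state restriction (versus the mixed-state $\QPH$) is essential, since replication of a mixed state could break the independence of the parallel runs and hence the Bernoulli amplification bounds.
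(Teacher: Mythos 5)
There is a genuine gap, and it sits at the heart of your argument. You write that ``under SEP, the proof at level $j$ decomposes as $\ket{\psi_j}^{\otimes k}$,'' treating the $\mathsf{SEP}$ superscript as a constraint forcing each prover's $k$ sub-registers into tensor-product form. That is not what $\QSigmapureSEP_i$/$\QPipureSEP_i$ mean: the superscript is a \emph{promise about the constructed verifier's acceptance measurement in the YES case} (separability across the $i$ proof cuts), i.e.\ a property your new verifier must be shown to satisfy --- it gives you no license to assume anything about the states the provers send. Within its single (now polynomially larger) proof register, a universally quantified prover may send an arbitrary pure state entangled across the $k$ ``copies,'' so your parallel runs of $V$ are not independent Bernoulli trials. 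This is exactly the obstacle that makes error reduction for $\QMAt$ nontrivial, and it is the entire content of the theorem; your closing remark that ``a convexity / minimax-style argument should handle this'' cannot be made to work as stated, since (as the paper emphasizes) the alternating quantifiers destroy the convexity that rescues the single-quantifier $\QMA$/$\QMAt$ setting.

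The paper's proof avoids asking adversarial provers for copies at all. Only the \emph{final} prover --- which is existentially quantified precisely in the favorable case --- supplies $m$ copies of all earlier proofs plus $m$ copies of its own; the verifier then flips a coin between (i) an \emph{asymmetric} product test (\Cref{l:apt}) checking the supplied copies against the single originals, and (ii) parallel repetition of $V$ on the copies (see \Cref{l:qpi}). In the favorable case the final prover is honest, the APT passes with certainty, and parallel repetition amplifies exponentially; in the unfavorable case the final prover is universally quantified, nothing can be assumed about its state, and the APT only forces it to be \emph{polynomially} close to product form --- which is exactly why the theorem is one-sided ($1/e^n$ on one parameter but only $1/np(n)^2$ on the other) rather than the two-sided AND/OR amplification your sketch implicitly promises. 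Without the product test, or some replacement that certifies product structure of the copied registers, your construction has no soundness guarantee against entangled cheating on those registers.
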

\noindent Above, $\QSigmapureSEP_i$ and $\QPipureSEP_i$ have the promise that in the YES case, the verifier's acceptance measurement is separable (see \Cref{sscn:errorOneSided}). We remark the proof of this uses a new \emph{asymmetric} version of the Harrow-Montanaro~\cite{harrowTestingProductStates2013a} Product Test, which may be of independent interest (see \Cref{l:apt}). The reason we are unable to recover exponentially small error simultaneously for both completeness and soundness is because our approach requires the final quantifier to be $\exists$.\\
\vspace{-1mm}

\noindent{\emph{4. Upper and lower bounds on $\QPHpure$.}} Having introduced $\QPHpure$ in this work, we next give bounds on its power. 

\begin{theorem}\label{thm:bounds}
  $\QCPH\subseteq \QPHpure \subseteq \EXP^{\PP}$.
\end{theorem}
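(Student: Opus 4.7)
The plan is to prove the two containments separately.

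For $\QCPH \subseteq \QPHpure$, I would show $\QCSigma_k \subseteq \QSigmapure_k$ (and dually $\QCPi_k \subseteq \QPipure_k$) at each level $k$. Given a $\QCSigma_k$ verifier $V$, construct a $\QSigmapure_k$ verifier $V'$ that asks each prover for a proof $\ket{\psi_i}$ on $t=\poly(n)$ registers, intended as $\ket{y_i}^{\otimes t}$. Before running $V$, the verifier $V'$ applies a ``classicality test'' to each proof: a Harrow--Montanaro product test across the $t$ registers combined with a computational-basis consistency check, which jointly pass with probability close to $1$ only if $\ket{\psi_i}$ is close (in trace distance) to $\ket{y_i}^{\otimes t}$ for some basis string $y_i$. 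If all tests pass, $V'$ measures one register per proof in the computational basis to obtain classical strings $(y_1, \ldots, y_k)$ and simulates $V(x, y_1, \ldots, y_k)$; otherwise $V'$ rejects. Completeness follows from honest provers sending $\ket{y_i^*}^{\otimes t}$ for classical $\QCPH$ witnesses $y_i^*$ (the honest $\exists$-prover at level $i$ extracts the needed classical strings $y_1, \ldots, y_{i-1}$ from the pure-state descriptions of the earlier proofs). Soundness follows since any proof surviving the test is approximately a basis state, reducing the game to the classical $\QCPH$ one, where the honest $\forall$-prover plays the optimal classical counter-witness.

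The main obstacle for this direction is ruling out adversarial entanglement across the $t$ registers: a GHZ-like state $\frac{1}{\sqrt{2^n}}\sum_y \ket{y}^{\otimes t}$ defeats a naive per-register consistency check, which is exactly why the Harrow--Montanaro product test is essential. Combined with the basis-consistency check, it rules out all forms of non-classical proofs with inverse-polynomial gap, which is boosted by first amplifying the $\QCPH$ promise gap before the simulation.

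For $\QPHpure \subseteq \EXP^{\PP}$, observe that the verifier's acceptance probability $p(\ket{\psi_1}, \ldots, \ket{\psi_k}) = \Tr\!\big[\Pi\, \ketbra{\psi_1}{\psi_1} \otimes \cdots \otimes \ketbra{\psi_k}{\psi_k}\big]$ is a multilinear polynomial in the proof amplitudes, and that on any classical description of the proofs (of exponential bit-length) its value is computable to inverse-exponential precision by a $\PP$ oracle via the standard Gap-$\P$ simulation of the quantum verifier. The decision problem for $L \in \QSigmapure_k$ then becomes deciding whether $\sup_{\ket{\psi_1}} \inf_{\ket{\psi_2}} \cdots p$ is $\geq c$ or $\leq s$ over products of unit spheres in $\mathbb{C}^{2^n}$. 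An $\EXP^{\PP}$ algorithm iterates in exponential time over an $\epsilon$-net of each $\mathbb{C}^{2^n}$ with $\epsilon = 1/\exp(n)$, computing $p$ via the $\PP$ oracle on each candidate; a precision analysis of the $k$-level nested sup-inf (exploiting the bounded degree and multilinearity of $p$) shows this precision suffices to resolve the $1/\poly$ promise gap. The hardest step is the complexity analysis---a naive $\epsilon$-net over $\mathbb{C}^{2^n}$ has doubly-exponential size---which can be avoided by recasting the alternating optimization as a polynomial-hierarchy-style alternating computation over exp-size proofs and then invoking an exp-scale analogue of Toda's theorem $\PH \subseteq \P^{\PP}$ to collapse the alternation into a single $\PP$ oracle call.
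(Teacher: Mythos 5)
Your second containment, $\QPHpure\subseteq\EXP^{\PP}$, is essentially the paper's argument once you make the pivot in your last sentence: the paper replaces each proof by an exponential-length classical description, obtains $\QSigmapure_i\subseteq\NEXP^{\NP^{i-1}}\subseteq\EXP^{\NP^{i}}$, and then applies the \emph{ordinary} Toda theorem to the $\NP$-tower sitting above the $\EXP$ base (no ``exponential-scale Toda'' is needed, because the exponential-time base machine can pad its oracle queries so that the tower above it is a plain $\NP$ tower). The $\epsilon$-net detour is unnecessary but harmless.

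The first containment has a genuine gap, and it is exactly the difficulty the paper's proof of \Cref{l:qcph_in_qphpure} is built to overcome. First and most directly: your verifier rejects whenever any proof fails the classicality test, but in the YES case the universally quantified provers are adversarial and unconstrained. Such a prover can send a state that is far from every $\ket{y}^{\otimes t}$ (say, a maximally entangled state across its $t$ registers), deliberately fail the test, and force rejection, so completeness collapses for every level $k\ge 2$. Any correct simulation must treat misbehaving proofs \emph{asymmetrically} --- accept when a universally quantified proof misbehaves, reject when an existentially quantified one does --- and since the universal/existential roles of the indices swap between the YES and NO cases, arranging this consistently is the heart of the problem; it is what the paper's accept-on-mismatch/reject-on-mismatch rule, anchored to copies supplied by the final existentially quantified proof and verified by the asymmetric product test, accomplishes. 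Second, your soundness claim that the product test plus a basis-consistency check certifies closeness to some $\ket{y_i}^{\otimes t}$ is false for a quantified prover: the Harrow--Montanaro test needs two \emph{genuine} copies of the state, which an adversarial prover need not supply, and the state $\frac{1}{\sqrt{2^{p(n)}}}\sum_{y}\ket{y}^{\otimes t}\otimes\ket{y}^{\otimes t}$ (entangled across the two ``copies'') passes every SWAP test and every consistency check with probability $1$ while measuring to a uniformly random string. Third, even granting that surviving proofs have the form $\sum_y\alpha_y\ket{y}^{\otimes t}$, measurement yields a \emph{distribution} over strings, and all later provers must commit to their proofs before that measurement occurs, so they cannot play the classical game's adaptive response to the realized string; by non-convexity of the alternating optimization this does not reduce to the classical $\QCPH$ value. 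Your proposal contains no mechanism forcing a superposed universal proof to effectively collapse to a single string (the paper uses the argmax string together with the bundled copies to detect, and asymmetrically punish, any deviation), and without one the reduction to the classical game in both your completeness and soundness arguments does not go through.
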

\noindent While the upper bound above is not difficult to show (\Cref{thm:exptoda}; this may be viewed as an ``exponential analogue'' of Toda's theorem), the lower bound is surprisingly subtle.
The naive strategy of replacing each proof $y_i$ of $\QCPH$ with pure state proof $\ket{\psi_i}$, which is then measured in the standard basis, does not work, as the measurement gives rise to mixed states. Mixed states, in turn, are difficult to handle in $\QPH$, as the latter is not a convex optimization due to alternating quantifiers. 
We remark that while $\QPH\subseteq\QPHpure$ follows easily via purification of proofs, we do \emph{not} know how to show the analogous lower bound $\QCPH\subseteq \QPH$ (our approach uses our asymmetric product test, which requires pure states).

\paragraph{Related work.} Yamakami\cite{yamakamiQuantumNPQuantum2002} gave the first definition of a quantum PH, which takes quantum inputs (in contrast, we use classical inputs). Gharibian and Kempe~\cite{gharibianHardnessApproximationQuantum2012} defined and obtained hardness of approximation results for $\QCSigma_2$, obtaining the first hardness of approximation results for a quantum complexity class. (See \cite{bittelOptimalDepthVariational2023} for a recent extension to QCMA-hardness of approximation results.) Lockhart and Gonz\'{a}lez-Guill\'{e}n\cite{j.lockhartQuantumStateIsomorphism2017} defined a class $\QCPH'$ similar to \QCPH, except using existential and universal \emph{operators}. Thus, in \cite{j.lockhartQuantumStateIsomorphism2017} $\QCSigma'_1=\exists\cdot \BQP$, which is not known to equal \QCMA (for the same reason $\exists\cdot \BPP\stackrel{?}{=}\MA$ remains open). In exchange for not capturing \QCMA, however, the benefit of $\QCPH'$ is that its properties are easier to prove than \QCPH. Gharibian, Santha, Sikora, Sundaram and Yirka~\cite{gharibianQuantumGeneralizationsPolynomial2022} defined $\QCPH$ and $\QPH$, and showed weaker variants of the Karp-Lipton theorem ($\PrQCMA \subseteq \Bpoly$ implies $\QCSigma_2 = \QCPi_2$) and Toda's theorem~\cite{todaPPHardPolynomialTime1991} ($\QCPH\subseteq \p^{\PP^{\PP}}$). They also showed $\QMAt\subseteq \QSigma_3\subseteq \NEXP$, giving the first class sitting between $\QMAt$ and $\NEXP$, and observed that $\QSigma_2=\QPi_2=\textup{QRG}(1)\subseteq\PSPACE$ (due to work of Jain and Watrous~\cite{jainParallelApproximationNoninteractive2009}). Finally, Aaronson, Ingram and Kretschmer~\cite{aaronsonAcrobaticsBQP2022} showed that relative to a random oracle, \PP is not in the ``\QMA hierarchy'', i.e. in $\QMA^{\QMA^{\iddots}}$. The relationship between this \QMA hierarchy and  any of \QCPH, \QPH, or \QPHpure remains open.

The Karp-Lipton theorem has been studied in the setting of quantum advice. Prior works by Aaronson and Drucker~\cite{AaronsonD14}, and Aaronson, Cojocaru, Gheorghiu, and Kashefi~\cite{AaronsonCGK17} studied the consequences of solving NP-complete problems using polynomial-sized quantum circuits with polynomial quantum advice.  

Finally, the Product Test was first introduced by Mintert, Ku\`{s} and Buchleitner~\cite{mintertConcurrenceMixedMultipartite2005}, and rigorously analyzed and strikingly leveraged by Harrow and Montanaro\cite{harrowTestingProductStates2013a} to show $\QMA(k)=\QMA(2)$ for polynomial $k$, as well as error reduction for $\QMAt$. (See also Soleimanifar and Wright~\cite{soleimanifarTestingMatrixProduct2022}.)

\paragraph{Concurrent Work.} We mention two concurrent works on quantum variants of the polynomial hierarchy. First, our collapse theorem for $\QCPH$ (Theorem \ref{thm:collapse} and Corollary \ref{cor:karp-lipton}) was proven concurrently by Falor, Ge, and Natarajan~\cite{FGN23}. Second, we upper bound $\QCPH$ by showing that for all $k$, $\QCPi_k\subseteq\QSigmapure_k\subseteq\QSigmapure_i \subseteq \NEXP^{\NP^{i-1}}$, implying $\QCPH \subseteq \QPHpure \subseteq \EXP^{\PP}$ (Theorem \ref{thm:bounds} and Theorem \ref{thm:exptoda}). Grewal and Yirka~\cite{GY23} show the stronger bound $\QCPH \subseteq \QPH$, at the expense of the minor caveat that their proof does not obtain level-wise containment for all $k$, but rather requires constant factor blowup in level. Beyond this, our papers diverge. We show a Quantum-Classical Karp-Lipton Theorem for QCPH and error reduction for $\QPHpure$. Grewal and Yirka~\cite{GY23} define a new quantum polynomial hierarchy called the \emph{entangled quantum polynomial hierarchy} ($\QEPH$), which allows entanglement across alternatively quantified quantum proofs. They show that $\QEPH$ collapses to its second level (even with polynomially many proofs), and is equal to $\QRGone$, the class of one round quantum-refereed games. They also define a generalization of QCPH, denoted DistributionQCPH, in which proofs are not strings but \emph{distributions} over strings. They show QCPH$=$DistributionQCPH.

\paragraph{Techniques.} We sketch our approach for each result mentioned above.\\

\vspace{-1mm}
\noindent \emph{1. Collapse Theorem for \QCPH.} Collapse theorems for PH are shown via inductive argument --- by fixing an arbitrary proof for the first quantifier of $\Sigma_i^p$, one obtains an instance of  $\Pi_{i-1}^p$. Reference ~\cite{gharibianQuantumGeneralizationsPolynomial2022} noted this approach does not obviously work for $\QCPH$, as fixing the first proof of $\QCSigma_i$ does \emph{not} necessarily yield a valid $\QCPi_{i-1}$ instance (i.e. the latter might not satisfy the desired promise gap). We bypass this obstacle by observing that even if most choices for existentially quantified proofs are problematic, there always exists \emph{at least one} ``good'' choice, for which the recursion works. Formally, we are implicitly using a \emph{promise} version of NP which is robustly\footnote{Formally, let $M$ be a deterministic machine with access to a promise oracle O. We say $M$ is robust~\cite{goldreichPromiseProblemsSurvey2006} if, regardless of how any invalid queries to $O$ (i.e. queries violating the promise gap of $O$) are answered, $M$ returns the same answer. One can define ``PromiseNP'' for non-deterministic $M$ with access to $O$ similarly: In the YES case, $M$ has at least one robust accepting branch, and in the NO case, all branches of $M$ are robust and rejecting. For clarity, we do not formally define and use PromiseNP in this work, but the viewpoint sketched here is equivalent to our approach for \Cref{thm:collapse}.} defined relative to any promise oracle.  \\
\vspace{-1mm}

\noindent{\emph{2. Quantum-Classical Karp-Lipton Theorem.}} The classical Karp-Lipton theorem crucially uses the search-to-decision reduction for SAT. Given a (non-uniform) circuit family that can decide a SAT instance, we can use the circuit family to find a witness for a SAT instance. However, this search-to-decision reduction does not work in the quantum-classical setting since we are working with promise problems instead of languages. As a result, we cannot replicate the classical proof in the quantum-classical setting. Instead, we first convert the $\QCMA$ problem (obtained by fixing the universally quantified proof) to a UniqueQCMA ($\UQCMA$) problem. For this, we use the quantum-classical analogue of Valiant-Vazirani's isolation lemma~\cite{valiantNPEasyDetecting1986} given by Aharonov, Ben-Or, Brand\~{a}o, and Sattath~\cite{ABOBS22}. We then use a single-query (quantum) search-to-decision reduction for $\UQCMA$ that was presented in a recent work by Irani, Natarajan, Nirkhe, Rao and Yuen~\cite{iraniQuantumSearchToDecisionReductions2022}.   

More formally, let $L = (\Lyes, \Lno, \Linv) \in \QCPi_2$, and let $V$ be the corresponding (quantum) verifier. Since the proofs are classical, we assume $V$ has small error. We show that $L \in \QCSigma_2$ by using the following (quantum) verifier $V'$: it takes as input an instance $x$, a quantum circuit $C$, a string $y_1$. The verifier $V'$ runs the circuit $C$ on input $(x,y_1)$ and receives a string $y_2$. It then accepts $x$ if $V$ accepts $(x, y_1, y_2)$. If $x\in \Lno$, then there exists a $y_1$ such that for all $y_2$, $V(x, y_1, y_2)$ accepts with negligible probability. Therefore, for any circuit $C$, there exists a $y_1$ such that $V'(x, C, y_1) = V(x, y_1, C(x, y_1))$ is $1$ with very low probability. 

For any $x \in \Lyes$, we have that for all $y_1$, the pair  $(x, y_1)$ is a YES-instance of a $\QCMA$ problem. Using the \cite{ABOBS22} isolation procedure, we obtain an instance $\phi_{(x, y_1)}$ such that with non-negligible probability $\phi_{(x,y_1)}$ has a unique witness. Next, using the assumption that $\QCMA \subseteq \Bpoly$, we get that there exists a circuit $\tilde{C}$ that can decide $\phi_{(x, y_1)}$. Finally, using the $\UQCMA$ search-to-decision procedure of \cite{iraniQuantumSearchToDecisionReductions2022}, we can use $C$ to find the unique witness for $\phi_{(x, y_1)}$. Let $C$ be the circuit that, on input, $(x, y_1)$, first performs the witness isolation from~\cite{ABOBS22}, followed by the $\UQCMA$ search-to-decision reduction from~\cite{iraniQuantumSearchToDecisionReductions2022}. Putting these together, we get that there exists a circuit $C$ that, for any $y_1$, finds a $y_2$ with non-negligible probability such that $V(x, y_1, y_2) = 1$. Therefore, there exists $C$ such that for any $y_1$, $V'(x, C, y_1) = 1$ with non-negligible probability. \\


\vspace{-1mm}

\noindent{\emph{3. Error reduction for \QPHpure.}} As with $\QMAt$, the challenge with error reduction via parallel repetition for $\QPHpure$ is the following: Given proof $\ket{\psi}_{A_1,B_1}\otimes \ket{\phi}_{A_2,B_2}$, postselecting on a joint measurement outcome on registers $\set{A_1,B_1}$ may entangle registers $\set{A_2,B_2}$. To overcome this, we give an asymmetric version of the Product Test~\cite{harrowTestingProductStates2013a}, denoted APT. The APT takes in an $n$-system state $\ket{\psi}$ in register $A$, and (ideally) $m$ copies of $\ket{\psi}$ in register $B$. It picks a random subsystem $i$ of $A$, as well as a random copy $j$ of $i$ in $B$, and applies the SWAP test (\Cref{fig:SWAP}) between them. We prove (\Cref{l:apt}) that if this test passes with high probability, then $\ket{\psi}_A\approx\ket{\psi_1}\otimes\cdots \otimes\ket{\psi_n}$ for some $\set{\ket{\psi_i}}_{i=1}^n$, i.e. $\ket{\psi}_A$ was of tensor product form. 

With the APT in hand, we can show error reduction for (e.g.) $\QPipure_i$.
Here, the aim of an honest $i$th (existentially quantified) prover is to send many copies of proofs $1$ through $i-1$. With probability $1/2$, the verifier runs the APT with register $A$ being proofs $1$ to $i-1$ and register $B$ being all their copies bundled with the $i$th proof, and with probability $1/2$, the verifier runs parallel repetition on all copies of proofs bundled with the $i$th proof. This crucially leverages the fact that the $i$th proof is existential in the YES case, and thus can be assumed to be of this ideal form. In the NO case, however, the $i$th proof is universally quantified --- thus, we cannot assume anything about its structure, which is why we do not get error reduction for the soundness parameter (in this case).\\
\vspace{-1mm}

\noindent{\emph{4. Upper and lower bounds on $\QPHpure$.}} We focus on the more difficult direction, $\QCPH\subseteq\QPHpure$, for which we actually show that for all even $k\geq 2$, $\QCPi_k\subseteq \QPHpure_k$ (\Cref{l:qcph_in_qphpure}). When simulating \QCPH, the challenge  is again how to deal with {universally} quantified proofs, denoted by index set $U\subseteq[k]$. Unlike existentially quantified proofs, which can be assumed to be set honestly to some optimal string $y_i$, for any index $j\in U$ the proof $\ket{\psi_j}$ can be any pure quantum state. This causes two problems: (1) Measuring $\ket{\psi_j}$ in the standard basis yields a \emph{distribution} $D_j$ over strings, and due to non-convexity of $\QPHpure$, it is not clear if this can help a cheating prover succeed with higher probability that having sent a string. (2) Conditioned on distribution $D_j$, what should the next, existentially quantified, prover $j+1$ set its optimal proof/string to? We overcome these obstacles as follows. The initial setup is similar to \Cref{thm:onesided} --- prover $k$ (which is existentially quantified in the YES case) send copies of all previous proofs $1$ to $k-1$, and with probability $1/2$, we run the APT. However, now with probability $1/2$, we measure \emph{all} proofs in the standard basis. The key step is to immediately \emph{accept} if for any universally quantified proof index $i\in U$, measuring proof $\ket{\psi_i}$ does \emph{not} match all of its copies bundled in proof $k$. In contrast, for existentially quantified proofs, we \emph{reject} if a mismatch occurs. Finally, assuming no mismatches occur, we simply run the original $\QCPH$ verifier on the corresponding strings obtained via measurement. Showing correctness is subtle, and requires a careful analysis for both YES and NO cases, since recall the location of universally quantified proofs changes between cases.

\paragraph{Discussion and open questions.} Many questions remain open for \QCPH, \QPH, and \QPHpure. Perhaps the most frustrating for $\QCPH$ is a lack of a genuine Toda's theorem --- \cite{gharibianQuantumGeneralizationsPolynomial2022} shows $\QCPH\subseteq \p^{\PP^{\PP}}$, but what one really wants is containment in $\p^\PP$. Is this possible? And if not, can one show an oracle separation between $\QCPH$ and $\p^\PP$? Moving to $\QPH$, its role in this mess remains rather murky. Is $\QPHpure \subseteq\QPH$ (recall the converse direction follows via purification)? For this, our proof technique for $\QCPH\subseteq\QPHpure$ appears not to apply, as it requires pure states for the SWAP test. \QPH \emph{does} have one advantage over \QPHpure, however --- while the third level of both contains \QMAt, only $\QSigma_3$ is known to be in \NEXP~\cite{gharibianQuantumGeneralizationsPolynomial2022}, providing a class ``between'' \QMAt and \NEXP. Reference \cite{gharibianQuantumGeneralizationsPolynomial2022}'s proof breaks down for \QPHpure, as its semidefinite-programming approach requires \emph{mixed} state proofs\footnote{Briefly, in $\NEXP$ one can guess the first existentially quantified proof of $\QSigma_3$, leaving a $\QPi_2$ computation. Since we are using mixed states, via duality theory one can rephrase this via an exponential-side SDP, which can be solved in exponential time. Note this ``convexification'' does not seem to apply for larger values of $k$.}. Finally, for \QPHpure, can one show two-sided error reduction? Is there a collapse theorem for \QPHpure? Can one improve our bound $\QPHpure\subseteq \EXP^{\PP}$? As a first step, is $\QSigmapure_2\subseteq\NEXP$? If not, this would suggest the combination of ``unentanglement'' across proofs and alternating quantifiers yields a surprisingly powerful proof system, as it trivially holds that $\QMAt\subseteq \NEXP$.

\paragraph{Organization.} \Cref{scn:defs} begins with notation and definitions. \Cref{sscn:collapse} and \Cref{sscn:qckl} give our collapse theorem and quantum Karp-Lipton theorem for \QCPH, respectively. \Cref{scn:error} shows error reduction for \QPHpure. \Cref{scn:QPHpureupperbound} gives upper and lower bounds for \QPHpure.

\section{Preliminaries}\label{scn:defs}

\paragraph{Notation.}  Let $\operatorname{conv}(S)$ denote the convex hull of set $S$. Then, the set of separable operators acting on $\complex^{d_1}\otimes\cdots\otimes\complex^{d_i}$ is
\begin{align}
  \operatorname{conv}\left(\ketbra{\psi_1}{\psi_1}\otimes\cdots\otimes\ketbra{\psi_i}{\psi_i}\mid \forall j\in[i]\text{, }\ket{\psi_j}\in\complex^{d_j}\text{ is a unit vector} \right).
\end{align}

\begin{definition}[Poly-time uniform family of quantum circuits]
A family of quantum circuits $\{ V_n \}_{n \in \natural}$ is said to be uniformly generated in polynomial time if there exists a polynomially bounded function $t:\natural\mapsto\natural$ and a deterministic Turing machine $M$ acting as follows. For every $n$-bit input $x$, $M$ outputs in time $t(n)$ a description of a quantum circuit $V_n$ (consisting of $1$-and $2$-qubit gates) that takes the all-zeros state as ancilla and outputs a single qubit. We say $V_n$ \emph{accepts} (\emph{rejects}) if measuring its output qubit in the computational basis yields $1$ ($0$).
\end{definition}

\noindent Throughout this paper, we study \emph{promise problems}. A promise problem is a pair $A=(\ayes,\ano)$ such that $\ayes, \ano\subseteq\set{0,1}^\ast$ and $\ayes \cap \ano = \emptyset$, but $\ayes \cup \ano = \set{0, 1}^*$ does not necessarily hold.

\subsection{Quantum-Classical Polynomial Hierarchy (QCPH)} \label{sscn:qcph}
We first recall the quantum analogue of \PH\ that generalizes \QCMA, i.e. has classical proofs~\cite{gharibianQuantumGeneralizationsPolynomial2022}.

\begin{definition}[$\QCSigma_i$]\label{def:QCSigmam}
	Let $A=(\ayes,\ano)$ be a promise problem. We say that $A$ is in $\QCSigma_i(c, s)$ for poly-time computable functions $c, s: \natural \mapsto [0, 1]$ if there exists a poly-bounded function $p:\natural\mapsto\natural$ and a poly-time uniform family of quantum circuits $\{V_n\}_{n \in \natural}$ such that for every $n$-bit input $x$, $V_n$ takes in classical proofs ${y_1}\in \set{0,1}^{p(n)}, \ldots, {y_i}\in \set{0,1}^{p(n)}$ and outputs a single qubit, {such that:}
	\begin{itemize}
		\item Completeness: $x\in \ayes$ $\Rightarrow$ $\exists y_1 \forall y_2 \ldots Q_i y_i$ s.t. $\operatorname{Prob}[V_n \text{ accepts } (y_1, \ldots, y_i)] \geq c$.
		\item Soundness: $x\in \ano$ $\Rightarrow$ $\forall y_1 \exists y_2 \ldots \overline{Q}_i y_i$ s.t. $\operatorname{Prob}[V_n \text{ accepts } (y_1, \ldots, y_i)] \leq s$.
	\end{itemize}
	Here, $Q_i$ equals $\exists$ when $m$ is odd and equals $\forall$ otherwise and $\overline{Q}_i$ is the complementary quantifier to $Q_i$. Finally, define
	\begin{equation}
	\QCSigma_i := \bigcup_{{ c - s \in \Omega(1/\poly(n))}} \QCSigma_i(c, s).
	\end{equation}
\end{definition}

\noindent Comments: Note that the first level of this hierarchy corresponds to $\QCMA$. The complement of the $i^{\mathrm{th}}$ level of the hierarchy, $\QCSigma_i$, is the class $\QCPi_i$ defined next.

\begin{definition}[$\QCPi_i$]\label{def:QCPim}
	Let $A=(\ayes,\ano)$ be a promise problem. We say that $A \in \QCPi_i(c, s)$ for poly-time computable functions $c, s: \natural \mapsto [0, 1]$ if there exists a polynomially bounded function $p:\natural\mapsto\natural$ and a poly-time uniform family of quantum circuits $\{V_n\}_{n \in \natural}$ such that for every $n$-bit input $x$, $V_n$ takes in classical proofs ${y_1}\in \set{0,1}^{p(n)}, \ldots, {y_i}\in \set{0,1}^{p(n)}$ and outputs a single qubit, {such that:}
	\begin{itemize}
		\item Completeness: $x\in \ayes$ $\Rightarrow$ $\forall y_1 \exists y_2 \ldots Q_i y_i$ s.t. $\operatorname{Prob}[V_n \text{ accepts } (y_1, \ldots, y_i)] \geq c$.
		\item Soundness: $x\in \ano$ $\Rightarrow$ $\exists y_1 \forall y_2 \ldots \overline{Q}_i y_i$ s.t. s.t. $\operatorname{Prob}[V_n \text{ accepts } (y_1, \ldots, y_i)] \leq s$.
	\end{itemize}
	Here, $Q_i$ equals $\forall$ when $m$ is odd and equals $\exists$ otherwise, and $\overline{Q}_i$ is the complementary quantifier to $Q_i$. Finally, define
	\begin{equation}
	\QCPi_i := \bigcup_{{c - s \in \Omega(1/\poly(n))}} \QCPi_i(c, s).
	\end{equation}
\end{definition}

\noindent Now the corresponding quantum-classical polynomial hierarchy is defined as follows.

\begin{definition}[Quantum-Classical Polynomial Hierarchy (\QCPH)]\label{def:QCPH}
	\begin{equation}
        \QCPH = \bigcup_{m \in \mathbb{N}} \; \QCSigma_i = \bigcup_{m \in \mathbb{N}} \; \QCPi_i.
    \end{equation}
\end{definition}

\subsection{(Mixed-State) Quantum Polynomial Hierarchy (QPH)}\label{sscn:QPH}

We next define the quantum hierarchy with \emph{mixed state} proofs, as in~\cite{gharibianQuantumGeneralizationsPolynomial2022}. In contrast to \QCPH, this hierarchy generalizes \QMA.

\begin{definition}[$\QSigma_i$]\label{def:QSigmam}
	A promise problem $A=(\ayes,\ano)$ is in $\QSigma_i(c, s)$ for poly-time computable functions $c, s: \natural \mapsto [0, 1]$ if there exists a polynomially bounded function $p:\natural\mapsto\natural$ and a poly-time uniform family of quantum circuits $\{V_n\}_{n \in \natural}$ such that for every $n$-bit input $x$, $V_n$ takes $p(n)$-qubit density operators $\rho_1, \ldots, \rho_i$ as quantum proofs and outputs a single {qubit, then:}
	\begin{itemize}
		\item Completeness: $x\in\ayes$ $\Rightarrow$ $\exists \rho_1 \forall \rho_2 \ldots Q_i \rho_i$ s.t. $\operatorname{Prob}[V_n \text{ accepts } ({\rho_1 \otimes \rho_2 \otimes \cdots \otimes  \rho_i})] \geq c$.
		\item Soundness: If $x\in\ano$ $\Rightarrow$ $\forall \rho_1 \exists \rho_2 \ldots \overline{Q}_i \rho_i$, $\operatorname{Prob}[V_n \text{ accepts } ({\rho_1 \otimes \rho_2 \otimes \cdots \otimes  \rho_i})] \leq s$.
	\end{itemize}
	As before, $Q_i$ equals $\forall$ when $m$ is even and equals $\exists$ otherwise, and $\overline{Q}_i$ is the complementary quantifier to $Q_i$. Define
	\begin{equation}
	   \QSigma_i = \bigcup_{\substack{c-s \in \Omega(1/\poly(n))}} \QSigma_i(c, s).
	\end{equation}
\end{definition}

\noindent \emph{Comments:} (1) {In contrast to the standard quantum circuit model, here we allow \emph{mixed} states as inputs to $V_n$; this can be formally modelled via the mixed state framework of~\cite{aharonovQuantumCircuitsMixed1998}.} (2) Clearly, $\QSigma_1 = \QMA$. (3) We recover the definition of $\QMA(k)$ by ignoring all proofs $\rho_i$ proofs with $i$ even in the definition of $\QSigma_{2k}$.

\begin{definition}[$\QPi_i$]\label{def:QPim}
	A promise problem $A=(\ayes,\ano)$ is in $\QPi_i(c, s)$ for poly-time computable functions $c, s: \natural \mapsto [0, 1]$ if there exists a polynomially bounded function $p:\natural\mapsto\natural$ and a poly-time uniform family of quantum circuits $\{V_n\}_{n \in \natural}$ such that for every $n$-bit input $x$, $V_n$ takes $p(n)$-qubit density operators $\rho_1, \ldots, \rho_i$ as quantum proofs and outputs a single {qubit, then:}
	\begin{itemize}
		\item Completeness: $x\in\ayes$ $\Rightarrow$ $\forall \rho_1 \exists \rho_2 \ldots \overline{Q}_i \rho_i$, $\operatorname{Prob}[V_n \text{ accepts } ({\rho_1 \otimes \rho_2 \otimes \cdots \otimes  \rho_i})] \geq c$.
		\item Soundness: $x\in\ano$ $\Rightarrow$ $\exists \rho_1 \forall \rho_2 \ldots Q_i \rho_i$ s.t. $\operatorname{Prob}[V_n \text{ accepts } ({\rho_1 \otimes \rho_2 \otimes \cdots \otimes  \rho_i})]\leq s$.
	\end{itemize}
	Again, $Q_i$ equals $\exists$ when $m$ is even and equals $\forall$ otherwise, and $\overline{Q}_i$ is the complementary quantifier to $Q_i$. Define
	\begin{equation}
	\QPi_i = \bigcup_{c-s \in \Omega(1/\poly(n))} \QPi_i(c, s).
	\end{equation}
\end{definition}

\begin{definition}[Quantum Polynomial-Hierarchy (\QPH)]\label{def:QPH}
	\begin{equation}
        \QPH = \bigcup_{m \in \mathbb{N}} \; \QSigma_i = \bigcup_{m \in \mathbb{N}} \; \QPi_i.
    \end{equation}
\end{definition}

\subsection{(Pure-State) Quantum Polynomial Hierarchy ($\QPHpure$)}
Above, \QPH\ was defined using mixed-state quantum proofs. This is in contrast to \QMA or \QMAt, where without loss of generality, one may argue due to convexity that pure states suffice. As it is not clear how to extend such convexity arguments to alternating quantifiers, below we introduce Quantum PH with pure-state proofs (for clarity, the definition below is new to this work).

\begin{definition}[$\QSigmapure_i$]\label{def:QSigmam_pure}
	A promise problem $A=(\ayes,\ano)$ is in $\QSigmapure_i(c, s)$ for poly-time computable functions $c, s: \natural \mapsto [0, 1]$ if there exists a polynomially bounded function $p:\natural\mapsto\natural$ and a poly-time uniform family of quantum circuits $\{V_n\}_{n \in \natural}$ such that for every $n$-bit input $x$, $V_n$ takes $p(n)$-qubit states $\ket{\psi_1}, \ldots, \ket{\psi_i}$ as quantum proofs and outputs a single {qubit, then:}
	\begin{itemize}
		\item Completeness: If $x\in\ayes$, then $\exists \ket{\psi_1} \forall \ket{\psi_2} \ldots Q_i \ket{\psi_i}$ such that $V_n$ accepts $({\ket{\psi_1} \otimes \ket{\psi_2} \otimes \cdots \otimes  \ket{\psi_i}})$ with probability $\geq c$.
		\item Soundness: If $x\in\ano$, then $\forall \ket{\psi_1} \exists \ket{\psi_2} \ldots \overline{Q}_i \ket{\psi_i}$ such that $V_n$ accepts $({\ket{\psi_1} \otimes \ket{\psi_2} \otimes \cdots \otimes \ket{\psi_i}})$ with probability $\leq s$.
	\end{itemize}
	Here, $Q_i$ equals $\forall$ when $m$ is even and equals $\exists$ otherwise, and $\overline{Q}_i$ is the complementary quantifier to $Q_i$. Define
	\begin{equation}
    \QSigmapure_i = \bigcup_{\substack{c-s \in \Omega(1/\poly(n))}} \QSigmapure_i(c, s).
	\end{equation}
\end{definition}

\begin{definition}[$\QPipure_i$]\label{def:QPimprime}
	A promise problem $A=(\ayes,\ano)$ is in $\QPipure_i(c, s)$ for poly-time computable functions $c, s: \natural \mapsto [0, 1]$ if there exists a polynomially bounded function $p:\natural\mapsto\natural$ and a poly-time uniform family of quantum circuits $\{V_n\}_{n \in \natural}$ such that for every $n$-bit input $x$, $V_n$ takes $p(n)$-qubit states $\ket{\psi_1}, \ldots, \ket{\psi_i}$ as quantum proofs and outputs a single {qubit, then:}
	\begin{itemize}
		\item Completeness: If $x\in\ayes$, then $\forall \ket{\psi_1} \exists \ket{\psi_2} \ldots Q_i \ket{\psi_i}$ such that $V_n$ accepts $({\ket{\psi_1} \otimes \ket{\psi_2} \otimes \cdots \otimes  \ket{\psi_i}})$ with probability $ \geq c$.
		\item Soundness: If $x\in\ano$, then $\exists \ket{\psi_1} \forall \ket{\psi_2} \ldots \overline{Q}_i \ket{\psi_i}$ such that $V_n$ accepts $({\ket{\psi_1} \otimes \ket{\psi_2} \otimes \cdots \otimes  \ket{\psi_i}})$ with probability $\leq s$.
	\end{itemize}
	Here, $Q_i$ equals $\exists$ when $m$ is even and equals $\forall$ otherwise, and $\overline{Q}_i$ is the complementary quantifier to $Q_i$. Define
	\begin{equation}
	  \QPipure_i = \bigcup_{c-s \in \Omega(1/\poly(n))} \QPipure_i(c, s).
	\end{equation}
\end{definition}

\noindent The Quantum Polynomial Hierarchy with pure-state proofs can now be defined as follows.

\begin{definition}[Pure Quantum Poly-Hierarchy ($\QPHpure$)]\label{def:QPHpure}
	\[
    \QPHpure = \bigcup_{m \in \mathbb{N}} \; \QSigmapure_i = \bigcup_{m \in \mathbb{N}} \; \QPipure_i.
  \]
\end{definition}

\subsection{Other complexity classes}\label{sscn:other}
Next, we define $\PrQCMA$ and $\Bpoly$.

\begin{definition}[$\Bpoly$] \label{def:Bpoly}
	A promise problem $\Pi=(\ayes,\ano)$ is in $\Bpoly$ if there exists a poly-sized family of quantum circuits
	$\{C_n\}_{n \in \mathbb{N}}$ and a collection of binary advice strings $\{a_n\}_{n \in \mathbb{N}}$ with $|a_n| = \poly(n)$, such that for all $n\in\natural$ and all strings $x\in\set{0,1}^n$,
  \[
    \Pr[C_n(\ket{x}, \ket{a_n}) = 1] \geq 2/3 \text{  if  }x \in \ayes\quad\text{  and  }\quad\Pr[C_n(\ket{x}, \ket{a_n}) = 1] \leq 1/3\text{  if  }x \in \ano.
  \]
\end{definition}

\begin{definition}[$\PrQCMA$]\label{def:PrQCMA}
    A promise problem $A=(\ayes,\ano)$ is in $\PrQCMA(c,s)$ for poly-time computable functions $c,s:\natural\mapsto[0,1]$ if there exists poly-bounded functions $p,q:\natural\mapsto\natural$ such that $\forall \ell \in \natural, \ c(\ell) - s(\ell) \geq 2^{-q(\ell)}$, and there exists a poly-time uniform family of quantum circuits $\{V_n\}_{n \in \natural}$ that takes a classical proof ${y}\in \set{0,1}^{p(n)}$ and outputs a single qubit. Moreover, for an $n$-bit input $x$:
    \begin{itemize}
     \item Completeness: If $x\in \ayes$, then $\exists\ y$ such that $V_n$ accepts $y$ with probability at least $c(n)$.
     \item Soundness: If $x\in \ano$, then $\forall\ y$, ${V_x}$ accepts $y$ with probability at most $s(n)$.
    \end{itemize}
    Define $\PrQCMA=\bigcup_{c,s} \PrQCMA(c,s)$.
\end{definition}

\begin{definition}[$\PP$]
	\label{def:PP1}
	A language $L$ is in $\PP$ if there exists a probabilistic polynomial time Turing machine $M$ such that  $x \in L \iff \Pr[M(x) = 1] > 1/2$.
\end{definition}

\section{Collapse theorems and Quantum Karp-Lipton}\label{scn:collapseQKL}
\subsection{Collapse Theorem for QCPH}\label{sscn:collapse}

For the quantum-classical hierarchy, \QCPH, we now show a quantum analogue of the standard collapse theorem for classical PH, i.e. $\Sigmat=\Pit$ implies $\PH=\Sigmat$, resolving an open question of \cite{gharibianQuantumGeneralizationsPolynomial2022}.

\begin{lemma}\label{l:collapse_i}
  If for any $k\geq 1$, $\QCSigma_k = \QCPi_k$, then for all $i\geq  k$, $\QCSigma_i = \QCPi_i=\QCSigma_k$.
\end{lemma}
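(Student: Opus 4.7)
The plan is to proceed by induction on $i \geq k$, with the base case $i=k$ trivial. For the inductive step I would assume $\QCSigma_i = \QCPi_i = \QCSigma_k$ and aim to show $\QCSigma_{i+1} = \QCPi_{i+1} = \QCSigma_k$. The reverse containments $\QCSigma_k \subseteq \QCSigma_{i+1}$ and $\QCSigma_k = \QCPi_k \subseteq \QCPi_{i+1}$ are immediate by padding with dummy proofs, and $\QCPi_{i+1} \subseteq \QCPi_k$ follows from $\QCSigma_{i+1} \subseteq \QCSigma_k$ by complementation, so the content is the single containment $\QCSigma_{i+1} \subseteq \QCSigma_k$.

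Given $L \in \QCSigma_{i+1}(c,s)$ decided by a verifier $V$ on proofs $y_1,\dots,y_{i+1}$ with inverse-polynomial completeness--soundness gap, I would first introduce an auxiliary promise problem $L'$ whose instances are pairs $(x,y_1)$: its YES-set consists of those $(x,y_1)$ such that $\forall y_2 \exists y_3 \dots Q_{i+1} y_{i+1}$, $V$ accepts with probability at least $c$, and its NO-set consists of those $(x,y_1)$ such that $\exists y_2 \forall y_3 \dots \overline{Q}_{i+1} y_{i+1}$ for which $V$ accepts with probability at most $s$. These sets are disjoint because $c>s$, and by construction $L' \in \QCPi_i$ via $V$ itself, treating $(x,y_1)$ as the input. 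The inductive hypothesis $\QCPi_i = \QCSigma_k$ then yields a $\QCSigma_k$ verifier $V'$ on proofs $(z_1,\dots,z_k)$ that decides $L'$ with inverse-polynomial gap.

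I would then construct a $\QCSigma_k$ verifier $W$ for $L$ that takes proofs $(y_1 z_1, z_2,\dots,z_k)$ --- merging $y_1$ with $V'$'s first (existentially quantified) proof slot --- and simulates $V'$ on input $((x,y_1), z_1,\dots,z_k)$. For completeness, if $x \in \Lyes$ then the outer $\exists y_1$ of the $\QCSigma_{i+1}$ promise supplies a $y_1$ making $(x,y_1)$ a YES-instance of $L'$, whereupon the $\QCSigma_k$ completeness of $V'$ provides suitable $z_i$'s. For soundness, if $x \in \Lno$ then the $\QCSigma_{i+1}$ soundness clause forces \emph{every} $y_1$ to make $(x,y_1)$ a NO-instance of $L'$, so the $\QCSigma_k$ soundness of $V'$ forces $W$ to reject under the appropriate alternation of remaining proofs.

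The main subtlety --- and the reason this does not port verbatim from the classical collapse --- is the handling of pairs $(x,y_1)$ that lie outside $L'_\mathrm{yes} \cup L'_\mathrm{no}$: on such invalid inputs the acceptance probability of $V'$ is completely unconstrained. This is harmless here in both directions: in the YES case a single ``good'' $y_1$ already witnesses the outermost existential, so any invalid $y_1$'s can be safely ignored; in the NO case the original $\QCSigma_{i+1}$ soundness promise on $L$ guarantees that every $y_1$ is a genuine NO-instance of $L'$, so no invalid inputs arise at all. This is precisely the ``at least one good choice'' phenomenon flagged in the introduction, and it is exactly what appears to fail for \QPH, where a universally quantified mixed-state proof does not reduce to a single classical witness giving a valid sub-instance.
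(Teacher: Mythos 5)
Your proposal is correct and follows essentially the same route as the paper: both define the auxiliary promise problem $L'$ on pairs $(x,y_1)$, observe it lies in $\QCPi_i$, invoke the induction hypothesis to obtain a $\QCSigma_k$-type verifier for it, and then absorb the outermost $\exists y_1$ (respectively $\forall y_1$) into the like-quantified first proof slot of that verifier. Your closing discussion of invalid pairs $(x,y_1)$ is exactly the ``at least one good choice'' point the paper makes in its techniques overview, so no further changes are needed.
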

\begin{proof}
  We proceed by induction. For $j \geq k$, define $P(j) := \QCSigma_j = \QCPi_j = \QCSigma_k$. The base case $P(k)$ holds by the assumption of the lemma. For the inductive case, assume $P(j)$ holds for all $k \leq j \leq i-1$. We show $P(j)$ holds for $j=i$. Consider arbitrary promise problem $L=(\Lyes, \Lno, \Linv) \in \QCSigma_i$ and let $\set{V_n}$ be the verifier circuits for the promise problem. Define new promise problem $L'=(\Lyes', \Lno', \Linv')$:
    \begin{align}
        \Lyes' &= \left\{(x,y_1) \mid \forall y_2 \exists y_3 \ldots Q_i y_i \; \Pr[V(x,y_1, y_2, \ldots, y_i) = 1] \geq \frac{2}{3}\right\}\\
        \Lno' &= \left\{(x,y_1) \mid \exists y_2 \forall y_3 \ldots \overline{Q}_i y_i \; \Pr[V(x,y_1, y_2, \ldots, y_i) = 1] \leq \frac{1}{3}\right\}\\
        \Linv' &= \set{0,1}^* \setminus (\Lyes' \cup \Lno').
    \end{align}
    Clearly, $\Lyes' \cap \Lno' = \emptyset$, and so $(\Lyes', \Lno', \Linv') \in \QCPi_{i-1}$. By the induction hypothesis, there exists promise problem $L''=(\Lyes'', \Lno'', \Linv'')\in \QCSigma_{i-1}$ such that $\Lno' \subseteq \Lno''$ and $\Lyes' \subseteq \Lyes''$. Letting $\set{V''_n}$ denote the verification circuits for $L''$, we have
    \begin{align}
        (x,y_1) \in \Lyes' &&\Rightarrow&& (x,y_1) \in \Lyes'' &&\Rightarrow && \exists y_2 \forall y_3 \ldots Q_i y_i \; \Pr[V''(x,y_1,y_2,\ldots,y_i) = 1] \geq \frac{2}{3}\\
        (x,y_1) \in \Lno' &&\Rightarrow&& (x,y_1) \in \Lno'' &&\Rightarrow &&  \forall y_2 \exists y_3 \ldots \overline{Q}_i y_i \; \Pr[V''(x,y_1,y_2,\ldots,y_i) = 1] \leq \frac{1}{3}.
    \end{align}
    Now considering again $L=(L_{yes}, \Lno, \Linv)$, we have
    \begin{align}
        x \in \Lyes &&\Rightarrow &&\exists y_1\text{ s.t. } (x,y_1) \in \Lyes' &&\Rightarrow &&\exists y_1 \exists y_2 \forall y_3 \ldots Q_{i-1} y_i \Pr[V'(x,y_1,y_2,\ldots,y_i) = 1] \geq \frac{2}{3}\\
        x \in \Lno &&\Rightarrow&& \forall y_1 \text{ s.t. }(x,y_1) \in \Lno' &&\Rightarrow &&\forall y_1 \forall y_2 \exists y_3 \ldots \overline{Q}_{i-1} y_i \Pr[V'(x,y_1,y_2,\ldots,y_i) = 1] \leq \frac{1}{3}.
    \end{align}
    We conclude $L \in \QCSigma_{i-1} = \QCSigma_i$. So $\QCSigma_j = \QCSigma_k$. Similarly, $\QCPi_i \subseteq \QCPi_{i-1} = \QCSigma_k$. Thus, $P(i)$ holds, as claimed.
\end{proof}

\theoremQCPHcollapse*
\begin{proof}
  Immediate from Lemma \ref{l:collapse_i}.
\end{proof}

\noindent Theorem \ref{thm:collapse}, in turn, immediately resolves a second open question of \cite{gharibianQuantumGeneralizationsPolynomial2022}, which showed that if $\PrQCMA \subseteq \Bpoly$, then $\QCSigma_2 = \QCPi_2$: Theorem \ref{thm:collapse} immediately yields that the latter collapses \QCPH.

\preciseKL*
\noindent Recall, however, that later in Section~\ref{sscn:qckl} we will show a stronger version of this result which gives a true quantum-classical analogue of the classical Karp-Lipton Theorem.

\subsection{Quantum-Classical Karp-Lipton Theorem}
\label{sscn:qckl}

We will show that if there exists a polynomial size circuit family $\set{C_n}_{n\in \N}$ that can decide a $\QCMA$ complete problem, then $\QCPi_2 \subseteq \QCSigma_2$. Using Theorem \ref{thm:collapse}, it follows that if $\QCMA \subseteq \Bpoly$, then the quantum-classical polynomial hierarchy collapses to the second level. 

Let $L$ be any promise problem in $\QCPi_2$, and let $\{V_n\}_{n \in \N}$ be the corresponding (quantum) verifier. In order to show $L \in \QCSigma_2$, we need to construct a verifier $V'$ such that if $x \in \Lyes$, then there exists a $y'_1$ such that for all $y'_2$, $V'(x, y_1, y_2)$ accepts with some probability $\alpha$, and if $x\in \Lno$, then for all $y'_1$, there exists a $y'_2$ such that the acceptance probability of $V'(x, y'_1, y'_2)$ is noticeably smaller than $\alpha$. As discussed in the introduction, we require two key ingredients for this result. The first is an isolation-lemma for $\QCMA$, which was given by ~\cite{ABOBS22}. This ensures that a $\QCMA$ instance is converted to a $\UQCMA$ instance. Next, we use the search-to-decision reduction for $\UQCMA$, given by ~\cite{iraniQuantumSearchToDecisionReductions2022}. We first discuss some preliminaries relevant to these two works. 

\paragraph{Quantum-Classical Isolation Lemma and related preliminaries: }

The first step in our proof involves a randomized reduction from $\QCMA$ to $\UQCMA$ ($\QCMA$ but with the additional promise that every `yes' instance has exactly one accepting witness). We recall trivial complete problems for $\QCMA$ and $\UQCMA$, as defined in ~\cite{ABOBS22}, and the randomized reduction from $\QCMA$ to $\UQCMA$.

\begin{definition}[$\UQCMA$]
    Let $A=(\ayes,\ano)$ be a promise problem. We say that $A \in \UQCMA(c,s)$ for polynomial-time computable functions $c, s: \natural \mapsto [0, 1]$ if there exists a polynomially bounded function $p:\natural\mapsto\natural$ and a polynomial-time uniform family of quantum circuits $\{V_n\}_{n \in \natural}$ such that for every $n$-bit input $x$, $V_n$ takes in classical proof ${y_1}\in \set{0,1}^{p(n)}$ and outputs a single qubit, {such that:}
  \begin{itemize}
      \item Completeness: $x\in \ayes$ $\Rightarrow$ $\exists y_1$ s.t. $\Pr[V_n(x, y_1) = 1] \geq c$ and $\forall y_1' \neq y_1 \Pr[V_n(x, y_1) = 1] \leq s$.
      \item Soundness: $x\in \ano$ $\Rightarrow$ $\forall y_1$ s.t. $\Pr[V_n(x, y_1) = 1] \leq s$.
  \end{itemize}
    \begin{equation}
  \text{Define } \; \UQCMA := \bigcup_{{c - s \in \Omega(1/\poly(n))}} \UQCMA(c, s).
  \end{equation}
\end{definition}

\begin{definition}[$\tqcmapp$: Trivial complete problem for $\qcma$]
    \label{def:tqcmapp}
    Let $\tqcmapp$ = $(\tqcmappyes$, $\tqcmappno)$ denote the `trivial' complete problem for $\QCMA$ defined as follows. Every instance consists of a quantum circuit $U$, instance $x$, probabilities $p_1, p_2$, witness length $\ell$ (probabilities and witness length expressed in unary\footnote{We assume the probabilities are rationals, and the numerator and denominator are expressed in unary}). 

    \begin{itemize}
        \item $(U, x, p_1, p_2, \ell) \in \tqcmappyes$ $\implies$ $ \exists d \in \bit^{\ell}$ s.t. 
        $\Pr\Brackets{U\brackets{\ket{x}, \ket{d}} = 1} \geq p_2$

        \item $(U, x, p_1, p_2, \ell) \in \tqcmappno$ $\implies$ $\forall d \in \bit^{\ell}$, 
        $\Pr\Brackets{U\brackets{\ket{x}, \ket{d}} = 1} \leq p_1$
    \end{itemize}

\end{definition}

\begin{definition}[$\uqcmapp$: Trivial complete problem for $\uqcma$]

    Let $\uqcmapp$ = $(\uqcmappyes$, $\uqcmappno)$ denote the `trivial' complete problem for $\UQCMA$ defined as follows. Every instance consists of a quantum circuit $U$, instance $x$, probabilities $p_1, p_2$, witness length $\ell$ (probabilities and witness length expressed in unary). 

    \begin{itemize}
        \item $(U, x, p_1, p_2, \ell) \in \uqcmappyes$ $\implies$ $ \exists d \in \bit^{\ell}$ s.t. 
        $\Pr\Brackets{U\brackets{\ket{x}, \ket{d}} = 1} \geq p_2$
        and for all $d'\neq d$, $\Pr\Brackets{U\brackets{\ket{x}, \ket{d'}} = 1} \leq p_1$

    \item $(U, x, p_1, p_2, \ell) \in \uqcmappno$ $\implies$ $\forall d \in \bit^{\ell}$, 
    $\Pr\Brackets{U\brackets{\ket{x}, \ket{d}} = 1} \leq p_1$
\end{itemize}

\end{definition}

\begin{theorem}[Randomized reduction from $\QCMA$ to $\UQCMA$: Corollary 40 from \cite{ABOBS22}]
\label{uqcma}
There is a probabilistic polynomial time reduction $f$ from $\TQCMAPP$ to $\UQCMAPP$ such that
\begin{itemize}
    \item $\phi = (U, x, p_1, p_2, \ell) \in \tqcmappyes \implies \Pr\Brackets{f(\phi) \in \uqcmappyes} \geq 1/32\ell^2$. 
    \item $\phi = (U, x, p_1, p_2, \ell) \in \tqcmappno \implies \Pr\Brackets{f(\phi) \in \uqcmappno} = 1$
\end{itemize}

Here the probabilities are over the random coins used by reduction $f$. 
\end{theorem}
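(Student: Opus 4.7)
The plan is to adapt Valiant--Vazirani's classical isolation lemma to the quantum-classical setting via pairwise independent hashing. First, I would apply standard $\QCMA$ gap amplification to $U$, producing a verifier $U'$ whose completeness and soundness are exponentially close to $1$ and $0$, respectively. This makes the set $S := \{d \in \{0,1\}^{\ell} : \Pr[U'(\ket{x},\ket{d})=1] \geq 1/2\}$ well-defined, with $S \neq \emptyset$ in the yes case and $S = \emptyset$ in the no case.

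The reduction $f$ then samples $k$ uniformly from $\{1,\ldots,\ell\}$ together with a random pairwise independent hash function $h : \{0,1\}^{\ell} \to \{0,1\}^{k}$, and outputs the instance $(V',x,p_1',p_2',\ell)$, where the new verifier $V'$ on witness $d$ first deterministically checks whether $h(d)=0^{k}$ (rejecting otherwise) and, if so, runs $U'$ on $(\ket{x},\ket{d})$. The output thresholds $p_1' < p_2'$ are set to be compatible with the amplified verifier's soundness and completeness. Correctness follows the standard Valiant--Vazirani argument: in the yes case, with probability at least $1/\ell$ the chosen $k$ is the unique value with $2^{k-2} \leq |S| \leq 2^{k-1}$, and conditioned on this event a random pairwise independent $h$ isolates a unique $d^\ast \in S$ with $h(d^\ast)=0^{k}$ with constant probability; a careful accounting of constants yields the claimed success probability $\geq 1/(32\ell^{2})$. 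Every $d \neq d^\ast$ either fails the hash check (and is rejected deterministically) or lies outside $S$ (and is rejected with probability at least $1/2$), so $f(\phi) \in \uqcmappyes$. In the no case, $V'$ rejects every $d$ with overwhelming probability since $U'$ does, so $f(\phi) \in \uqcmappno$ deterministically.

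The main obstacle is handling strings whose original acceptance probability $\Pr[U(\ket{x},\ket{d})=1]$ lies in the intermediate band $(p_1,p_2)$: such strings are not constrained by the $\tqcmapp$ promise and can disrupt $\UQCMA$'s uniqueness requirement. Gap amplification pushes essentially all such strings sharply to one side of $1/2$, and any residual intermediate strings are absorbed by choosing $p_1'$ generously relative to the amplified verifier's soundness, so the $\uqcmapp$ promise holds regardless of their precise acceptance probabilities. The remaining work is to bookkeep the amplified completeness/soundness parameters and the constants in the hashing analysis to verify that the $1/(32\ell^2)$ bound comes out as stated.
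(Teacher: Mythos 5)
First, note that the paper does not prove this statement at all: it is imported verbatim as Corollary~40 of \cite{ABOBS22}, so your proposal is being measured against the argument in that reference rather than against anything in this paper. Your sketch correctly identifies the Valiant--Vazirani template (guess $k\approx\log_2|S|$, hash with a pairwise-independent $h:\bit^\ell\to\bit^k$, accept only witnesses with $h(d)=0^k$), and your treatment of the NO case is fine. But the step you flag as ``the main obstacle'' --- witnesses whose acceptance probability lies strictly between $p_1$ and $p_2$ --- is exactly the crux of the quantum-classical isolation lemma, and your proposed fix does not work. Majority-vote amplification does \emph{not} push such strings ``sharply to one side of $1/2$'': a string with acceptance probability equal to (or polynomially close to) $1/2$ keeps acceptance probability near $1/2$ after amplification, and an adversarial instance can populate the entire interval $(p_1,p_2)$ with exponentially many such strings. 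Consequently, for \emph{any} fixed pair $p_1'<p_2'$ with an inverse-polynomial gap, the element of $S=\{d:\Pr[U'(\ket x,\ket d)=1]\ge 1/2\}$ that your hash isolates may itself be an intermediate string with acceptance probability in $(p_1',p_2')$, in which case the output instance satisfies neither the $\uqcmappyes$ nor the $\uqcmappno$ promise. ``Choosing $p_1'$ generously'' cannot absorb this, because the uniqueness promise constrains the isolated string from above \emph{and} all other surviving strings from below simultaneously.

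The missing idea in \cite{ABOBS22} is an additional layer of randomness over the \emph{threshold}: one fixes $\Theta(\ell)$ candidate thresholds $\tau_1<\cdots<\tau_{2\ell}$ partitioning the promise gap, observes that the nested sets $S_{\tau_1}\supseteq\cdots\supseteq S_{\tau_{2\ell}}$ satisfy $|S_{\tau_1}|/|S_{\tau_{2\ell}}|\le 2^\ell$, and concludes by pigeonhole that for at least half of the consecutive pairs $|S_{\tau_j}|\le 2|S_{\tau_{j+1}}|$. Picking such a pair at random (cost: a factor of $\Theta(1/\ell)$) and hashing relative to $|S_{\tau_{j+1}}|$ then isolates, with constant conditional probability, a single survivor lying in $S_{\tau_{j+1}}$ while eliminating \emph{every} element of $S_{\tau_j}$ other than it; setting $p_2'=\tau_{j+1}$ and $p_1'=\tau_j$ makes the unique survivor lie above $p_2'$ and all other survivors below $p_1'$. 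This extra random choice is precisely why the theorem's success probability is $1/32\ell^2$ rather than the classical $\Omega(1/\ell)$, and it is why the paper remarks that ``$p_1,p_2$ are chosen from an appropriate distribution'' when applying the theorem. Without this (or an equivalent mechanism), your reduction does not produce an instance satisfying the $\UQCMAPP$ promise in the YES case.
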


\paragraph{Search-to-decision Reduction for $\UQCMA$ and related preliminaries: } 
Next, we consider some preliminary results needed for search-to-decision reduction for $\UQCMA$. Since these results were proven either explicitly or implicitly in prior works, we only state the lemma/theorem statements here. 

\begin{lemma}
    \label{measurement}
    Let $\Pi$ be a measurement projector. If $\norm{\ket{\psi} - \ket{\phi}} \leq \epsilon$, then $\trdist{{\Pi \ketbra{\psi}{\psi}} - {\Pi \ketbra{\phi}{\phi}}} \leq 2\epsilon$. 
\end{lemma}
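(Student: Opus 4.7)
The plan is to reduce the claim to two well-known facts: (i) the trace-norm distance between two pure states is controlled by the $\ell_2$-norm of their difference, and (ii) multiplying on the left by an operator of bounded operator norm does not inflate the trace norm by more than that factor (H\"older's inequality for Schatten norms). Together these give the desired bound with essentially no work beyond a triangle inequality.

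First I would establish (i) via the standard rank-one decomposition
$\ketbra{\psi}{\psi} - \ketbra{\phi}{\phi} = \ket{\psi}(\bra{\psi}-\bra{\phi}) + (\ket{\psi}-\ket{\phi})\bra{\phi}$, obtained by adding and subtracting $\ket{\psi}\bra{\phi}$. Each summand is rank one, and the trace norm of a rank-one operator $\ket{u}\bra{v}$ equals $\norm{\ket{u}}\cdot\norm{\ket{v}}$. Since $\ket{\psi}$ and $\ket{\phi}$ are unit vectors and $\norm{\ket{\psi}-\ket{\phi}}\le\epsilon$, the triangle inequality for the trace norm yields $\trdist{\ketbra{\psi}{\psi}-\ketbra{\phi}{\phi}}\le 2\epsilon$.

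Second, I would apply H\"older's inequality for Schatten norms, $\trdist{AB}\le \norm{A}_\infty\cdot \trdist{B}$, with $A=\Pi$ and $B=\ketbra{\psi}{\psi}-\ketbra{\phi}{\phi}$. Because $\Pi$ is a projector, $\norm{\Pi}_\infty\le 1$, so combining this with the bound from the previous paragraph gives $\trdist{\Pi\ketbra{\psi}{\psi}-\Pi\ketbra{\phi}{\phi}}\le 2\epsilon$, which is the claim.

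I do not anticipate a genuine obstacle here; this is a routine continuity estimate. The only subtlety is recognizing that the one-sided projector $\Pi$ (rather than the usual two-sided conjugation $\Pi(\cdot)\Pi$) still contracts the trace norm via H\"older, and being careful about the factor of two arising from the rank-one decomposition.
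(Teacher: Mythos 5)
Your proof is correct. Note that the paper itself does not prove this lemma --- it appears in the block of preliminaries for the search-to-decision reduction, where the authors explicitly state that these results ``were proven either explicitly or implicitly in prior works'' and only record the statements --- so there is no in-paper argument to compare against. Your route is the standard one and is complete: the rank-one decomposition $\ketbra{\psi}{\psi}-\ketbra{\phi}{\phi}=\ket{\psi}(\bra{\psi}-\bra{\phi})+(\ket{\psi}-\ket{\phi})\bra{\phi}$ gives $\trdist{\ketbra{\psi}{\psi}-\ketbra{\phi}{\phi}}\le 2\epsilon$ since each term has trace norm equal to the product of the vector norms, and H\"older's inequality $\trdist{AB}\le\norm{A}_\infty\trdist{B}$ with $\norm{\Pi}_\infty\le 1$ handles the one-sided multiplication by $\Pi$, exactly as you say.
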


\begin{lemma}[Approximate Bernstein-Vazirani]
    \label{approxbv}
    Consider the function families $f = \{f_n: \bit^n \rightarrow \bit\}$, $g = \{g_n: \bit^n \rightarrow \bit^{t(n)}\}$ and $h = \{h_n: \bit^n \rightarrow \bit\}$. Let the function $f^s_n(x) = \braket{x}{s} \bmod ~ 2$ and $g, h$ be such that $f^s_n(x) = 1 \Leftrightarrow h_{t(n)}(g_n(x)) = 1$. Given a non-uniform quantum circuit family $\{C_n\}$ such that $\{C_{t(n)}\}$ computes $h_{t(n)}$ with probability $1-1/2^n$, there is a non-uniform circuit family $\{C'_n\}$ that takes $\{C_{t(n)}\}$ as advice and outputs $s$ with probability $1-\sqrt{1/2^{n-4}}$.
\end{lemma}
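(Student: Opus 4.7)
The plan is to adapt the standard Bernstein--Vazirani algorithm, replacing the ideal phase oracle for $f^s_n$ by an approximate version built from the classical reduction $g_n$ and the advice circuit $C_{t(n)}$. Recall that on the state $\ket{+}^{\otimes n}\ket{-}$, the ideal phase oracle $U_{f^s_n}:\ket{x}\ket{-}\mapsto(-1)^{f^s_n(x)}\ket{x}\ket{-}$ followed by Hadamards on the first register yields $\ket{s}$ exactly. I would therefore construct $C'_n$ to perform: (i) reversibly compute $\ket{g_n(x)}$ into a fresh ancilla register; (ii) apply a coherent dilation of $C_{t(n)}$ (obtained by deferring any internal measurements), which on a basis input $\ket{y}\ket{0}$ produces the decomposition $\sqrt{p_y}\ket{\phi_y}\ket{h_{t(n)}(y)}+\sqrt{1-p_y}\ket{\phi'_y}\ket{\overline{h_{t(n)}(y)}}$ with $p_y\geq 1-1/2^n$; (iii) apply CNOT from the output qubit of $C_{t(n)}$ into the target $\ket{-}$ register to trigger phase kickback; and (iv) uncompute steps (ii) and (i). Finally, apply Hadamards on the first register and measure in the computational basis.

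Next, I would bound the per-basis-state error. A direct calculation shows that after step (iii) the ancilla holding $C_{t(n)}$'s output is in the state $\sqrt{p_{g_n(x)}}\ket{\phi}\ket{h}-\sqrt{1-p_{g_n(x)}}\ket{\phi'}\ket{\bar h}$ (the minus sign arising because phase kickback attaches opposite signs to the good and bad branches), which differs from the pre-step-(iii) image $C_{t(n)}\ket{g_n(x)}\ket{0}$ by a vector of norm $2\sqrt{1-p_{g_n(x)}}$. Since $C_{t(n)}^\dagger$ is unitary, it preserves this distance, so after the uncomputation of (ii) and (i) the ancilla differs from $\ket{0}$ by at most $2\sqrt{1-p_{g_n(x)}}\leq 2/\sqrt{2^n}$ in Euclidean norm; this is the entirety of the per-basis discrepancy between the actual output of $C'_n$ on input $\ket{x}\ket{0}\ket{-}$ and the ideal $(-1)^{f^s_n(x)}\ket{x}\ket{0}\ket{-}$.

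Propagating to the equal superposition $\ket{+}^{\otimes n}\ket{0}\ket{-}$ is then direct: because the per-basis errors live in mutually orthogonal $\ket{x}$ sectors, the squared Euclidean norm of the total error averages as $\frac{1}{2^n}\sum_x \|\mathrm{err}_x\|^2\leq 4/2^n$, so the actual post-oracle state lies within Euclidean distance $2/\sqrt{2^n}$ of the ideal post-oracle state. By \Cref{measurement}, applied to the final Hadamard layer and computational-basis measurement on the first register, the resulting output distribution differs in total variation by at most $4/\sqrt{2^n}=\sqrt{1/2^{n-4}}$ from the ideal Bernstein--Vazirani distribution, which is concentrated on $s$. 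This yields the claimed success probability of at least $1-\sqrt{1/2^{n-4}}$.

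The main obstacle is the careful handling of the uncomputation: unlike a standard ``compute--use--uncompute'' template where an approximate unitary can be reversed with only negligible residual error, here the phase kickback in step (iii) attaches a state-dependent sign to the ``bad'' branch of $C_{t(n)}\ket{g_n(x)}\ket{0}$, so $C_{t(n)}^\dagger$ does not exactly restore the ancilla to $\ket{0}$. Verifying that the leftover amplitude is nevertheless $O(\sqrt{1-p_y})$ (rather than something larger depending on the details of $\ket{\phi'_y}$) is the substantive check; once the good/bad branch decomposition of $C_{t(n)}\ket{y}\ket{0}$ is written out explicitly, the rest is routine bookkeeping.
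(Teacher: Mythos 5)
Your proof is correct, and it is the standard argument one would expect here: the paper itself does not prove \Cref{approxbv} (it only states it, deferring to the search-to-decision reduction of Irani et al.~\cite{iraniQuantumSearchToDecisionReductions2022}), and your construction --- coherent dilation of $C_{t(n)}$, phase kickback, uncomputation, orthogonality of the per-basis-state errors across $\ket{x}$ sectors, and \Cref{measurement} to convert the Euclidean distance $2/\sqrt{2^n}$ into a $4/\sqrt{2^n}=\sqrt{1/2^{n-4}}$ deviation in the output distribution --- is exactly the intended analysis and recovers the stated constant. The one point you rightly flag, that the kickback flips the sign only on the bad branch so $C_{t(n)}^\dagger$ leaves a residual of norm $2\sqrt{1-p_y}$ rather than restoring $\ket{0}$ exactly, is handled correctly.
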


A recent work by Irani \etal~\cite{iraniQuantumSearchToDecisionReductions2022} uses the above approximate Bernstein-Vazirani for a randomized search-to-decision reduction for $\QCMA$.

\QCKL*

\begin{proof}
    Let $L = (\Lyes, \Lno) \in \QCPi_2$. Then there exists a polynomial $\pL:\N \mapsto \N$ and polynomial-time uniform family of quantum circuits $\set{\VLn}_{n\in \N}$ such that  each $\VLn$ takes as input an instance $x \in \bit^n$, two proofs $y_1, y_2 \in \bit^{\pL(n)}$, and the following completeness and soundness guarantees hold:

    \begin{itemize}
        \item (Completeness)

        $x \in \Lyes \Rightarrow \forall y_1$ $\exists y_2$, $\Pr\Brackets{\VLn(x, y_1, y_2) = 1} \geq 1-\frac{1}{\pL(n)^4}$ (where $n=|x|$)

        \item (Soundness)

        $x \in \Lno \Rightarrow \exists y_1$ $\forall y_2$, $\Pr\Brackets{\VLn(x, y_1, y_2) = 1} \leq \frac{1}{\pL(n)^4}$ (where $n=|x|$)
    \end{itemize}

    \vspace{5pt}

    Consider the following promise language $L' = (\Lyes', \Lno')$, which is also characterized by the polynomial $\pL$ and family of quantum circuits $\set{\VLn}_{n\in \N}$ with the following completeness and soundness guarantees:

    \begin{itemize}
        \item (Completeness)

        $(x, y_1) \in \Lyes' \Rightarrow |y_1| = \pL(n) \text{ and }  \exists y_2$ s.t. $\Pr\Brackets{\VLn(x, y_1, y_2) = 1} \geq 1-\frac{1}{\pL(n)^4}$ (where $n=|x|$)

        \item (Soundness)

        $(x, y_1) \in \Lno' \Rightarrow |y_1| = \pL(n) \text{ and }  \forall y_2 \in \bit^{\pL(n)}$  $\Pr\Brackets{\VLn(x, y_1, y_2) = 1} \leq \frac{1}{\pL(n)^4}$ (where $n=|x|$)
    \end{itemize}

    The promise language $L'$ is in $\QCMA$. Therefore, using the (trivial) reduction from $L'$ to $\tqcmapp$, we get the following `yes' and `no' instances for $\tqcmapp$:

    \begin{itemize}
        \item $(x, y_1) \in \Lyes' \Rightarrow \brackets{\VLx, (x, y_1), 1/\pL(|x|)^4, 1-1/\pL(|x|)^4, \pL(|x|)} \in \tqcmappyes $
        \item $(x, y_1) \in \Lno' \Rightarrow \brackets{\VLx, (x, y_1), 1/\pL(|x|)^4, 1-1/\pL(|x|)^4, \pL(|x|)} \in \tqcmappno $
    \end{itemize}

    \vspace{5pt}

    Next, using the randomized reduction from $\QCMA$ to $\UQCMA$ (Theorem \ref{uqcma}), we get a family of verification circuits $\VLxuq$ and probability thresholds $p_1, p_2$ with the following guarantees:

    \begin{itemize}
        \item $(x, y_1) \in \Lyes' \Rightarrow \Pr\Brackets{\brackets{\VLxuq, (x, y_1), p_1, p_2, \pL(|x|)} \in \uqcmappyes}\geq 1/32\pL(|x|)^2$

        \item $(x, y_1) \in \Lno' \Rightarrow \Pr\Brackets{\brackets{\VLxuq, (x, y_1), p_1, p_2, \pL(|x|)} \in \uqcmappno} = 1 $
    \end{itemize}
    Here, the circuit $\VLxuq$ is a `random refinement' of $\VLx$, and $p_1, p_2$ are chosen from an appropriate distribution over $[1/\pL(|x|), 1-1/\pL(|x|)]$.

    \vspace{15pt}

    Suppose $\QCMA \subseteq \Bpoly$, then there exists a family of polynomial-sized circuits $\set{C_n}_{n\in \N}$ and a sequence of (classical) advice strings $\set{\alpha_n}_{n\in \N}$ such that they can decide any yes/no instance of $\TQCMAPP/\UQCMAPP$ with high accuracy. More formally, the following holds:

    \begin{itemize}
        \item If $\phi = (U, x, p_1, p_2, \ell) \in \uqcmappyes$, then $\Pr\Brackets{C_n\brackets{\ket{\alpha_n}, \ket{\phi}} = 1} \geq 1-1/2^{\ell}$
        \item If $\phi = (U, x, p_1, p_2, \ell) \in \uqcmappno$, then $\Pr\Brackets{C_n\brackets{\ket{\alpha_n}, \ket{\phi}} = 1} \leq 1/2^{\ell}$
    \end{itemize}

    Following the single query search-to-decision reduction of Irani \etal ~\cite{iraniQuantumSearchToDecisionReductions2022}, we will use $\set{C_n}_n$ to find a witness for $\UQCMAPP$ instances as follows.

    Let $\phi = (U, x, p_1, p_2, \ell) \in \uqcmappyes$ whose (unique) witness $d_{\phi} \in \bit^{\ell}$ we wish to find. Consider the function $f_{\phi} : \bit^\ell \mapsto \bit$ where $f_{\phi}(z) = \braket{z}{d_{\phi}} (\bmod ~2)$. We will use $\set{C_n}$ to approximate $f_{\phi}(\cdot)$, and then use $\approxbv$ to compute $d_{\phi}$.

    \vspace{5pt}

    Using $\phi$, for every $z\in \bit^\ell$ we can define a new $\uqcmapp$ instance $\phi_z = (U_z, x, p_1, p_2, \ell)$, where $$U_z\brackets{\ket{x}, \ket{d}} = U\brackets{\ket{x}, \ket{d}} \wedge \brackets{\braket{z}{d} (\bmod ~2)} $$

    Since $\phi \in \uqcmappyes$ with unique witness $d_{\phi}$, for any $z\in \bit^\ell$,
    $$f_{\phi}(z) = 1 \iff \phi_z \in \uqcmappyes (\text{with unique witness } d_{\phi})$$
    If $f_{\phi}(z) = 1$, then $\Pr\Brackets{C_{|\phi_z|}\brackets{\phi_z} = 1} \geq 1-1/2^{|z|}$, and if $f_{\phi}(z) = 0$, then $\Pr\Brackets{C_{|\phi_z|}\brackets{\phi_z} = 1} \leq 1/2^{|z|}$. Hence we can use $\approxbv$ (as defined in Lemma \ref{approxbv}) to compute the unique witness if it exists (after running $\approxbv$, we will also check that the output is indeed a witness of $\phi$).

    \vspace{10pt}

    Using the above observations, we can show that $L \in \QCSigma_2$. The verifier $\VLx'$ takes as input the instance $x$, a circuit $C$, string $y_1 \in \bit^{\pL(|x|)}$ and works as follows:
    \begin{enumerate}
        \item Consider the $\tqcmapp$ instance $\brackets{\VLx, (x, y_1), 1/\pL(|x|)^4, 1-1/\pL(|x|)^4, \pL(|x|)}$. The verifier $\VLx'$ first computes the $\uqcmapp$ instance $\phi = \brackets{\VLxuq, (x, y_1), p_1, p_2, \pL(|x|)}$. This instance is in $\uqcmappyes$ with probability at least $1/32\pL(|x|)^2$.
        \item \label{step:approxbv} Next, it uses the circuit $C$ and $\phi$ to run $\approxbv$. In particular, set $f^s_{|x|}$ as $f_{\phi}$ defined previously, $g_{|x|}$ as the trivial reduction from $z$ to $\phi_z$ and $h_{|\phi_z|}(\psi) = 1$ if $\psi \in \uqcmappyes$, $0$ if $\psi \in \uqcmappno$ and undefined otherwise. The circuit $C$ takes as advice the non-uniform circuit family for $\uqcmapp$ and runs $\approxbv$. This results in a string $y_2$.
        \item \label{step:vlx} Finally, it simulates $\VLx(x, y_1, y_2)$ and outputs $1$ iff $\VLx$ does.
    \end{enumerate}

    \vspace{5pt}

    For any $x\in \Lyes$, there exists a circuit $C$ such that for all $y_1 \in \bit^{\pL(|x|)}$,
    \[
        \Pr\Brackets{\VLx'(x, C, y_1) = 1} \geq \frac{1}{32\pL(|x|)^2}\brackets{1-\frac{4}{2^{\pL(|x|)/2}}} \brackets{1-\frac{1}{\pL(|x|)^4}}.
    \]
    On the other hand, if $x\in \Lno$, then there exists a string $y_1$ such that for any string $y_2$ computed in Step \ref{step:approxbv}, $\VLx'$ outputs $1$ with probability at most $1/\pL(|x|)^4$.
\end{proof}

\subsubsection{Immediate Implications of Theorem \ref{thm:qckl}}
We now discuss some immediate implications of the quantum-classical variant of Karp-Lipton Theorem. 

\paragraph{$\UQCMA \subseteq \Bpoly \implies \QCPH$ collapse}

Notice in the proof of Theorem \ref{thm:qckl} that the search-to-decision reduction only uses the non-uniform circuit for deciding $\UQCMAPP$ as advice. In particular this means that we only need the assumption $\UQCMA \subseteq \Bpoly$ for the above proof to work. This leads to the following corollary:

\begin{corollary}
    If $\UQCMA \subseteq \Bpoly$, then $\QCPi_2 \subseteq \QCSigma_2$, and therefore the quantum-classical polynomial hierarchy collapses to the second level.
\end{corollary}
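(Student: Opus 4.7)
The plan is to revisit the proof of Theorem~\ref{thm:qckl} and observe that the non-uniform advice circuit family $\{C_n\}_{n\in\N}$ is in fact only ever invoked on instances that are promised to lie in $\UQCMAPP$, so the stronger assumption $\QCMA\subseteq\Bpoly$ can be relaxed to $\UQCMA\subseteq\Bpoly$ without any change to the argument. Concretely, in the verifier $\VLx'$ constructed in that proof, the circuit $C$ is only applied inside Step~\ref{step:approxbv}: namely, to instances of the form $\phi_z=(U_z,x,p_1,p_2,\pL(|x|))$ derived from the $\uqcmapp$ instance $\phi=\bigl(\VLxuq,(x,y_1),p_1,p_2,\pL(|x|)\bigr)$ obtained after applying the Aharonov--Ben-Or--Brand\~{a}o--Sattath isolation reduction (Theorem~\ref{uqcma}). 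By construction, each $\phi_z$ either lies in $\uqcmappyes$ (when $f_\phi(z)=1$, with the same unique witness $d_\phi$ as $\phi$) or in $\uqcmappno$ (when $f_\phi(z)=0$), so the promise for $\UQCMAPP$ is always met.

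Given this observation, the proof proceeds exactly as before. First, I would show $\QCPi_2\subseteq\QCSigma_2$ under the hypothesis $\UQCMA\subseteq\Bpoly$: take arbitrary $L\in\QCPi_2$, form the promise problem $L'$ exactly as in the proof of Theorem~\ref{thm:qckl}, note that $L'\in\QCMA$, and apply the Aharonov--Ben-Or--Brand\~{a}o--Sattath randomized reduction to an instance $\phi\in\UQCMAPP$. Using the assumed non-uniform $\Bpoly$ advice family for $\UQCMAPP$ in place of the previous $\TQCMAPP$ advice family, one can run the Irani--Natarajan--Nirkhe--Rao--Yuen search-to-decision procedure (via approximate Bernstein--Vazirani, Lemma~\ref{approxbv}) to recover the unique witness $d_\phi$, from which a candidate $y_2$ is extracted; the verifier $\VLx'$ then simulates $\VLx$ on $(x,y_1,y_2)$. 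The completeness/soundness analysis is identical to the one in Theorem~\ref{thm:qckl}: in the YES case one obtains acceptance probability at least $\tfrac{1}{32\pL(|x|)^2}\bigl(1-4/2^{\pL(|x|)/2}\bigr)\bigl(1-1/\pL(|x|)^4\bigr)$ for an appropriate $C$, and in the NO case the acceptance probability is bounded above by $1/\pL(|x|)^4$, yielding an inverse polynomial gap.

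Having established $\QCPi_2\subseteq\QCSigma_2$, I would invoke Theorem~\ref{thm:collapse} (the collapse theorem for $\QCPH$): since $\QCSigma_2\subseteq\QCPi_2$ always holds by complementation, the containment $\QCPi_2\subseteq\QCSigma_2$ gives $\QCSigma_2=\QCPi_2$, which forces $\QCPH=\QCSigma_2$. There is no real obstacle here beyond carefully checking the structural claim that the advice is only consulted on $\UQCMAPP$-promised inputs; the $\TQCMAPP\to\UQCMAPP$ reduction is the only place one might worry, but it occurs \emph{before} the advice circuit is queried, so it does not require $\TQCMAPP\in\Bpoly$.
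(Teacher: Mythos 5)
Your proposal is correct and follows essentially the same route as the paper: the paper's own justification for this corollary is precisely the observation that in the proof of Theorem~\ref{thm:qckl} the non-uniform advice circuit is only ever queried on instances satisfying the $\UQCMAPP$ promise (the $\TQCMAPP\to\UQCMAPP$ isolation step happens before any advice query), so $\UQCMA\subseteq\Bpoly$ suffices, after which Theorem~\ref{thm:collapse} yields the full collapse. Your additional care in noting that the $\phi_z$ instances inherit the promise (conditioned on the isolation succeeding, which is already accounted for in the completeness bound) matches the paper's analysis.
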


\paragraph{A stronger collapse}
In the classical setting, Sengupta observed that if $\NP \subseteq \P/\poly$, then $\PH$ collapses to $\sptwo$ (this observation is first mentioned in the work of Cai~\cite{C01}). The complexity class $\sptwo$~\cite{RS98,Can96} is contained in $\Sigma_2 \cap \Pi_2$, and therefore gives a `stronger' Karp-Lipton theorem in the classical setting. Following our proof above, we get a similar result for the quantum-classical setting --- $\QCPH$ collapses to $\qcsptwo$ if $\QCMA \subseteq \Bpoly$. Below, we define the complexity class $\qcsptwo$, and state the result. 

\begin{definition}[Quantum-Classical $\sptwo$]
    \label{qcsptwo}
    Let $A=(\ayes,\ano)$ be a promise problem. We say that $A$ is in $\qcsptwo(c, s)$ for polynomial-time computable functions $c, s: \natural \mapsto [0, 1]$ if there exists a polynomially bounded function $p:\natural\mapsto\natural$ and a polynomial-time uniform family of quantum circuits $\{V_n\}_{n \in \natural}$ such that for every $n$-bit instance $x$, $V_n$ takes in classical proofs ${y_1}\in \set{0,1}^{p(n)}, {y_2}\in \set{0,1}^{p(n)}$ and outputs a single qubit, {such that:}
    \begin{itemize}
        \item Completeness: $x\in \ayes$ $\Rightarrow$ $\exists y_1 \forall y_2$, $\Pr[V_n(x, y_1, y_2) = 1] \geq c$.
        \item Soundness: $x\in \ano$ $\Rightarrow$ $\exists y_2 \forall y_1$, $\Pr[V_n(x, y_1, y_2) = 1] \leq s$.
    \end{itemize}

    \begin{equation}
    \text{Define } \; \qcsptwo := \bigcup_{{ c - s \in \Omega(1/\poly(n))}} \qcsptwo(c, s).
    \end{equation}
    
\end{definition}

\begin{theorem}\label{thm:qcklsptwo}
    If $\UQCMA \subseteq \Bpoly$, then $\QCPH \subseteq \qcsptwo$.
\end{theorem}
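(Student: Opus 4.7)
The plan is to combine the collapse theorem (\Cref{thm:collapse}) with the construction already developed in the proof of \Cref{thm:qckl}, then observe that the $\QCSigma_2$ protocol built there actually has a stronger quantifier structure that places it in $\qcsptwo$. First, by the preceding corollary (the proof of \Cref{thm:qckl} only uses non-uniform advice for $\UQCMAPP$, not for $\TQCMAPP$), $\UQCMA \subseteq \Bpoly$ already suffices to conclude $\QCPi_2 \subseteq \QCSigma_2$, so \Cref{thm:collapse} yields $\QCPH = \QCPi_2$. It therefore suffices to exhibit $\QCPi_2 \subseteq \qcsptwo$.

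Given $L=(\Lyes,\Lno)\in \QCPi_2$ with verifier family $\{\VLn\}$, I would construct a $\qcsptwo$ verifier $V^\ast$ as follows. The ``$\exists$'' proof $y_1$ of $\qcsptwo$ is the $\poly$-size classical advice string $\alpha$ for the assumed $\Bpoly$ decider of $\UQCMAPP$, while the ``$\forall$'' proof $y_2$ is a candidate challenge string $w_1 \in \bit^{\pL(|x|)}$, playing the role of the universally quantified first proof of the original $\QCPi_2$ verifier. On input $(x,\alpha,w_1)$, $V^\ast$ runs the pipeline of \Cref{thm:qckl}: it forms the induced $\TQCMAPP$ instance $(\VLx,(x,w_1),\ldots)$, applies the Aharonov--Ben-Or--Brand\~ao--Sattath isolation reduction to obtain a $\UQCMAPP$ instance $\phi$, uses the $\Bpoly$ circuit with advice $\alpha$ as the oracle inside the Irani \etal~$\UQCMA$ search-to-decision reduction to extract a candidate witness $w_2$, and finally accepts iff $\VLx(x,w_1,w_2)=1$.

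For completeness, when $x \in \Lyes$ the honest $\alpha$ is the true $\Bpoly$ advice; for every $w_1$ the pair $(x,w_1)$ is a YES instance of the induced $\QCMA$ problem, so the isolation and search-to-decision steps succeed with probability $\Omega(1/\poly(n))$, as already analysed in \Cref{thm:qckl}. Thus $\exists\alpha\,\forall w_1\,\Pr[V^\ast\text{ accepts}] \geq \Omega(1/\poly(n))$. Soundness is where the $\qcsptwo$ shape really emerges: when $x \in \Lno$, the $\QCPi_2$ soundness condition supplies a specific $w_1^\star$ with $\Pr[\VLx(x,w_1^\star,w_2)=1]\leq 1/\pL(n)^4$ for \emph{every} $w_2$. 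The ``$\exists y_2$'' no-prover simply sets $y_2 := w_1^\star$; then regardless of the $\alpha$ supplied as $y_1$, whatever witness $w_2$ the extraction pipeline produces is rejected by $\VLx$ with overwhelming probability. Crucially, the soundness argument \emph{does not depend on $\alpha$ decoding anything correctly}, which is exactly what makes the same protocol fit $\qcsptwo$ rather than only $\QCSigma_2$.

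The remaining details are routine: confirm the completeness--soundness gap is $\Omega(1/\poly(n))$ (the $\Omega(1/\pL(n)^2)$ completeness from the isolation step dominates the $O(1/\pL(n)^4)$ soundness), and verify that $V^\ast$ is a uniform poly-time quantum circuit, since the isolation lemma, the search-to-decision reduction, and the $\Bpoly$ circuit family are all uniform once $\alpha$ is given. No substantive new obstacle appears beyond careful quantifier bookkeeping: the heavy technical lifting was already done in \Cref{thm:qckl,thm:collapse}, and the key conceptual point is simply that the no-prover's ability to commit to the adversarial $w_1^\star$ \emph{before} seeing $\alpha$ makes the protocol symmetric in the $\sptwo$ sense.
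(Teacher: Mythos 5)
Your proposal is correct and follows essentially the same route the paper intends: the paper states \Cref{thm:qcklsptwo} without a separate proof precisely because, as you observe, the $\QCSigma_2$ protocol in the proof of \Cref{thm:qckl} already has the $\sptwo$ quantifier shape (the adversarial $w_1^\star$ in the NO case defeats every circuit/advice string, not just the honest one), and the collapse to level two under $\UQCMA\subseteq\Bpoly$ is the stated corollary. The only cosmetic difference is that you place the $\Bpoly$ advice string $\alpha$ in the existential slot where the paper places the circuit $C$; these are interchangeable classical descriptions of the required non-uniformity.
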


In the classical setting, the Karp-Lipton collapse was improved to $\ZPP^{\NP}$ ~\cite{BCGKT96,KobWat98}. Later, Cai showed that $\sptwo \subseteq \ZPP^{\NP}$, and the collapse was further improved to `oblivious' $\sptwo$ ($\osptwo$)\cite{CR06}. 

In the quantum setting, we don't have an analogue for Cai's result, and the following are some open questions:
\begin{itemize}
    \item Does $\QCMA \subseteq \Bpoly$ imply that $\QCPH \subseteq \ZPP^{\QCMA}$? 
    \item Is $\qcsptwo$ contained in $\ZPP^{\QCMA}$?
\end{itemize}

\paragraph{Implications for $\QCMA$ and $\QCAM$}

    Arvind \etal~\cite{AKSS95} showed that if $\NP$ languages can be decided using polynomial sized circuits, then $\AM = \MA$. Using our techniques, one can obtain a similar implication for $\QCMA$ and $\QCAM$. This proof does not follow directly by combining the classical result with our $\QCPH$ collapse result. However, our approach for the quantum-classical Karp-Lipton theorem can be modified to get this result. Since this proof is very similar, we state the result here.

    First, we define the complexity class $\QCAM$. 

    \begin{definition}[$\QCAM$]
        Let $A=(\ayes,\ano)$ be a promise problem. We say that $A \in \QCAM(c,s)$ for polynomial-time computable functions $c, s: \natural \mapsto [0, 1]$ if there exists a polynomially bounded function $p:\natural\mapsto\natural$ and a polynomial-time uniform family of quantum circuits $\{V_n\}_{n \in \natural}$ such that for every $n$-bit input $x$, $V_n$ takes in classical proof ${y}\in \set{0,1}^{p(n)}$ and a random string $z \in \set{0,1}^{p(n)}$ and outputs a single qubit, {such that:}
        \begin{itemize}
          \item Completeness: $x\in \ayes$ $\Rightarrow$ $\Pr_z[\exists y_1 \text{ s.t. } \Pr[V_n(x, z, y_1) = 1] \geq 1-\frac{1}{2^n}] \geq c$.
          \item Soundness: $x\in \ano$ $\Rightarrow$ $\Pr_z[\exists y_1 \text{ s.t. } \Pr[V_n(x, z, y_1) = 1] \geq 1-\frac{1}{2^n}] \leq s$.
        \end{itemize}
        \begin{equation}
            \text{Define } \; \QCAM := \bigcup_{{c' - s' \in \Omega(1/\poly(n))}} \QCAM(c, s).
        \end{equation}
    \end{definition}

    \begin{theorem}[Quantum-Classical analogue of \cite{AKSS95}]
        \label{thm:qcma=qcam}
        If $\uqcma \subseteq \Bpoly$, then $\QCMA = \QCAM$.
    \end{theorem}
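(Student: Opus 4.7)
The plan is to show $\QCAM \subseteq \QCMA$; the reverse inclusion is immediate since a $\QCMA$ verifier may ignore the random string. Let $L=(\Lyes,\Lno)\in\QCAM$ with verifier family $\{V_n\}$. First I would apply standard parallel-repetition amplification to $\QCAM$ so that in the YES case, for a $1-\negl(n)$ fraction of $z$'s, some $y_1$ makes $V_n(x,z,y_1)$ accept with probability $\geq 1-1/2^n$, while in the NO case, for a $1-\negl(n)$ fraction of $z$'s, every $y_1$ is rejected by $V_n(x,z,y_1)$ with probability $\geq 1-1/2^n$. The key observation is that, for each fixed $(x,z)$, the question ``does there exist $y_1$ with $\Pr[V_n(x,z,y_1)=1]\geq 1-1/2^n$?'' is precisely a $\QCMA$ promise problem, and hence reducible to $\uqcmapp$ through the randomized isolation reduction of~\cite{ABOBS22} (Theorem~\ref{uqcma}).

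The $\QCMA$ verifier $V'$ for $L$ then operates as follows. On input $x$ and classical proof $\alpha$ (intended as the $\Bpoly$ advice string for $\UQCMAPP$ of the appropriate length), $V'$ samples a random string $z$, forms the $\uqcmapp$ instance $\phi_{x,z}$ via the isolation reduction, and then---exactly as in the proof of Theorem~\ref{thm:qckl}---uses $\alpha$ as advice for a non-uniform $\UQCMAPP$ decider and runs the approximate Bernstein--Vazirani procedure (Lemma~\ref{approxbv}) of~\cite{iraniQuantumSearchToDecisionReductions2022} to extract a candidate witness $y_1$. Finally $V'$ simulates $V_n(x,z,y_1)$ and accepts iff $V_n$ does.

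For completeness, if $x\in\Lyes$, honest Merlin sends the true $\Bpoly$ advice. Conditioned on a good $z$ (which happens with probability $\geq 1-\negl(n)$), the isolation step yields a $\UQCMAPP$ yes-instance with probability $\Omega(1/\poly(n))$; the $\Bpoly$ circuit answers correctly; the search-to-decision step returns the unique witness with high probability; and the final $V_n$ check accepts with probability $\geq 1-1/2^n$. Overall $V'$ accepts with probability $\Omega(1/\poly(n))$. For soundness, if $x\in\Lno$, then no matter what $\alpha$ Merlin sends, for a $1-\negl(n)$ fraction of $z$'s the direct check $V_n(x,z,y_1)$ rejects any candidate $y_1$ with probability $\geq 1-1/2^n$, so $V'$ accepts with probability $\leq \negl(n)$. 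The resulting $1/\poly(n)$ completeness--soundness gap places $L$ in $\QCMA$.

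The main obstacle is that the $\QCAM$ definition used here imposes the high-acceptance threshold $1-1/2^n$ only on the ``witness-exists'' branch, so a priori, in a NO instance the $V_n$-acceptance probability of a $y_1$ returned by the search-to-decision reduction is unconstrained; the amplification step above is what addresses this and is essentially the only place where extra care is needed beyond reusing the pipeline from Theorem~\ref{thm:qckl}. A secondary subtlety is that a dishonest Merlin could supply incorrect advice $\alpha$, but this is neutralized by the final direct call to $V_n$, which serves as a ground-truth check independent of $\alpha$.
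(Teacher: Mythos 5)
Your proposal is correct and takes essentially the approach the paper intends: the paper does not actually write out a proof of \Cref{thm:qcma=qcam}, stating only that it is obtained by modifying the Karp--Lipton argument of \Cref{thm:qckl}, and your protocol (Merlin sends the non-uniform $\UQCMAPP$ decider, Arthur samples $z$ himself, runs the \cite{ABOBS22} isolation and the \cite{iraniQuantumSearchToDecisionReductions2022} search-to-decision reduction to extract $y_1$, and uses the direct call to $V_n(x,z,y_1)$ as a ground-truth check immune to a dishonest advice string) is exactly that modification, mirroring the classical argument of \cite{AKSS95}. One small caveat: the paper's stated definition of $\QCAM$ only asserts that in the NO case most $z$ admit no witness accepted with probability at least $1-1/2^n$, so your amplification claim that most $z$ reject \emph{every} $y_1$ with high probability implicitly assumes the standard two-sided inner gap; this is a defect of the stated definition rather than of your argument.
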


\section{Error reduction for \QPHpure}\label{scn:error}

We next study (weak) error reduction for \QPHpure (pure proofs). For this, we first require an asymmetric generalization of the Product Test \cite{mintertConcurrenceMixedMultipartite2005,harrowTestingProductStates2013a}, given in Section~\ref{sscn:apt}. We then give one-sided error reduction results in Section~\ref{sscn:errorOneSided}. 

\subsection{Asymmetric product test}\label{sscn:apt}

\begin{figure}[t]
  \[ \Qcircuit @C=1.5em @R=0.5em {
       \lstick{\ket{0}}    &\gate{H} & \ctrl{1}                       & \gate{H} &\meter\\
       \lstick{\ket{\psi}} &\qw      & \multigate{1}{\textup{SWAP}} & \qw      &\qw\\
       \lstick{\ket{\phi}} &\qw      & \ghost{\textup{SWAP}}     &\qw       &\qw\\
  }\]
  \caption{The SWAP test, whose output is the measurement result on the first wire.}
  \label{fig:SWAP}
  \end{figure}
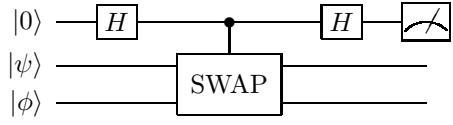

We first give a generalization of the Product Test \cite{mintertConcurrenceMixedMultipartite2005,harrowTestingProductStates2013a}, which we denote the Asymmetric Product Test (APT), stated as follows:
\begin{enumerate}
    \item The input is $\ket{\psi}\in\C^{d_1}\otimes\cdots\otimes\C^{d_n}$ in register $A$, and $\ket{\phi}\in \complex^{d^m}$ in register $B$, where for brevity $d:=d_1\cdots d_n$. We think of $B$ as encoding $m$ copies of $A$.
    \item Choose $(i,j)\in[n]\times [m]$ uniformly at random.
    \item Run the SWAP Test (\Cref{fig:SWAP}) between the $i$th register of $A$, and in $B$, the $i$th register of the $j$th copy of $A$.
    \item Accept if the SWAP Test outputs $0$, reject otherwise.
\end{enumerate}
In fact, above, one can also assume that $\ket{\phi}$ in the APT is potentially entangled across two registers $B$ and $C$, as we do for the main lemma of this section, given below.

\begin{lemma}[Asymmetric Product Test (APT)]\label{l:apt}
  Define $d=d_1\cdots d_n$.
  Consider $\ket{\phi}_{BC}\in\C^{d^m}\otimes\C^{d'}$ for some $d'>0$.
  Suppose
  \begin{equation}
    \max_{\ket{\psi}:= \ket{\psi_1}\otimes\cdots\otimes\ket{\psi_n}\in\C^d}\quad\bra\phi_{BC}[(\ketbra{\psi}{\psi}^{\otimes m})_B\otimes I_C]\ket\phi_{BC}=1-\epsilon
  \end{equation}
  for $\epsilon\geq 0$.
  Then, given the state $\ket{\eta}_{ABC}:=\ket{\psi}_A\otimes \ket{\phi}_{BC}$, the APT accepts with probability at most $1-\epsilon/2mn$.
\end{lemma}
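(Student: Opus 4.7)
The plan is to compute the APT's acceptance probability explicitly and reduce the required bound to an operator inequality between commuting projectors. Write $\ket{\psi}=\ket{\psi_1}\otimes\cdots\otimes\ket{\psi_n}$ for the product state attaining the maximum in the hypothesis, so that $\ket{\eta}_{ABC}=\ket{\psi}_A\otimes\ket{\phi}_{BC}$. For any fixed pair $(i,j)\in[n]\times[m]$, since $A$ is in tensor product with $BC$ and $\ket{\psi}_A$ is itself a product state, the reduced state on $A_iB_{i,j}$ factors as $\ketbra{\psi_i}{\psi_i}_{A_i}\otimes\rho_{B_{i,j}}$, where $\rho_{B_{i,j}}:=\Tr_{B\setminus B_{i,j},\,C}(\ketbra{\phi}{\phi})$. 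Applying the swap trick $\Tr((\sigma\otimes\tau)\,\mathrm{SWAP})=\Tr(\sigma\tau)$, the SWAP test on this product state accepts with probability $\tfrac{1}{2}(1+p_{ij})$, where $p_{ij}:=\bra{\psi_i}\rho_{B_{i,j}}\ket{\psi_i}$. Averaging over the uniform choice of $(i,j)$,
\begin{equation}
  p_{\mathrm{acc}} \;=\; \tfrac{1}{2} + \tfrac{1}{2nm}\sum_{i,j}p_{ij} \;=\; 1 - \tfrac{1}{2nm}\sum_{i,j}(1-p_{ij}),
\end{equation}
so it suffices to establish $\sum_{i,j}(1-p_{ij})\geq \epsilon$.

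To that end, define the projectors $P_{ij}:=\ketbra{\psi_i}{\psi_i}_{B_{i,j}}\otimes I_{B\setminus B_{i,j}}$ acting on $B$. They commute pairwise (disjoint support), and their product is $\prod_{i,j}P_{ij}=(\ketbra{\psi}{\psi})_B^{\otimes m}$. Setting $\rho_B:=\Tr_C(\ketbra{\phi}{\phi}_{BC})$, the hypothesis rewrites as $\Tr\bigl(\rho_B(I-\prod_{i,j}P_{ij})\bigr)=\epsilon$, and clearly $\Tr(\rho_B P_{ij})=p_{ij}$. The main (though modest) obstacle is the operator inequality
\begin{equation}
  \sum_{i,j}(I - P_{ij}) \;\geq\; I - \prod_{i,j}P_{ij},
\end{equation}
which I plan to verify by simultaneous diagonalization of the commuting $\{P_{ij}\}$: every joint eigenvector $\ket{e}$ satisfies $P_{ij}\ket{e}=a_{ij}\ket{e}$ with $a_{ij}\in\{0,1\}$; the LHS then evaluates on $\ket{e}$ to the integer $\#\{(i,j):a_{ij}=0\}$, while the RHS evaluates to $1-\prod_{i,j}a_{ij}\in\{0,1\}$, which never exceeds the former.

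Taking $\Tr(\rho_B\,\cdot\,)$ of both sides of the operator inequality yields
\begin{equation}
  \sum_{i,j}(1-p_{ij}) \;=\; \sum_{i,j}\Tr\bigl(\rho_B(I-P_{ij})\bigr) \;\geq\; \Tr\bigl(\rho_B(I-\textstyle\prod_{i,j}P_{ij})\bigr) \;=\; \epsilon,
\end{equation}
and substituting back into the expression for $p_{\mathrm{acc}}$ above gives $p_{\mathrm{acc}}\leq 1-\epsilon/(2nm)$, as claimed. The cleanness of this argument relies crucially on two things: (i) $\ket{\psi}_A$ being a \emph{product} state, which makes the single-subsystem marginal on $A_i$ pure and reduces the SWAP test acceptance probability to $\bra{\psi_i}\rho_{B_{i,j}}\ket{\psi_i}$, and (ii) the projectors $P_{ij}$ commuting, which makes the required operator inequality an elementary check in the simultaneous eigenbasis.
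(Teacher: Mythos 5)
Your proof is correct, and it reaches the paper's bound by a genuinely different packaging of the same underlying counting. The paper expands $\ket{\phi}_{BC}$ explicitly in a product basis adapted to $\ket{\psi}$ (with $\ket{\alpha_{i1}}:=\ket{\psi_i}$), writes out the full pre-measurement state of the SWAP test, verifies that all cross terms vanish, and then bounds the acceptance probability amplitude-by-amplitude, the key observation being that every $\abs{a_x}^2$ with $x\ne 1^{mn}$ picks up an average coefficient of at most $\frac{2mn-1}{2mn}$ because at least one pair $(i,j)$ has $x_{ij}\ne 1$. You instead work at the level of reduced density matrices: the swap trick gives $\Pr[\text{accept}\mid i,j]=\frac12(1+p_{ij})$ with $p_{ij}=\Tr(\rho_B P_{ij})$, which is precisely the paper's quantity $\sum_{x:\,x_{ij}=1}\abs{a_x}^2$, and the remaining combinatorial content is isolated as the commuting-projector union bound $\sum_{ij}(I-P_{ij})\ge I-\prod_{ij}P_{ij}$, verified in the joint eigenbasis. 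The two arguments are equivalent in substance, but yours buys a real simplification: by only ever looking at the reduced state of the two swapped registers you never need the cross-term analysis that occupies roughly half the paper's proof, and the ``at least one local test fails'' logic is made explicit as an operator inequality rather than being buried in a coefficient estimate. The paper's more explicit computation does exhibit the exact post-SWAP state, but that extra information is not needed for the lemma. One small presentational point: like the paper, you take $\ket{\psi}$ to be the maximizer in the hypothesis; it is worth noting that the same argument gives the bound for an arbitrary product state in register $A$, since its overlap is then at most $1-\epsilon$ and your inequality $\sum_{ij}(1-p_{ij})\ge\epsilon$ only improves.
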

\begin{proof}
  Let $\{\ket{\alpha_{ik}}\}_{k\in [d_i]}$ be an orthonormal basis of $\C^{d_i}$ with $\ket{\alpha_{i1}}:=\ket{\psi_i}$, and define index set
  \begin{equation}
   X = \{(x_{ij})_{i\in[n],j\in[m]} \mid \forall ij:x_{ij} \in [d_i] \}.
  \end{equation}
  Then, rewrite $\ket\phi_{BC}$ in the $\set{\ket{\alpha_{ik}}}_k$ bases to obtain:
  \begin{equation}
    \ket\phi = \sum_{x\in X} a_x \left(\bigotimes_{ij}\ket{\alpha_{i,x_{ij}}}\right)_B\ket{\gamma_x}_C
  \end{equation}
  for some states $\ket{\gamma_x}_C$.
  Without loss of generality, consider the swap test between the first register of $A$, $\complex^{d_1}$ (which contains $\ket{\psi_1}$), and the first copy of $\complex^{d_1}$ in $B$.
  Just prior to the final measurement in the test, we have the state $\ket{\eta'}_{SABC}$ given by (where $S$ encodes the control qubit for SWAP in the SWAP test)
  \begin{align}
    & \sum_{x\in X} a_x\left( \frac12 \ket{0}_S\bigl(\ket{\psi_1}\ket{\alpha_{1,x_{11}}} + \ket{\alpha_{1,x_{11}}}\ket{\psi_1}\bigr) + \frac12 \ket{1}_S\bigl(\ket{\psi_1}\ket{\alpha_{1,x_{11}}} - \ket{\alpha_{1,x_{11}}}\ket{\psi_1}\bigr) \right)\otimes \bigotimes_{i\ne 1 \text{ or }j\neq 1} \ket{\alpha_{i,x_{ij}}}\ket{\gamma_x}.\nonumber\\
    &=: \sum_{x\in X} a_x\ket{\eta'_x}_{SABC}.
  \end{align}
  We now show that the cross terms of $\braket{\eta'}{\eta'}$ vanish, as $\braket{\eta'_x}{\eta'_y}=0$ with $x\ne y$, where $x,y\in X$:
  (1) If $x_{ij} \ne y_{ij}$ for $(i,j)\ne (1,1)$, this follows immediately from orthonormality of the basis sets $\{\ket{\alpha_{ik}}\}_{k\in [d_i]}$.
  (2) The only remaining case is $x_{11}\ne y_{11}$.
  Without loss of generality, $y_{11}\ne 1$.
  Then $\ket{\alpha_{1,y_{11}}}$ is orthogonal to $\ket{\psi_1}$, again by choice of our basis set.
  Hence, $\braket{\psi_1,\alpha_{1,x_{11}}}{\alpha_{1,y_{11}},\psi_1}=0$.
  Since trivially $\braket{\psi_1,\alpha_{1,x_{11}}}{\psi_1,\alpha_{1,y_{11}}}=0$, we again have $\braket{\eta'_x}{\eta'_y}=0$.

  As for the non-cross-terms, we first have again from our basis choice that for any $x\in X$,
  \begin{equation}
    \bra{\eta'_x}(\ketbra{0}{0}_S)\ket{\eta'_x} = \begin{cases}
      1, & \text{if }x_{11}=1\\
      \frac12, & \text{if }x_{11}\ne 1
    \end{cases}.
  \end{equation}
  Recall now that the APT selects $i\in[n]$ and $j\in[m]$ uniformly at random and does a SWAP test between $\ket{\psi_{i}}$ (the $i$th register of $A$) and the $j$th copy of the $i$ register of $B$.
  Thus, conditioned on the APT randomly choosing $(i,j)=(1,1)$, we may bound its acceptance probability as
  \begin{equation}
    \Pr[\APT(\ket\eta)=1\mid i=1,j=1] = \bra{\eta'}(\ketbra{0}{0}_S)\ket{\eta'} = \sum_{\substack{x\in X\\\text{s.t. }x_{11}=1}} \abs{a_x}^2 + \frac12\sum_{\substack{x\in X\\\text{s.t. }x_{11}\neq1}} \abs{a_x}^2.
  \end{equation}
  By symmetry, an identical argument holds for any pair $(i,j)$, and so
  \begin{align}
    \Pr[\APT(\ket\eta)=1] &= \frac1{mn}\sum_{ij}\left(\sum_{\substack{x\in X\\\text{s.t. }x_{ij}=1}} \abs{a_x}^2 + \frac12\sum_{\substack{x\in X\\\text{s.t. }x_{ij}\ne 1}} \abs{a_x}^2\right) \\
    &\le \abs{a_{1^{mn}}}^2 + \frac{2mn-1}{2mn}\sum_{x\in X\setminus\{1^{mn}\}}\abs{a_x}^2\\
    &= 1-\frac{\epsilon}{2mn}.
  \end{align}
\end{proof}

\subsection{One-sided error reduction}\label{sscn:errorOneSided}

With \Cref{l:apt} (APT) in hand, we now show one-sided error reduction for $\QPipure_i$, which suffices to obtain statements for all desired classes subsequently in \Cref{thm:onesided}.
For this, we define classes $\QSigmapureSEP_i$ and $\QPipureSEP_i$ as identical to $\QSigmapure$ and $\QPipure$, respectively, except the measurement POVM of the verifier in the YES case must additionally be a separable operator relative to the cuts between each of the $i$ proofs $\ket{\psi_1}$ to $\ket{\psi_i}$.
(This is analogous to $\QMAt$ versus $\QMAtSEP$ \cite{harrowTestingProductStates2013a}.)

\begin{lemma}[One-sided $\QPipure_i$ amplification]\label{l:qpi}
If $i$ is even, then
\begin{equation}
  \QPipure_i(c, s) \subseteq \QPipureSEP_i\left(1-\frac{1}{e^n}, 1-\frac{1}{np(n)^2}\right),
\end{equation}
for all functions $c$ and $s$ such that $c - s \geq$ 1/p(n) for some polynomial $p$.
\end{lemma}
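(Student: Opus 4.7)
The plan is to construct a new $\QPipureSEP_i$-verifier $V'$ that combines the asymmetric product test (APT, Lemma~\ref{l:apt}) for structure-checking with parallel repetition of the original verifier $V$ for completeness amplification. Set $m:=2np(n)^2$; $V'$ receives $\ket{\psi_1},\ldots,\ket{\psi_{i-1}}$ and a large last proof $\ket{\Psi_i}$ which an honest final prover sets to $(\ket{\psi_1}\otimes\cdots\otimes\ket{\psi_i})^{\otimes m}$. $V'$ flips a fair coin: with probability $1/2$ it runs the APT with $A$ holding the first $i-1$ input proofs and $B$ holding the first $i-1$ factors of each of the $m$ copies inside $\ket{\Psi_i}$ (and $C$ the remaining $i$-th factors); with probability $1/2$ it runs $V$ in parallel $m$ times (the $k$-th run using the $k$-th copy inside $\ket{\Psi_i}$) and accepts iff at least $(c+s)m/2$ runs accept. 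In the YES case the honest $\ket{\Psi_i}$ passes the APT deterministically and the $m$ independent honest copies let Chernoff give parallel-rep acceptance $\geq 1-e^{-m(c-s)^2/2}\geq 1-e^{-n}$ (using $m\geq 2n/(c-s)^2$), so $V'$ accepts with probability $\geq 1-1/e^n$.

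For separability of $V'$'s acceptance POVM across the cuts $[\ket{\psi_1}]|\cdots|[\ket{\psi_{i-1}}]|[\ket{\Psi_i}]$, the parallel-repetition POVM acts nontrivially only on $\ket{\Psi_i}$ and is therefore trivially a product across these cuts; the APT POVM is a uniform mixture of SWAP-test acceptance projectors $\Pi_{\mathrm{sym}}^{A_j,B_{(k,j)}}\otimes I_{\mathrm{rest}}$, and the Haar-integral decomposition $\Pi_{\mathrm{sym}}=\binom{d+1}{2}\int d\phi\,\ketbrab{\phi}^{\otimes 2}$ exhibits $\Pi_{\mathrm{sym}}$ as a convex mixture of product operators across its two registers, so tensoring with identity preserves this product structure across all proof cuts and hence $V'\in\QPipureSEP_i$.

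For the NO case, fix the existential $\ket{\psi_1}$ witnessing $V$'s NO-soundness and set the later honest existentially quantified proofs according to $V$'s NO strategy. Let $\ket{\psi^{\mathrm{given}}}:=\ket{\psi_1}\otimes\cdots\otimes\ket{\psi_{i-1}}$ and $\tilde\epsilon:=1-\bra{\Psi_i}[(\ketbrab{\psi^{\mathrm{given}}}^{\otimes m})_B\otimes I_C]\ket{\Psi_i}$. If $\tilde\epsilon>1/(16p(n)^2)$, a straightforward extension of Lemma~\ref{l:apt} (obtained by choosing the basis $\ket{\alpha_{j,1}}:=\ket{\psi_j^{\mathrm{given}}}$ in its proof instead of the argmax basis) gives APT acceptance $\leq 1-\tilde\epsilon/(2m(i-1))$, contributing $1/\poly(n)$ rejection to $V'$. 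Otherwise $\ket{\Psi_i}$ is $\sqrt{\tilde\epsilon}$-close in trace distance to a state $\ket G=(\ket{\psi^{\mathrm{given}}})^{\otimes m}_B\otimes\ket ?_C$; on $\ket G$ each parallel $V$-run uses the correct first $i-1$ proofs, so by $V$'s NO-soundness and convexity each run's marginal acceptance is $\leq s$, and the tensor-product POVM bound $\lambda_{\max}(\bigotimes_k M_k)=\prod_k\lambda_{\max}(M_k)$ combined with Markov's inequality yields parallel-rep acceptance $\leq 2s/(c+s)\leq 1-1/(2p(n))$ on $\ket G$, hence $\leq 1-1/(4p(n))$ on $\ket{\Psi_i}$ after the trace-distance perturbation. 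Either case gives $V'$ NO-acceptance $\leq 1-1/\poly(n)$, which fits the lemma's $1-1/np(n)^2$ form after absorbing constants and factors of $p$ into the target polynomial.

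The main obstacle is a parameter tension in the choice of $m$: completeness amplification to $1-1/e^n$ via Chernoff forces $m=\Theta(np(n)^2)$, but the APT rejection bound $\tilde\epsilon/(2m(i-1))$ worsens with larger $m$, so at this $m$ the APT only yields $1/\poly(n)$ rejection in the bad-structure case---this is precisely why the lemma's soundness amplification is the weak one-sided $1-1/np(n)^2$ and not anything stronger. A secondary subtle point is the separability of the APT POVM, which relies on the Haar-integral representation of the symmetric projector.
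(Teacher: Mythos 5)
Your proposal is correct and takes essentially the same approach as the paper: an honest final (existentially quantified) prover supplies $m=\Theta(np(n)^2)$ copies of the earlier proofs, and the verifier flips a coin between the APT (\Cref{l:apt}) and threshold parallel repetition, with the one-sidedness arising exactly as you say because only the YES-case final proof can be assumed well-structured. Your bookkeeping differs only in refinements --- you bound the correlated parallel runs in the NO case via marginals plus Markov rather than a Chernoff bound (which is actually the more careful choice, since the copies of the $i$th proof are never certified unentangled), and you make the separability of the acceptance POVM explicit via the Haar decomposition of $\Pi_{\mathrm{sym}}$ where the paper cites Harrow--Montanaro --- and the residual $\poly(n)$ slack in your final soundness matches the slack already present in the paper's own derivation.
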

\begin{proof}
Let L = $(L_{yes}, \Lno, \Linv) \in \QPipure_i(c, s)$, with verifier $V$ taking in $i$ proofs denoted $\ket{\psi_1},\ldots \ket{\psi_i}$. Since $i$ is even, the last proof, $\ket{\psi_i}$, is existentially quantified. We define a new verifier $V'$ to decide $L$ in $\QPipure_i(1-1/\exp, 1-1/\poly)$ as follows. $V'$ receives the following proofs from an honest prover:
\begin{align}
    \left(\ket{\psi_1'}\otimes\cdots\otimes\ket{\psi_{i-1}'}\right)_A &= \left(\ket{\psi_1}\otimes\cdots\otimes\ket{\psi_{i-1}}\right)_A\\
    \ket{\psi_i'}_{BC} &= \left(\bigotimes_{j = 1}^{i-1} \ket{\psi_j}\right)_B^{\otimes m}\otimes \ket{\psi_i}_C^{\otimes m}\label{eqn:rep}
\end{align}
for $m \in \Theta(n(c-s)^{-2})$, and where $A$, $B$ and $C$ are used to align with the notation of \Cref{l:apt} (which we use shortly). In words, the last prover sends $m$ copies of the first $i-1$ proofs in register $B$, and $m$ copies of the last proof $\ket{\psi_i}$ in register $C$.
Then, $V'$ acts as follows:
\begin{enumerate}
    \item With probability 1/2, apply the APT (\Cref{l:apt}) between registers $A$ and $B$.
    Accept iff the test accepts.
    \item With probability 1/2, apply verifier V $m$ times, taking one proof $\ket{\psi_i}$ from each respective subregister of $B$.
    Accept iff at least $(c+s)/2$ measurements accept.
\end{enumerate}

\paragraph{Correctness.} We now argue correctness.\\\vspace{-2mm}

\noindent\emph{YES case.} Since the $i^{\mathrm{th}}$ prover is existentially quantified, it sends the state in \Cref{eqn:rep}.
    Thus, the APT accepts with certainty.
    Similarly, for parallel repetition of V, each repetition is independent, hence the overall verifier accepts with probability at least $1 - \exp(-(c-s)^2m/2)$, as desired.

    \noindent\emph{NO case.} Now the $i^{\mathrm{th}}$ prover is universally quantified, hence it can send us a state entangled across $BC$. \\
    Suppose the APT accepts with probability $1 - \epsilon$.
    By \Cref{l:apt},
    \begin{align}
        \ket{\psi_i'}_{BC} = \alpha\left(\bigotimes_{j = 1}^{i-1}\ket{\psi_j}\right)_B^{\otimes m}\ket{\phi}_C + \beta\ket{\gamma}_{BC}=:\alpha\ket{\eta}_B\ket{\phi}_C + \beta\ket{\gamma}_{BC}\quad\text{ for }\quad\abs{\alpha}^2\geq 1-2 m(i-1)\epsilon,
    \end{align}
    for arbitrary states $\ket{\phi}$ and $\ket{\gamma}$ satisfying that $\ket{\eta}_A\ket{\phi}_B$ is orthogonal to $\ket{\gamma}_C$.
    We have
    \begin{align}
        Pr[\text{parallel repetition of $V$ accepts } \ket{\eta}_A\ket{\phi}_B] \leq e^{\frac{-(c-s)^2m}{2}}.
    \end{align}
    We conclude the acceptance probability of $V'$ is at most
    \begin{align}
        \frac{1}{2}\left( (1-\epsilon) + (1-2 m(i-1)\epsilon)e^{-(c-s)^2m/2} + 2m(i-1)\epsilon + 2\sqrt{2 m(i-1)\epsilon + (2 m(i-1))^2\epsilon} \right)
        \le 1 - \frac\epsilon 2
    \end{align}
    where the maximum is attained when $\epsilon = \Theta(1/m(i-1))$.\\\vspace{-1mm}

Finally, that the measurement operator for the YES case is separable follows via the argument of Harrow and Montanaro for $\QMAt$ amplification~\cite{harrowTestingProductStates2013a}, since our use of the APT is agnostic to whether proofs are universally or existentially quantified.
\end{proof}

\Cref{l:qpi} now easily generalizes to cover all classies regarding $\QPHpure$ we are concerned with:
\theoremOneSided*
\begin{proof}
Statement $1b$ is from Lemma \ref{l:qpi}, and $1a$ follows from $1b$ since we can get a $\QSigmapure_i$ verifier by flipping the answer of a $\QPipure_i$ verifier corresponding to the complement of our promise problem. The remaining cases are analogous: $2a$ follows from $1b$, and $2b$ from $1a$.
\end{proof}

\section{Upper and lower bounds on $\QPHpure$}\label{scn:QPHpureupperbound}

\subsection{Lower bound: $\QCPH$ versus $\QPH$}\label{sscn:qph_vs_qcph}

We first give a lower bound on $\QPHpure$, by showing that alternatingly-quantified classical proofs can be replaced by pure-state quantum proofs.

\begin{theorem}\label{thm:qcph_in_qphpure}
  $\QCPH \subseteq \QPHpure$.
\end{theorem}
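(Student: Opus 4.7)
The plan is to reduce the statement to showing $\QCPi_k \subseteq \QPipure_k$ for every even $k \ge 2$. This suffices, because every $\QCSigma_j$ and $\QCPi_j$ problem padlifts into a $\QCPi_{2\lceil j/2 \rceil}$ problem (by padding with a dummy quantifier and re-indexing), and a corresponding $\QPipure$ containment closes the loop via complementation. Focusing on even $k$ is essential: the last quantifier is then existential in the YES case, so an honest ``last prover'' can be asked to bundle many copies of all previous proofs, which is exactly the setup required by the asymmetric product test from Lemma~\ref{l:apt}.

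Given a $\QCPi_k$ verifier $V$, I would first classically amplify $V$ to completeness $1 - 2^{-n}$ and soundness $2^{-n}$ on classical proofs $y_1, \ldots, y_k$ of length $p(n)$. The new $\QPipure_k$ verifier $V'$ receives pure state proofs $\ket{\psi_1}, \ldots, \ket{\psi_{k-1}}$ on $p(n)$ qubits each in registers $A_1, \ldots, A_{k-1}$, together with a larger last proof $\ket{\psi_k}$ in registers $B_1, \ldots, B_{k-1}, C$, where each $B_j$ is intended to hold $m = \poly(n)$ copies of $y_j$ and $C$ to hold $y_k$. $V'$ flips a coin: with probability $1/2$ it runs the APT between $A_j$ and $B_j$ for every $j < k$ and accepts iff every test accepts; with probability $1/2$ it measures every register in the standard basis, obtaining $y_j$ from $A_j$, strings $z_1^{(j)}, \ldots, z_m^{(j)}$ from $B_j$, and $y_k$ from $C$. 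The asymmetric rule is: for each odd $j$ (universally quantified in the $\QCPi_k$ YES case), ACCEPT immediately if any $z_\ell^{(j)} \ne y_j$; for each even $j < k$ (existentially quantified in the YES case), REJECT on any such mismatch. If no mismatch-based decision fires, $V'$ outputs $V(x, y_1, \ldots, y_k)$.

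For the YES case I would construct an honest strategy inductively: each existential prover $j < k$ plays the optimal classical $y_j^{\star}$ in the standard basis, and the last prover, having seen the universal $\ket{\psi_i}$ at odd $i$, picks guesses $\tilde y_i$ (say, the computational-basis string of maximum amplitude in $\ket{\psi_i}$) together with the classically optimal $y_k^{\star}$ for that history, and sends $\bigotimes_{j < k} \ket{\tilde y_j}^{\otimes m} \otimes \ket{y_k^{\star}}$, with $\tilde y_j = y_j^{\star}$ for even $j$. On the APT branch, every test accepts with certainty on even indices and with good probability on odd ones by this choice. On the measurement branch, any universal proof that collapses to a value other than the guessed $\tilde y_i$ fires the odd-index mismatch rule and the verifier accepts; otherwise the collected $(y_1, \ldots, y_k)$ is a consistent classical history on which the amplified $V$ accepts by completeness. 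A conditioning argument over the measurement outcomes of the universal proofs, exploiting that mismatch-accept fires precisely on the ``off-guess'' branches, should then yield overall acceptance $\ge 1 - 1/\poly(n)$.

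For the NO case, I would fix the adversarial $\ket{\psi_1}$ and the existential responses, and consider two sub-cases. Either some APT test rejects with inverse-polynomial probability and we are done, or Lemma~\ref{l:apt} forces $\ket{\psi_k}$ to be close to the ideal tensor form $\bigotimes_{j < k} \ket{\phi_j}^{\otimes m} \otimes \ket{\phi_k}$. In the latter case, the cheating existential provers at even $j < k$ are forced into standard basis states to avoid triggering the reject-on-mismatch rule, while the NO-case ``universal'' provers at odd $j$ can be taken to be standard basis $\ket{y_j^{\star}}$ for the NO-game optimal $y_j^{\star}$, so no mismatch-accept ever fires. On measurement we then recover a deterministic classical history matching an adversarial NO-case transcript, which $V$ rejects by soundness; combining with the APT slack via the triangle inequality gives total acceptance $\le 1/\poly(n)$. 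The hard part will be the YES-case analysis: universal provers can send arbitrary superpositions, yet the last prover must commit to classical guesses of their measurement outcomes without ever seeing those outcomes. Making the sketch above rigorous requires carefully decomposing each odd-index $\ket{\psi_i}$ in the standard basis, tracking amplitudes through measurement, and balancing mismatch-accept contributions against measurement-accept contributions, relying on the asymmetric structure of Lemma~\ref{l:apt} so that only one prover needs to hold tensor copies.
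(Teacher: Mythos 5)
Your overall architecture matches the paper's: reduce to $\QCPi_k\subseteq\QPipure_k$ for even $k$, have the last (existentially quantified) prover bundle $m$ copies of the earlier proofs, run the APT with probability $1/2$, and otherwise measure everything and apply an asymmetric accept-on-mismatch / reject-on-mismatch rule before simulating $V$. However, there is a genuine gap in your YES-case honest strategy, precisely at the point you flag as ``the hard part.'' You have the last prover send $\ket{\tilde y_j}^{\otimes m}$, i.e.\ $m$ copies of a \emph{classical guess} (the max-amplitude basis string) for each universally quantified $\ket{\psi_j}$. But that prover is adversarial in the YES case and can send, say, the uniform superposition over all $2^{p(n)}$ strings; then the swap test between $\ket{\psi_j}$ and $\ket{\tilde y_j}$ accepts with probability $\tfrac12(1+|\langle\psi_j|\tilde y_j\rangle|^2)\approx\tfrac12$, so your APT branch rejects with probability $\Omega(1/k)$ (and with probability close to $\tfrac12$ if you insist that all $k-1$ per-block tests pass). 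This caps completeness at $1-\Omega(1/k)$, which sits \emph{below} the NO-case acceptance ceiling of roughly $1-\Theta(1/mn)+s$ that the soundness analysis yields: an adversarial last proof can always pass the APT with probability $1-\epsilon/2mn$ while being only $\epsilon$-far from product form, so soundness cannot be pushed below $1-O(1/mn)$. The completeness--soundness gap is therefore destroyed.

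The paper's fix is that the $k$th prover, being existentially quantified \emph{after} all the universal proofs, sends exact copies $\ket{\psi_j}^{\otimes m}$ of the actual states rather than copies of a guess. The APT then passes with certainty, and the measurement branch is analyzed via the i.i.d.\ structure: $X_j,X_{j1},\dots,X_{jm}$ are independent samples from $|\alpha_{j,x}|^2$, any disagreement among them triggers accept for $j\in U$, and the only bad event is that all $m+1$ samples coincide on a value other than $x_j^*$, which has probability at most $\min\{p,1-p\}^m\le 2^{-m}$. The classical guess $x_j^*$ still appears, but only to fix the \emph{responses} of the subsequent existential provers, not to determine what gets copied. Your NO-case sketch is essentially the paper's, except that mismatches at odd indices must be ruled out via the APT-closeness bound of Lemma~\ref{l:apt} rather than assumed away.
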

This follows immediately from the following lemma.
\begin{lemma}\label{l:qcph_in_qphpure}
  For all even $k\geq 2$, $\QCPi_k\subseteq \QSigmapure_k$.
\end{lemma}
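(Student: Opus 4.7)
The plan is to build a $\QSigmapure_k$ verifier $V'$ that simulates an arbitrary $\QCPi_k$ verifier $V$ on classical proofs $y_1,\ldots,y_k$, following and extending the APT-based scheme of \Cref{l:qpi}. The honest prover at the designated ``copy-prover'' position sends an honest state that bundles $m=\poly(n)$ fresh copies of every other proof's basis-state encoding; the other honest existential proofs are themselves computational basis states encoding optimal classical $\QCPi_k$ responses, while universally quantified proofs may be arbitrary pure states chosen by the adversary. The whole design pivots on the observation that the adversarial proof positions in $\QCPi_k$ are fixed by the original language, so the verifier can use this fixed partition (rather than the $\QSigmapure_k$ quantifier pattern) to classify its rules.

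Concretely, $V'$ flips a fair coin. With probability $1/2$ it runs the Asymmetric Product Test (\Cref{l:apt}) between the single-copy proof registers and the bundled $m$ copies inside the copy register, accepting iff the APT accepts. With probability $1/2$ it measures every proof and every copy in the computational basis and applies the following rule: (i) if for some index $i$ that is \emph{universally quantified in $\QCPi_k$}, the measured value of $\ket{\psi_i}$ disagrees with any of its bundled copies, it \emph{accepts} immediately; (ii) otherwise, if for some index $j$ that is \emph{existentially quantified in $\QCPi_k$} such a mismatch occurs, it \emph{rejects}; (iii) otherwise, let $y_1,\ldots,y_k$ be the measured strings and accept iff $V(x,y_1,\ldots,y_k)=1$. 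Rule (i) punishes adversarial quantum deviations on positions where the adversary would otherwise be able to inject superposition, and rule (ii) prevents a cheating prover from smuggling non-classical behavior through the other positions.

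The YES-case analysis is routine: the honest strategy sends basis states and a fully consistent bundle, so the APT branch accepts with certainty, the measurement branch reproduces $V(x,y_1,\ldots,y_k)=1$, and the overall acceptance probability is at least $c-\negl(n)$. The NO case follows the template of \Cref{l:qpi}: conditioning on APT acceptance with probability $1-\epsilon$ and applying \Cref{l:apt} forces the copy register to lie within amplitude $\sqrt{2mn\epsilon}$ of $\bigotimes_j \ket{\psi_j}^{\otimes m}\otimes\ket{\phi}$ for some product state on the non-copy registers, so up to this error the measurement branch samples from a product distribution over classical strings. Combining this with a Chernoff-style bound on the acceptance probability in the measurement branch gives an overall acceptance of at most $(c+s)/2+O(1/\poly(n))$, yielding the required promise gap between YES and NO.

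The main obstacle I expect is the careful NO-case handling of adversarial quantum states on the ``universal-in-$\QCPi_k$'' positions: these registers can be entangled across each other and across the copy register, so measuring one disturbs the conditional distribution over the others, and one needs to verify that rule (i) is safe to apply even though automatic acceptance could in principle inflate the NO acceptance probability. The key observation making rule (i) work is that, after conditioning on the APT having forced (near-)product structure on the copy register, the only measurement outcomes contributing to the ``non-automatic-accept'' probability are those on which every such proof matches its copies; restricting to this cross-section gives, for each adversarial position, a classical probability distribution over values of $y_i$, which can be derandomized at the worst case to reduce the analysis to the classical soundness of $V$ on the original $\QCPi_k$ instance. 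Tracking the APT error $\epsilon$, the measurement disturbance, and the soundness $s$ through this derandomization is the bookkeeping step that delivers the final $(c+s)/2+o(1)$ acceptance bound.
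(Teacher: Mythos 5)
Your construction coincides with the paper's (the last proof bundles $m$ copies of the others, the verifier runs the APT with probability $1/2$ and otherwise measures everything, accepting on mismatches at YES-case-universal positions, rejecting on mismatches at the others, and simulating $V$ when everything is consistent), and your NO-case sketch correctly identifies both the danger of the auto-accept rule and its resolution via the APT forcing near-product structure on the copy register. The genuine gap is in your YES case, which you dismiss as routine on the grounds that ``the honest strategy sends basis states and a fully consistent bundle, so \ldots the measurement branch reproduces $V(x,y_1,\ldots,y_k)=1$.'' In the YES case of $\QCPi_k$ the odd-indexed proofs are \emph{universally} quantified, hence adversarial: they can be arbitrary superpositions $\ket{\psi_i}=\sum_x\alpha_{i,x}\ket{x}$, not basis states, and measuring them yields a random string. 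Consequently (a) the honest bundle cannot consist of ``basis-state encodings'' of these proofs --- it must contain exact copies of the adversarial pure states, or the APT branch itself loses completeness; and (b) your honest existential provers are said to send ``optimal classical responses,'' but you never say optimal with respect to \emph{what} when the preceding universal proof is a superposition. This is exactly the crux of the paper's argument: the existential prover at position $i+1$ responds to the \emph{modal} outcome $x_i^*=\argmax_x\abs{\alpha_{i,x}}$, and the bad event --- measuring some $X_i\ne x_i^*$ while all $m$ bundled copies nevertheless agree with $X_i$, so that the auto-accept rule does not fire and $V$ is run with a mismatched existential response --- must be bounded by $\sum_{y\ne x_i^*}\abs{\alpha_{i,y}}^{2(m+1)}\le 2^{-m}$. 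Without the modal-response strategy and this consistency bound, the measurement branch has no completeness guarantee at all.

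A secondary accounting issue: your NO-case bound of $(c+s)/2+O(1/\poly(n))$ and the appeal to a Chernoff-style bound are imported from \Cref{l:qpi}, but here $V$ is run once on the measured strings, with no parallel repetition or majority vote. Moreover, in the NO case the copy-carrying proof is universally quantified, so the adversary can pass the APT branch with certainty, which already forces NO-acceptance at least $1/2$; the gap one actually obtains is only $\Theta(1/mn)$, between roughly $\frac12+\frac12(c-2^{-m})$ and $1-\frac{1}{4mn}+s$, and this is why the argument must first amplify the classical-proof verifier so that $c$ and $s$ are exponentially close to $1$ and $0$. As written, your parameter bookkeeping would not close.
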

\begin{proof}
  That $k$ is even implies the $k$th proof is existentially quantified in the YES case, a fact we will leverage.
  To begin, Let $V$ be a verifier for $\QCPi_k$, so that in the YES case $\forall_1x_1\dots \exists_kx_k:\Pr[V(x_1,\dots,x_k)=1]\ge c$ and in the NO case $\exists x_1\dots \forall x_k:\Pr[V(x_1,\dots,x_k)=1]\le s$, where we may assume without loss of generality that $c$ and $s$ are exponentially close to $1$ and $0$, respectively.
  We construct a $\QPHpure$ verifier $V'$ as follows:
  \begin{itemize}
    \item $V'$ receives $k$ proofs, $\ket{\psi_1}\otimes\cdots\otimes\ket{\psi_k}$.
    \item The last proof $\ket{\psi_k}$ consists of two registers denoted $A$ and $B$. We think of $A$ as containing $m$ copies of proofs $1,\dots,k-1$ (as in the APT, \Cref{l:apt}), and $B$ as containing the $k$th proof for $V$, $x_k$.
    \item $V'$ acts as follows:
      \begin{enumerate}
          \item With probability $1/2$, run the APT (\Cref{l:apt}), and accept if and only if the APT accepts.
          \item With probability $1/2$, measure proofs $\ket{\psi_1}\otimes\cdots\otimes\ket{\psi_{k-1}}$ in the standard basis to obtain strings $x_1,\ldots, x_{k-1}$, respectively, similarly measure all copies of these proofs in $A$, and finally measure $B$ in the standard basis to obtain $x_{k,B}$. Let $U=\set{1,3\ldots, k-1}$ denote the indices of universally quantified proofs (in the YES case).  Then:
          \begin{enumerate}
              \item If there exists an $i\in [k-1]$ such that the strings obtained by measuring all copies of $\ket{\psi_i}$ did \emph{not} equal $x_i$, let $i$ denote the minimal such index. Accept if $i\in U$, and reject otherwise.
              \item Otherwise, simulate $V(x_1,\dots,x_{k-1},x_{k,B})$.
          \end{enumerate}
      \end{enumerate}
\end{itemize}

\paragraph{Correctness strategy.} Since we are trying to simulate $\QCPi_k$, ideally we want all proofs to be strings.
This can be assumed without loss of generality for existentially quantified proofs, but not for universally quantified proofs, which can be set to any pure state by definition of $\QPipure_k$.
So, for any $i\in U$, write $\ket{\psi_i}=\sum_{x}\alpha_{i,x}\ket{x}$, where we view $\abs{\alpha_{i,x}}^2$ as a distribution over strings $x$ for proof $i$.
Denote for any $i\in U$ by $x^*_i$ the amplitude of highest weight, i.e.  $x^*_i=\argmax_{x}{\abs{\alpha_{i,x}}}$ (ties broken arbitrarily).
The key idea is that the existentially quantified proof at index $i+1$ will now send string $y_{i+1}^*$, where $y_{i+1}^*$ is the same string that a prover for the \emph{original} $\QCPi_k$ verifier $V$ would have sent in response to $x^*_i$ on proof $i$. (For clarity, if $i+1=k$, then $y_{i+1}^*$ is sent in register $B$.)\\

\noindent\emph{YES case.} Since the $k$th proof is existentially quantified, for Step 1 (APT), we may assume all copies in $\ket{\psi_{k,A}}$ (the state in register $A$) are correctly set, so the APT accepts with probability $1$, leaving all proofs invariant.
As for Step 2, for each $i\in U$, let $X_i$ be the random variable resulting from measuring $\ket{\psi_i}$ in the standard basis, and likewise $X_{ij}$ the random variables for the $j$th copy of $\ket{\psi_i}$ in register $A$.
The verification can now fail in one of two ways:
\begin{enumerate}
    \item There exists an $i\in U$ such that we did not measure the ``right'' result, i.e. $X_i\neq x_i^*$, but that all measured copies of $\ket{\psi_i}$ returned the same string, i.e. $\forall j$ $X_{ij}=X_i$.
    In this case, our strategy for setting the existentially quantified proof $i+1$ is not necessarily the correct response to $X_i$.
    Thus, when Step $2(b)$ is run, we have no guarantee for the acceptance probability of $V$.
    \item Measuring all $i\in U$ yields the desired outcomes $x_i^*$ (as well as $\forall j$ $X_{ij}=X_i$), but $V$ nevertheless rejects due to imperfect completeness, i.e. $c<1$.
\end{enumerate}

\noindent Combining these, via the union bound we thus have for Step 2 that
  \begin{align}
    \Pr[\text{reject}] &\le \Pr[\;\exists i\in U:X_i\ne x^*_i\text{ AND } \forall ij:X_i=X_{ij}\;] + (1-c) \\
    &\le \min\{p,1-p\}^m + 1-c\\
    &\le 2^{-m}+1-c,
  \end{align}
  where $p:=\max_i\abs{\alpha_{i,x_i}}^2$, since $\Pr[X_i\ne x^*_i]=\Pr[X_{ij}\ne x^*_i]\le \max\set{p,1-p}$ because $\abs{\alpha_{i,y}}^2\le 1-p$ for $y\ne x^*_i$.
  Since we assumed the APT accepts with perfect probability, we conclude $V'$ accepts with probability $\ge\frac12 + \frac12(c - 2^{-m})=:c'$. (As an aside, recall $c$ is exponentially close to $1$.)\\

\noindent\emph{NO case.} The analysis is more subtle in this case, as the set of indices $U=\set{1,3,\ldots, k-1}$ now refers to \emph{existentially} quantified proofs.
Thus, in Step 2(a) when $V'$ accepts iff $i\in U$, this now means it accepts on existentially quantified proofs.
This is because $V'$ does not know whether it is in a YES or NO case.
For the same reason, the actions and role of the final proof $\ket{\psi_k}$ on registers $A$ and $B$ remain the same, even though it is now universally quantified.
Finally, the strategy of any existential prover $i\in U$ is the same as the YES case: Prover $i$ sends the optimal response $y_i^*$ to universally quantified proof $x_{i-1}^*$.
(If $i=1$, then there is no universal proof to condition on for $\ket{\psi_1}$.)

To proceed, assume for now that the APT would have succeeded in Step 1 with certainty.
In Step 2, define again for all $i\in U$, $X_i$ the random variable resulting from measuring $\ket{\psi_i}$ in the standard basis, and $X_{ij}$ the random variables for the $j$th copy of $\ket{\psi_i}$ in register $A$.
The verifier can now fail in one of two ways, the first of which differs significantly from the YES case:
\begin{enumerate}
    \item There exists $i\in U$ such that $X_i$ mismatched one of its copies in $A$, i.e. $\exists j$ such that $X_i\neq X_{ij}$. \emph{A priori}, this seems like a problem --- since $\ket{\psi_k}$ is universally quantified, most choices of $\ket{\psi_k}$ will cause a mismatch with $X_i$ with high probability, causing $V'$ to accept with high probability.
    The crucial insight is that, in order for $\ket{\psi_1}\otimes\cdots\otimes\ket{\psi_k}$ to pass the APT, it must essentially set each copy of $\ket{\psi_i}$ to string $y_i^*$. Thus, measuring the $A$ register is highly unlikely to produce mismatches on existentially quantified proofs!
    \item Measuring all $i\in U$ will yield the desired outcomes $y_i^*$, since $U$ is existentially quantified. If in addition $\forall j$ $X_{ij}=X_i$, running $V$ may nevertheless accept due to imperfect soundness, i.e. $s>0$.
\end{enumerate}
Then, by a similar argument as for the YES case that in Step 2, in which we first assume the APT passes with certainty,
  $
    \Pr[\text{accept}] \le  2^{-m}+s.
  $
  Now, let us factor in the probability of the APT of Step 1 passing.
  Assume the APT accepts with probability $\ge1-\epsilon/2mn$.
  By \Cref{l:apt},
  \begin{equation}
    \bra{\psi_k}(\ketbrab{\psi_1,\dots,\psi_{k-1}}_A\otimes I_B)\ket{\psi_k} \ge 1-\epsilon.
  \end{equation}
  Let $\{\ket{\beta_i}\}$ be an orthonormal basis of register $A$ with $\ket{\beta_1} = \ket{\psi_1,\dots,\psi_{n-1}}$.
  Then we can write $\ket{\psi_n} = \sum_i \alpha_i \ket{\beta_i}_A\ket{\gamma_i}_B$ with $\abs{\alpha_1}^2\ge 1-\epsilon$.
  Hence, $V'$ accepts with probability at most
  \begin{equation}
    \frac12\left(1-\frac{\epsilon}{2mn}\right) + \frac12\left(\epsilon + (1-\epsilon)s\right)\le 1 - \frac{1}{4mn}+s=:s',
  \end{equation}
  where the first inequality follows because, without loss of generality, we may assume $\epsilon \leq 1/2$, as otherwise the prover cannot hope to succeed make $V'$ accept with probability greater than $3/4$ (whereas $c'\approx 1$). Finally, we can choose $c,s,m$ such that $c'-s'\ge1/\poly$.
\end{proof}

\subsection{Upper bound}\label{sscn:upperbound}
To complement \Cref{sscn:qph_vs_qcph}, we next give a simple but non-trivial upper bound on \QPHpure, which may be viewed as an ``exponential analogue'' of Toda's theorem. For this, let $\NP^k$ denote a tower of $\NP$ oracles of height $k$. (For example, $\NP^1=\NP$ and $\NP^2=\NP^{\NP}$.)
Define $\NEXP^k$ analogously, by for a tower of $\NEXP$ oracles.
In \cite{gharibianQuantumGeneralizationsPolynomial2022}, it was observed that $\QSigma_i \subseteq \NEXP^i$. Here, we show a sharper bound.

\begin{theorem}\label{thm:exptoda}
  $\QPHpure \subseteq \EXP^{\PP}$.
\end{theorem}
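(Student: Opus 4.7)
The plan is to reduce each constant level $\QSigmapure_k$ to alternation over exponentially-long classical bitstrings describing discretized pure states, with a deterministic exponential-time inner check, and then invoke the padding-based scale-up of Toda's theorem $\PH\subseteq\P^{\PP}$ to the exponential-time polynomial hierarchy. Taking the union over $k$ then yields $\QPHpure\subseteq\EXP^{\PP}$.

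First I would discretize the proofs. Let $V$ be a $\QSigmapure_k$ verifier with $c-s\geq 1/\poly(n)$ acting on proofs in a Hilbert space of dimension $D=2^{\poly(n)}$. I quantize each of the $D$ complex amplitudes of every proof to $b=\poly(n)$ bits of precision (in real and imaginary parts) and renormalize, obtaining $\ket{\tilde\psi}$ with $\norm{\ket\psi-\ket{\tilde\psi}}=O(\sqrt D\cdot 2^{-b})\leq 2^{-\poly(n)}$ for suitably chosen polynomial $b$. Since $V$ is unitary and its accept projector is fixed, the acceptance probability is $O(1)$-Lipschitz in the Euclidean distance of each proof, so a $k$-fold hybrid argument shows that replacing every proof by its quantized version perturbs the acceptance probability by at most $O(k\cdot 2^{-\poly(n)})$, which is far below the promise gap.

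Next, the discretized problem reads: $\exists \tilde y_1\forall \tilde y_2\cdots Q_k \tilde y_k\in\bin^{2^{\poly(n)}}$ such that $\Pr[V\text{ accepts }\ket{\tilde\psi_1,\ldots,\tilde\psi_k}]\geq (c+s)/2$. Given the amplitudes explicitly, this acceptance probability is computable \emph{exactly} in deterministic $2^{\poly(n)}$ time by tracking the full $2^{\poly(n)}$-dimensional product state vector through each gate of $V$ and summing the squared magnitudes of accepting basis states. Hence the problem lies in the $k$-th level of the exponential-time polynomial hierarchy. Padding each input of length $n$ up to $2^{\poly(n)}$ turns this into a level-$k$ classical polynomial-hierarchy computation on the padded input, and Toda's theorem $\PH\subseteq\P^{\PP}$ therefore places the original problem in $\EXP^{\PP}$.

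The main delicate step is calibrating the precision: the $\sqrt D = 2^{\poly(n)/2}$ blow-up inside the norm error forces $b$ to be polynomial rather than merely logarithmic in $n$, but once $b$ is chosen this way the rest of the discretization and hybrid arguments are routine. The scale-up of Toda is likewise standard, obtained from the classical statement by padding, and invokes no new ingredients.
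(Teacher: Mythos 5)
Your proof is correct, and it rests on the same three ingredients as the paper's: discretizing each pure-state proof to an exponential-length classical description at inverse-exponential precision, a deterministic exponential-time simulation of the verifier, and Toda's theorem invoked after an exponential padding. The decomposition differs in how the padding and Toda are deployed. The paper first proves $\QSigmapure_i\subseteq\NEXP^{\NP^{i-1}}$ (outermost quantifier in a $\NEXP$ base, remaining quantifiers in an $\NP$-oracle tower whose queries are polynomial in the inflated instance size), then shows $\NEXP^{O}\subseteq\EXP^{\NP^{O}}$ by a relativizing padding step, and finally applies Toda to the oracle, $\EXP^{\NP^{i}}\subseteq\EXP^{\p^{\PP}}\subseteq\EXP^{\PP}$. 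You instead pad the entire $k$-level alternation at once and apply Toda directly to the resulting $\Sigma_k^p$ language on the padded input; this is somewhat more direct, but it bypasses the intermediate bound $\QSigmapure_i\subseteq\NEXP^{\NP^{i-1}}$ that the paper records as an improvement over the earlier $\NEXP^{i}$ bound of \cite{gharibianQuantumGeneralizationsPolynomial2022}. Two points to tighten in your write-up: the acceptance probability is not computable \emph{exactly} (renormalization of the quantized amplitudes and gate entries such as $1/\sqrt{2}$ introduce irrationals), but additive error $2^{-\poly(n)}$ in deterministic exponential time suffices against the promise gap, which is also all the paper claims; and the $k$-fold hybrid bound must be propagated through the quantifier alternation --- by induction on levels, using that the sub-game value is Lipschitz in each proof and that restricting a universally quantified player to the discretization net only weakens it --- rather than stated only for a fixed tuple of proofs.
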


We prove this using a sequence of statements.

\begin{observation}\label{obs:trivialQPH}
  $\QSigmapure_i \subseteq \NEXP^{\NP^{i-1}}$.
\end{observation}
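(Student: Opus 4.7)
The plan is to simulate a $\QSigmapure_i$ computation by nondeterministically guessing the first (existentially quantified) proof in the outer $\NEXP$ machine, and then delegating the remaining $i-1$ alternating quantifiers to a single call into the $\NP^{i-1}$ oracle. Each proof $\ket{\psi_j}$ lives on $p(n)$ qubits, so its classical description as a vector of complex amplitudes---each truncated to $q(n)=\poly(n)$ bits of precision---has bit-length $N := 2^{\poly(n)}$. Given such classical descriptions of $\ket{\psi_1},\ldots,\ket{\psi_i}$, one can compute the acceptance probability of $V$ deterministically in time $\poly(N)=2^{\poly(n)}$ by direct amplitude-vector simulation of the circuit gate by gate.

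The key observation is that an $\NP^{i-1}$ oracle, when queried on an input of size $N$, decides $\Sigma_{i-1}$-type statements whose witnesses have length $\poly(N)$ and whose inner predicate is $\poly(N)$-time deterministic. Choosing $N=2^{\poly(n)}$, this gives exactly the power to quantify over exponentially long candidate descriptions $\hat\psi_2,\dots,\hat\psi_i$ with an inner predicate that simulates $V$. The $\NEXP^{\NP^{i-1}}$ algorithm then runs as follows: (i) nondeterministically write a candidate $\hat\psi_1$ on the work tape; (ii) write to the oracle tape the query asking whether $\forall \hat\psi_2\, \exists\hat\psi_3 \cdots \overline{Q_i}\,\hat\psi_i$ the simulated acceptance probability is at least $c$; (iii) accept iff the oracle answer is yes. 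Since this query is a $\Pi_{i-1}$ statement on an input of size $N$, its complement lies in $\NP^{i-1}$, and the $\NEXP$ machine simply flips the returned bit. The base case $i=1$ collapses to $\QMA\subseteq\NEXP$, immediate by guess-and-check.

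The main technical point to justify is discretization: we must ensure that restricting provers to amplitude vectors with $\poly(n)$ bits of precision preserves the $1/\poly(n)$ completeness--soundness gap on both quantifier sides simultaneously. This follows from Lipschitz-continuity of $\bra{\phi}M\ket{\phi}$ in $\ket\phi$ for the accepting POVM element $M$ of $V$: any pure-state strategy rounds to a nearby discretized strategy losing only $2^{-\poly(n)}$ in acceptance probability, far below the promise gap, so the discretized and continuous formulations of every quantifier agree on all promise inputs. A final cosmetic issue is that the oracle query must be syntactically well-formed as a string of length exactly $N$; padding the query with the circuit description of $V$, the threshold $c$, and a fixed structural prefix handles this with no blowup beyond $\poly(n)$.
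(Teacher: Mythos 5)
Your proposal is correct and follows essentially the same route as the paper: discretize each proof to an exponential-length amplitude description with inverse-exponential rounding error (harmless against the $1/\poly$ promise gap), simulate $V$ by exponential-time linear algebra, guess the first proof in the $\NEXP$ base machine, and pad the instance so that the remaining $i-1$ alternations become a single $\NP^{i-1}$ oracle query whose witnesses are polynomial in the inflated input size. The only nitpick is that the oracle predicate should threshold at, say, $(c+s)/2$ rather than exactly $c$, so that the $2^{-\poly(n)}$ discretization loss cannot flip a YES instance; with that cosmetic fix the argument matches the paper's.
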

\begin{proof}
  Replace all proofs by by their exponential-size classical description (up to additive additive inverse exponential additive error in the entries), and simulate the verifier's action on the proofs via exponential-time matrix multiplication. The standard proof technique for showing $\Sigma_i^p\subseteq \NP^i$ now applies, except we only require $\NEXP$ at the base level of the oracle tower, i.e. $\NEXP^{\NP^{i-1}}$, since an exponential time base can ``inflate'' or pad the instance size for its oracle exponentially.
\end{proof}
\noindent For comparison, the onbserved bound in \cite{gharibianQuantumGeneralizationsPolynomial2022} of $\QSigma_i \subseteq \NEXP^i$ is overkill, since it allows the first NEXP oracle can use \emph{double} exponential time to process its exponetial size input.

\begin{observation}\label{obs:nexptoexp}
  $\NEXP \subseteq \EXP^{\NP}$.
\end{observation}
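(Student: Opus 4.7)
The plan is the standard padding argument. Let $L \in \NEXP$ be decided by a nondeterministic Turing machine $M$ running in time $2^{n^c}$ for some constant $c$, using a witness string $y$ of length at most $2^{n^c}$. I want to express membership in $L$ as a single $\NP$ oracle query that can be prepared by an $\EXP$ machine.

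First I would define a padded language
\begin{equation}
  L' \;=\; \bigl\{\, x\,\#\,1^{2^{|x|^c}} \;:\; \exists y \in \{0,1\}^{\le 2^{|x|^c}}\text{ s.t. } M(x,y)=1 \,\bigr\}.
\end{equation}
The key observation is that $L' \in \NP$: on an input of length $N$ of the form $x \# 1^{2^{|x|^c}}$, we have $N \ge 2^{|x|^c}$, so a nondeterministic machine can guess the witness $y$ of length at most $2^{|x|^c} \le N$ and simulate $M(x,y)$ in time $2^{|x|^c} \le N$, which is polynomial in $N$. Inputs not of the prescribed padded form can simply be rejected.

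Next I would describe the $\EXP^{\NP}$ machine for $L$: on input $x$ of length $n$, in time $2^{\poly(n)}$ it constructs the padded string $x \# 1^{2^{n^c}}$, issues a single oracle query to $L'$, and outputs the answer. By definition of $L'$, the oracle returns $1$ iff there exists a valid witness for $M$ on $x$, i.e.\ iff $x \in L$. Hence $L \in \EXP^{\NP}$.

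There is essentially no obstacle here; the only thing to be careful about is that the padding length $2^{n^c}$ is constructible in exponential time (trivially true) and that $L'$'s $\NP$-verifier rejects improperly padded inputs in polynomial time in the input length, so that $L' \in \NP$ rather than merely $\NEXP$. Combined with Observation~\ref{obs:trivialQPH}, this observation feeds into the proof of Theorem~\ref{thm:exptoda} by collapsing the exponential-height tower of $\NP$ oracles on top of $\NEXP$ down to $\EXP^{\PP}$ via Toda's theorem applied at the outer layer.
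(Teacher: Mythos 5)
Your proof is correct and follows the same padding argument as the paper: the $\EXP$ machine inflates the input to exponential length so that the $\NEXP$ computation becomes an $\NP$ computation on the padded instance, and then a single oracle query suffices. The only (inessential) slip is in your closing aside — the oracle tower in Theorem~\ref{thm:exptoda} has constant height $i$, not exponential height — but this does not affect the proof of the observation itself.
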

\begin{proof}
Since using an exponential time machine we can ``inflate'' the instance size to exponential, an $\NP$ machine can thereafter simulate the $\NEXP$ computation on the inflated instance size. The $\EXP$ machine just returns the answer of the $\NP$ oracle.
\end{proof}

\begin{observation}\label{obs:NPO}
    $\NEXP^{O} \subseteq \EXP^{\NP^{O}}$ for an oracle to any language $O$.
\end{observation}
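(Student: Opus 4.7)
The plan is to generalize the proof of Observation \ref{obs:nexptoexp} by padding. First I would take $L \in \NEXP^O$ decided by a nondeterministic oracle Turing machine $N$ running in time $2^{q(n)}$ for some polynomial $q$, with oracle access to $O$. The key step is to define the padded language
\begin{equation}
  L' = \set{\langle x, 1^{2^{q(|x|)}}\rangle : x\in L}.
\end{equation}

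Next I would argue $L' \in \NP^O$. On an input of length $m \geq 2^{q(n)}$ of the form $\langle x, 1^{2^{q(|x|)}}\rangle$, an $\NP^O$ machine has $\poly(m)$ time available, which suffices to simulate $N$ on $x$ step-by-step (since $N$ runs in time $2^{q(n)}\le m$). The queries that $N$ makes to $O$ have length at most $2^{q(n)}\le m$, so the simulating $\NP^O$ machine can write them down and forward them to its own $O$ oracle without exceeding its polynomial time/space budget.

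Finally I would construct an $\EXP^{\NP^O}$ machine $M$ deciding $L$ as follows: on input $x$ of length $n$, $M$ spends exponential time to construct the padded string $\langle x, 1^{2^{q(n)}}\rangle$, makes a single oracle query on this padded string to an $\NP^O$ oracle deciding $L'$, and returns the answer. Since $L' \in \NP^O$, this yields $L \in \EXP^{\NP^O}$, as desired.

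The only point that needs care is that the $\NP^O$ oracle must be able to handle the length of queries to $O$ produced by the simulation; this is automatic from the padding, so I do not anticipate any real obstacle. The argument is essentially identical to Observation \ref{obs:nexptoexp}, with $O$ threaded through both the $\NEXP$ and the $\NP$ layers.
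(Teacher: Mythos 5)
Your proposal is correct and follows essentially the same route as the paper: both arguments pad the instance to exponential size so that an $\NP^{O}$ machine can simulate the $\NEXP^{O}$ computation in time polynomial in the padded length, forwarding the oracle queries directly to $O$. You merely spell out the padding and the query-length bookkeeping more explicitly than the paper does.
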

\begin{proof}
    It is easy to see that the argument in Observation \ref{obs:nexptoexp} relativizes, since the $\NP$ oracle to the $\EXP$ machine can make the $\NEXP$ queries directly to the oracle $O$.
\end{proof}

\begin{observation}\label{obs:exppp}
    $\EXP^{\p^{\PP}} \subseteq \EXP^{\PP}$.
\end{observation}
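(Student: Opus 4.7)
The plan is to observe that an $\EXP$ machine is already powerful enough to absorb any polynomial-time computation ``wrapped around'' an oracle, so that $\EXP^{\p^{C}} = \EXP^{C}$ for every oracle $C$; the case $C=\PP$ then yields the claim. More concretely, suppose $L \in \EXP^{\p^{\PP}}$ via an exponential-time oracle machine $M$ whose oracle decides some language in $\p^{\PP}$, and fix a polynomial-time oracle machine $N$ with a $\PP$ oracle that decides this language.

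The simulating machine $M'$ works in $\EXP^{\PP}$ as follows. It runs $M$ step by step; whenever $M$ would make an oracle query $q$ to its $\p^{\PP}$ oracle, $M'$ internally simulates $N$ on input $q$, which takes time polynomial in $|q|$ and hence at most exponential in the original input length. Whenever this internal simulation of $N$ needs to query its $\PP$ oracle, $M'$ forwards that query to its own $\PP$ oracle and continues. The total running time of $M'$ is bounded by the product of the running time of $M$ and the per-query simulation cost of $N$, which is still exponential in the original input length. Thus $L \in \EXP^{\PP}$.

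There is essentially no obstacle: the argument is purely syntactic and relativizes, relying only on the fact that ``polynomial time composed with exponential time is exponential time.'' The only thing worth being careful about is that oracle queries made by $M$ may themselves have length exponential in the input; this is fine because $N$ runs in time polynomial in its input length, and polynomial in an exponential is still exponential, so $M'$ remains an $\EXP$ machine. This completes the proof.
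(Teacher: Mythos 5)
Your proof is correct and follows essentially the same approach as the paper: simulate the $\p$ part of the $\p^{\PP}$ oracle inside the $\EXP$ machine (noting that polynomial time in an exponential-length query is still exponential in the original input) and forward the $\PP$ queries directly to the $\PP$ oracle. Nothing is missing.
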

\begin{proof}
    The $\EXP$ machine can make at most exponentially many queries to the its oracle, each of which can be of size at most exponential in the size of the input. Therefore an $\EXP$ machine can simulate the action of a $\p^{\PP}$ machine (even on an exponential sized query) by making queries to a $\PP$ oracle while simulating the action of a $\p$ machine (which will only take time polynomial in size of the query).
\end{proof}

\begin{theorem}\label{thm:EXP_PP}
    For all $i\geq 1$, $\QSigma_i \subseteq \EXP^{\PP}$.
\end{theorem}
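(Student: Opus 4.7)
The plan is to chain the four observations stated immediately before the theorem with Toda's theorem. First, I would establish the mixed-state analogue of \Cref{obs:trivialQPH}, namely $\QSigma_i \subseteq \NEXP^{\NP^{i-1}}$. The argument is identical to the pure-state case: a $p(n)$-qubit mixed state $\rho_j$ can be written down as an exponential-size density matrix, and its entries approximated to inverse-exponential precision via exponentially many classical bits without spoiling the promise gap of the verifier. A $\NEXP$ base machine guesses the first existentially quantified proof in this discretized form, then simulates the alternating structure of $\QSigma_i$ on top by recursing into an $\NP$ tower of height $i-1$ (working in the inflated instance size). The verifier's action on the guessed proofs is an exponential-time matrix computation, which the $\NEXP$ base handles directly.

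Next, I would apply \Cref{obs:NPO} with oracle $O = \NP^{i-1}$ to lift the $\NEXP$ to $\EXP$, obtaining
\begin{equation}
  \NEXP^{\NP^{i-1}} \subseteq \EXP^{\NP^{\NP^{i-1}}} = \EXP^{\NP^{i}}.
\end{equation}
Then I would invoke Toda's theorem $\PH \subseteq \P^{\PP}$. Since $\NP^i \subseteq \PH$, we get $\EXP^{\NP^{i}} \subseteq \EXP^{\P^{\PP}}$. Here each oracle query issued by the $\EXP$ machine has length at most exponential in the original input, and Toda's simulation of a $\PH$ query is polynomial in the \emph{query} length, hence exponential in the original input size, which is still within the resource budget of an $\EXP$ machine. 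Finally, \Cref{obs:exppp} collapses $\EXP^{\P^{\PP}} \subseteq \EXP^{\PP}$, completing the chain
\begin{equation}
  \QSigma_i \subseteq \NEXP^{\NP^{i-1}} \subseteq \EXP^{\NP^i} \subseteq \EXP^{\P^{\PP}} \subseteq \EXP^{\PP}.
\end{equation}

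The main obstacle, and the only place that requires genuine care, is verifying that each oracle composition step is honest about query sizes. In particular, when an $\EXP$ machine feeds an exponential-length query into a $\PH$ oracle and then replaces that oracle by $\P^{\PP}$ via Toda, one must check that the $\P^{\PP}$ simulation runs in time polynomial in the query size (not in the original input), so that the overall $\EXP$ bound survives. This is standard but is the one place where relativization of Toda's theorem has to be applied carefully. Everything else is bookkeeping: the discretization error in step one can be made inverse-exponential (cheaply, since we are already in $\EXP$), which trivially preserves the $1/\poly$ completeness–soundness gap of $\QSigma_i$.
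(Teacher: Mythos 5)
Your proposal is correct and follows essentially the same route as the paper: the identical chain $\QSigma_i\subseteq\NEXP^{\NP^{i-1}}\subseteq\EXP^{\NP^i}\subseteq\EXP^{\p^{\PP}}\subseteq\EXP^{\PP}$ via the preceding observations and Toda's theorem. Your added care about exponential query lengths and the mixed-state discretization is exactly the content of \Cref{obs:trivialQPH} and \Cref{obs:exppp}, so nothing is missing.
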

\begin{proof}
    By \Cref{obs:trivialQPH}, \Cref{obs:NPO} and Toda's Theorem \cite{todaPPHardPolynomialTime1991},
    \begin{align}
        \QSigma_i\subseteq\NEXP^{\NP^{i-1}}\subseteq\EXP^{\NP^i} \subseteq \EXP^{\p^{\PP}}.
    \end{align}
    The claim now follows from Observation \ref{obs:exppp}.
\end{proof}
\noindent \Cref{thm:exptoda} now follows immediately from \Cref{thm:EXP_PP}.


\section*{Acknowledgements}
We thank Chirag Falor, Shu Ge, Anand Natarajan, Sabee Grewal, and Justin Yirka for the pleasure of productive discussions during the concurrent development of our works. This work was completed in part while AA was a student at Indian Institute of Technology Delhi and in part while visiting Paderborn University. SG was supported by
the DFG under grant numbers 450041824 and 432788384, the BMBF within the funding program “Quantum Technologies - from Basic Research to Market” via project PhoQuant (grant
number 13N16103), and the project “PhoQC” from the programme “Profilbildung 2020”, an initiative of the Ministry of Culture and Science of the State of Northrhine Westphalia. VK was supported by the Pankaj Gupta Fellowship at IIT Delhi. 

\printbibliography

\end{document}